\documentclass{article}
\usepackage{kr}

\usepackage{times}
\usepackage{soul}
\usepackage{url}
\usepackage[hidelinks]{hyperref}
\usepackage[utf8]{inputenc}
\usepackage[small]{caption}
\usepackage{graphicx}
\usepackage{amsmath}
\usepackage{amsthm}
\usepackage{booktabs}
\usepackage{algorithm}
\usepackage{algorithmic}
\usepackage{xspace}
\usepackage{subcaption}
\usepackage{multirow}
\usepackage[export]{adjustbox}

\usepackage{listings}
\usepackage{xcolor}
\usepackage{amsfonts, amssymb, stmaryrd}
\usepackage{thm-restate} 

\newcommand{\inds}{{\bf{C}}}
\newcommand{\vars}{{\bf{V}}}
\newcommand{\preds}{{\bf{P}}}

\def\dllitecore{DL-Lite$_{\mn{core}}$\xspace}

\newcommand{\ans}{\vec{a}}

\newcommand{\brave}{\textup{brave}\xspace}
\newcommand{\AR}{\textup{AR}\xspace}
\newcommand{\GR}{\textup{GR}\xspace}
\newcommand{\IAR}{\textup{IAR}\xspace}

\newcommand{\bravemodels}[1]{\models_{\brave}^{#1}}
\newcommand{\armodels}[1]{\models_{\AR}^{#1}}
\newcommand{\iarmodels}[1]{\models_{\IAR}^{#1}}
\newcommand{\grmodels}{\models_{\GR}}

\newcommand{\conflicts}[1]{\mi{Conf}(#1)}
\newcommand{\causes}[1]{\mi{Causes}(#1)}

\newcommand{\reps}[1]{\mi{SRep}(#1)}
\newcommand{\greps}[1]{\mi{GRep}(#1)}
\newcommand{\preps}[1]{\mi{PRep}(#1)}
\newcommand{\creps}[1]{\mi{CRep}(#1)}
\newcommand{\xreps}[1]{\mi{XRep}(#1)}

\def\reachstrong{\mn{R}_\mn{s}}
\def\reachweak{\mn{R}_\mn{w}}

\newcommand{\minconf}[1]{Min(#1)}

\def\ptime{\textsc{PTime}\xspace}
\def\np{\textsc{NP}\xspace}
\def\conp{co\textsc{NP}\xspace}
\def\piptwo{\ensuremath{\Pi^{p}_{2}}\xspace}
\def\sigmaptwo{\ensuremath{\Sigma^{p}_{2}}\xspace}

\newcommand{\mn}[1]{\ensuremath{\mathsf{#1}}}
\newcommand{\mi}[1]{\ensuremath{\mathit{#1}}}
\newcommand{\mt}[1]{\ensuremath{\mathtt{#1}}}

\newcommand{\Bmc}{\ensuremath{\mathcal{B}}}
\newcommand{\Cmc}{\ensuremath{\mathcal{C}}}
\newcommand{\Dmc}{\ensuremath{\mathcal{D}}}
\newcommand{\Emc}{\ensuremath{\mathcal{E}}}
\newcommand{\Fmc}{\ensuremath{\mathcal{F}}}
\newcommand{\Gmc}{\ensuremath{\mathcal{G}}}

\newcommand{\Lmc}{\ensuremath{\mathcal{L}}}
\newcommand{\Kmc}{\ensuremath{\mathcal{K}}}

\newcommand{\Rmc}{\ensuremath{\mathcal{R}}}

\newcommand{\Tmc}{\ensuremath{\mathcal{T}}}

\def\naf{\ensuremath{\raise.17ex\hbox{\ensuremath{\scriptstyle\mathtt{\sim}}}}\xspace}

\newcommand{\eg}{e.g.,~}

\newcommand{\ie}{i.e.,~}
\newcommand{\wrt}{w.r.t.~}
\newcommand{\cf}{cf.~}

\newcommand{\resp}{resp.~}

\newcommand{\impl}{\ {:\!\text{-}}\  }
\newcommand{\ASPQ}{ASP(Q)\xspace}
\newcommand{\fix}[2]{\mathit{fix_{#1}(#2)}}

\newtheorem{example}{Example}
\newtheorem{theorem}{Theorem}

\newtheorem{claim}{Claim}
\newtheorem{proposition}{Proposition}

\newtheorem{definition}{Definition}

\title{Using ASP(Q) to Handle Inconsistent Prioritized Data}

\author{%
Meghyn Bienvenu$^1$\and
Camille Bourgaux$^2$\and
Robin Jean $^1$\and
Giuseppe Mazzotta$^3$ \\
\affiliations
$^1$Univ. Bordeaux, CNRS, Bordeaux INP, LaBRI, UMR 5800, Talence, France\\
$^2$DI ENS, ENS, CNRS, PSL University \& Inria, Paris, France\\
$^3$University of Calabria, Rende, Italy\\
\emails
\{meghyn.bienvenu, robin.jean\}@u-bordeaux.fr,
camille.bourgaux@ens.fr,
giuseppe.mazzotta@unical.it
}

\begin{document}

\maketitle

\begin{abstract}
We explore the use of answer set programming (ASP) and its extension with quantifiers, ASP(Q), for inconsistency-tolerant querying of prioritized data, where a priority relation between conflicting facts is exploited to define three notions of optimal repairs (Pareto-, globally- and completion-optimal). We consider the variants of three well-known semantics (AR, brave and IAR) that use these optimal repairs, and for which query answering is in the first or second level of the polynomial hierarchy for a large class of logical theories. Notably, this paper presents the first implementation of globally-optimal repair-based semantics, as well as the first implementation of the grounded semantics, which is a tractable under-approximation of all these optimal repair-based semantics. Our experimental evaluation sheds light on the feasibility of computing answers under globally-optimal repair semantics and the impact of adopting different semantics, approximations, and encodings. 
\end{abstract}

\section{Introduction}

Repair-based semantics are a prominent means of obtaining meaningful answers to queries posed over some data which is inconsistent \wrt some logical theory, both in the relational database and ontology-mediated query answering setting (\cf \cite{DBLP:conf/pods/Bertossi19,DBLP:journals/ki/Bienvenu20} for brief overviews). In this context, a repair is a subset-maximal subset of the data consistent with the logical theory. The most well-known repair-based semantics, called AR in the KR community, requires that the query holds in every repair, while the less cautious brave semantics requires that it holds in some repair, and the more cautious IAR semantics that it holds in the intersection of all repairs \cite{ArenasBC99,LemboLRRS10,Bienvenu_TractableApproximation_long}. 
Several notions of preferred repairs have been proposed 
to take into account some preference information and consider only a subset of all the possible repairs to evaluate the queries (\cf \cite{DBLP:conf/rweb/Bourgaux25} for a survey). In particular, \citeauthor{DBLP:journals/amai/StaworkoCM12}~\shortcite{DBLP:journals/amai/StaworkoCM12} introduced three kinds of optimal repairs based on a priority relation between conflicting facts, which have attracted a lot of interest in the last decade with extensive complexity analyses \cite{DBLP:conf/icdt/KimelfeldLP17,DBLP:journals/tcs/KimelfeldLP20,DBLP:conf/kr/BienvenuB20,DBLP:conf/kr/BienvenuB23}, two implementations \cite{DBLP:conf/kr/BienvenuB22,DBLP:conf/kr/BienvenuBIJ2025}, and a  
framework for specifying and computing priority relations \cite{DBLP:conf/kr/BienvenuBIJ2025}. 

However, only two of these three kinds of  
repairs have actually been implemented in the 
existing systems,  
namely 
the Pareto- and completion-optimal repairs. 
This is due to 
the higher complexity of reasoning with the third kind of 
repairs, 
called globally-optimal. Indeed, for a large class of database constraints and ontology languages, the data complexity of query answering under the variants of AR, IAR and brave that use Pareto- or completion-optimal repairs is in the first level of the polynomial hierarchy, while it is in the second level for the globally-optimal repair-based variants. 

Most implementations of  
repair-based semantics encode query entailment under some intractable semantics into SAT, binary 
integer programming, or answer set programming (ASP) to take advantage of the efficient solvers that exist for these problems (see, \eg \cite{DBLP:journals/tkde/GrecoGZ03,DBLP:journals/tods/EiterFGL08,DBLP:journals/tplp/MannaRT13,DBLP:journals/pvldb/KolaitisPT13,DBLP:conf/aaai/BienvenuBG14,DBLP:conf/sat/DixitK19} for relevant examples and \cite[Table 8]{DBLP:conf/rweb/Bourgaux25} for an overview).  
The two implementations of optimal repair-based semantics follow this path: the one by \citeauthor{DBLP:conf/kr/BienvenuB22}~\shortcite{DBLP:conf/kr/BienvenuB22} is based on the use of tractable approximations and SAT solvers, and that by \citeauthor{DBLP:conf/kr/BienvenuBIJ2025}~\shortcite{DBLP:conf/kr/BienvenuBIJ2025} uses ASP. A notable difference between the two systems is that the former focuses on the case where the conflicts (\ie the minimal sets of facts inconsistent \wrt the logical theory) are of size at most two while the latter handles conflicts of arbitrary size.

ASP (with disjunctive rules) can be used to model problems up to the second level of the polynomial hierarchy, meaning that semantics based on globally-optimal repairs could theoretically be implemented in ASP. However, this requires advanced techniques, such as saturation~\cite{DBLP:journals/amai/EiterG95}, which complicate modeling \textit{``beyond the capabilities of the common ASP laymen''}~\cite{DBLP:journals/tplp/GebserKS11}.  
Moreover, efficiency-wise it has been shown that in many cases of practical interest alternative formalisms are preferred to disjunctive ASP~\cite{DBLP:conf/lpnmr/AmendolaCRT22}. 
Among alternative formalisms, answer set programming with quantifiers (ASP(Q)) uses quantifiers over answer sets of ASP programs and provides a natural and compact way of modeling problems in the entire polynomial hierarchy \cite{DBLP:journals/tplp/AmendolaRT19}. 
ASP(Q) has already been used to tackle  
problems related to planning \cite{DBLP:journals/tplp/FandinnoLRSS21,DBLP:conf/padl/FaberMC22} and abstract argumentation \cite{DBLP:conf/iclp/000124}.

We investigate 
the use of ASP(Q) and ASP to implement optimal repair-based semantics. 
Compared to the previous ASP-based implementation by \citeauthor{DBLP:conf/kr/BienvenuBIJ2025}~\shortcite{DBLP:conf/kr/BienvenuBIJ2025}, we handle globally-optimal repairs thanks to 
ASP(Q) and improve the system performance by using tractable approximations in a similar way as \citeauthor{DBLP:conf/kr/BienvenuB22}~\shortcite{DBLP:conf/kr/BienvenuB22}. Moreover, we explore the use of another tractable approximation of the optimal repair-based semantics, called the grounded semantics \cite{DBLP:conf/kr/BienvenuB20}, which has never been implemented so far. Our experimental evaluation shows that using globally-optimal repairs is considerably more challenging
than for Pareto- or completion-optimal repairs, while at the same time 
demonstrating the utility of different optimizations, in particular, 
the surprising effectiveness of the grounded semantics. 

Proofs and additional details on the experiments are provided 
in the appendix. 
All materials to reproduce the experiments are available from \url{https://github.com/rjean007/InconsistentPrioritizedData-ASPQ}.

\section{Preliminaries}
In this section, we introduce the relevant background on inconsistency-tolerant semantics and ASP(Q).

\subsection{Optimal Repair-Based Semantics}\label{prelims:reps}
We recall the framework of inconsistency-tolerant querying of prioritized knowledge bases. 
All notions will be illustrated in Example~\ref{ex:prelimPrioKB}. 
Let $\preds$, $\inds$ and $\vars$ be three disjoint sets of predicates, constants and variables, respectively. We assume that each predicate has some arity $n\geq 1$, and let $\preds_n$ be the set of the $n$-ary predicates in $\preds$. 

\subsubsection{Knowledge Bases, Conflicts, Repairs}
A \emph{knowledge base (KB)} $\Kmc=(\Dmc,\Tmc)$ consists of a \emph{dataset} $\Dmc$ and a \emph{logical theory}~$\Tmc$: $\mathcal{D}$ is a finite set of \emph{facts} of the form $P(c_1, \ldots, c_n)$ with $P\in\preds_n$, $c_i\in\inds$ for $1 \leq i \leq n$, and $\Tmc$ is a finite set of first-order logic (FOL) sentences built from $\preds$, $\inds$ and~$\vars$.  
Typically, $\Tmc$ will be either an \emph{ontology} (\eg formulated in some description logic) or a set of \emph{database constraints}. 
In particular, we consider description logics of the \emph{DL-Lite family} \cite{calvaneseetal:dllite} and \emph{denial constraints} of the form $
\alpha_1 \wedge \ldots \wedge \alpha_n\rightarrow\bot$, where each $\alpha_i$ is a relational or inequality atom, 
which include 
\emph{functional dependencies (FDs)}. 
 A KB $\Kmc=(\Dmc,\Tmc)$ is \emph{consistent}, and $\Dmc$ is called \emph{$\Tmc$-consistent}, if $\Dmc \cup \Tmc$ has some model. Otherwise, $\Kmc$ is \emph{inconsistent}, denoted $\Kmc \models \bot$. 
A \emph{conflict} of $\Kmc=(\Dmc,\Tmc)$ is an inclusion-minimal subset $\Cmc \subseteq \Dmc$ such that $(\Cmc,\Tmc) \models \bot$. 
The set of conflicts of $\Kmc$ is denoted $\conflicts{\Kmc}$. 
A \emph{(subset) repair} of $\Kmc$ is an inclusion-maximal subset $\Rmc \subseteq \Dmc$ such that $(\Rmc,\Tmc) \not \models \bot$.  The set of repairs of $\Kmc$ is denoted $\reps{\Kmc}$.

\subsubsection{Prioritized KBs, Optimal Repairs} 
A \emph{priority relation} $\succ$ for a KB $\Kmc=(\Dmc,\Tmc)$ is an acyclic
binary relation over the facts of $\Dmc$ such that $\alpha\succ\beta$ implies that $\{\alpha,\beta\}\subseteq \Cmc$ for some $\Cmc\in\conflicts{\Kmc}$. 
It is \emph{total} if for every pair $\alpha\neq\beta$ such that $\{\alpha,\beta\}\subseteq \Cmc$ for some $\Cmc\in\conflicts{\Kmc}$, either $\alpha\succ\beta$ or $\beta\succ\alpha$. 
A \emph{completion} of $\succ$ is a total priority relation $\succ'\ \supseteq \ \,\succ$. 
A \emph{prioritized KB} $\Kmc_\succ$ is a KB $\Kmc$ with a priority relation $\succ$ for~$\Kmc$. 
Three kinds of optimal repairs are defined:

\begin{definition} 
Let $\Kmc_\succ$ be a prioritized KB with $\Kmc=(\Dmc,\Tmc)$ and $\Rmc \in \reps{\Kmc}$. 
\begin{itemize}
    \item A \emph{Pareto improvement} of $\Rmc$ is a $\Tmc$-consistent $\Bmc\subseteq\Dmc$ such that there is $ \beta\in\Bmc\setminus\Rmc$ with $\beta\succ\alpha$ for every $\alpha\in\Rmc\setminus\Bmc$. 
    \item A \emph{global improvement} of $\Rmc$ is a $\Tmc$-consistent $\Bmc\subseteq\Dmc$ such that $\Bmc \neq \Rmc$ and for every $\alpha \in \Rmc \setminus \Bmc$ there exists $\beta \in \Bmc \setminus \Rmc$ such that $\beta \succ \alpha$.
\end{itemize}
The repair $\Rmc$ is:
\begin{itemize}
    \item \emph{Pareto-optimal} if there is no Pareto improvement of $\Rmc$;
    \item \emph{globally-optimal} if there is no global improvement of $\Rmc$;
    \item \emph{completion-optimal} if $\Rmc$ is a globally-optimal repair of $\Kmc_{\succ'}$, for some completion $\succ'$ of $\succ$. 
\end{itemize}
We denote by $\preps{\Kmc_\succ}$, $\greps{\Kmc_\succ}$ and $\creps{\Kmc_\succ}$ the sets of  
Pareto-, globally- and completion-optimal repairs.
\end{definition}
\noindent It is known that $\creps{\Kmc_\succ} \subseteq \greps{\Kmc_\succ}
\subseteq\preps{\Kmc_\succ}$.

 \subsubsection{Queries, Repair-Based Semantics} 
A \emph{conjunctive query} (CQ) is a conjunction of 
atoms $P(t_1, \ldots, t_n)$ ($P\in\preds_n$, $t_i\in\inds\cup\vars$),
where some variables may be existentially quantified. 
Given a query $q(\vec{x})$, with free variables~$\vec{x}$, and a tuple of constants $\vec{a}$ such that $|\vec{a}|=|\vec{x}|$, $q(\vec{a})$ denotes the first-order sentence obtained by replacing each variable in~$\vec{x}$ by the corresponding constant in~$\vec{a}$. 
A \emph{(certain) answer} to $q(\vec{x})$ over $\Kmc$ is a tuple~$\vec{a}$ of constants such that $q(\vec{a})$ holds in every model of $\Kmc$, denoted $\Kmc \models q(\vec{a})$. 
When the KB is inconsistent, we consider the following alternative semantics, parameterized by the considered type of repair.
\begin{definition}
Fix $X\in \{S,P,G,C\}$ and consider a prioritized KB $\Kmc_\succ$ with $\Kmc=(\Dmc,\Tmc)$, query $q(\vec{x})$, and tuple of constants~$\ans$. 
Then $\ans$ is an answer to $q(\vec{x})$ over $\Kmc_\succ$ 
\begin{itemize}
    \item under \emph{X-brave semantics}, denoted $\Kmc_\succ \bravemodels{X} q(\ans)$, if $(\Rmc,\Tmc) \models q(\ans)$ for some $\Rmc \in \xreps{\Kmc_\succ}$;
    \item under \emph{X-AR semantics}, denoted $\Kmc_\succ \armodels{X} q(\ans)$, if $(\Rmc,\Tmc) \models q(\ans)$ for every $\Rmc \in \xreps{\Kmc_\succ}$;
    \item under \emph{X-IAR semantics}, denoted $\Kmc_\succ \iarmodels{X} q(\ans)$, if $(\Bmc,\Tmc) \models q(\ans)$ where $\Bmc=\bigcap_{\Rmc \in \xreps{\Kmc_\succ}} \Rmc$.
\end{itemize}
\end{definition}
\noindent It is known that $\Kmc_\succ \iarmodels{X} q \Rightarrow \Kmc_\succ \armodels{X} q  \Rightarrow \Kmc_\succ \bravemodels{X} q$. 
A \emph{cause} for $q(\vec{a})$ w.r.t.\ $\Kmc=(\Dmc,\Tmc)$ is an inclusion-minimal $\Tmc$-consistent subset $\Cmc \subseteq \Dmc$ such that $(\Cmc, \Tmc) \models q(\vec{a})$. 
The set of causes for $q(\vec{a})$ w.r.t.\ $\Kmc$ is denoted by $\causes{q(\ans),\Kmc} $. 
We will use the following characterizations of the semantics:
\begin{itemize}
    \item $\Kmc_\succ \bravemodels{X} q(\ans)$ iff there exist $\Rmc \in \xreps{\Kmc_\succ}$ and $\Cmc\in\causes{q(\ans),\Kmc}$ such that $\Cmc\subseteq\Rmc$;
    \item $\Kmc_\succ \not\armodels{X} q(\ans)$ iff there exists $\Rmc \in \xreps{\Kmc_\succ}$ such that for every $\Cmc\in\causes{q(\ans),\Kmc}$, $\Cmc\not\subseteq\Rmc$;
    \item $\Kmc_\succ \iarmodels{X} q(\ans)$ iff there exists $\Cmc\in\causes{q(\ans),\Kmc}$ such that $\Cmc\subseteq\bigcap_{\Rmc \in \xreps{\Kmc_\succ}} \Rmc$. 
\end{itemize}  
Theorems \ref{ub-thm} 
and \ref{lb-thm} summarize known 
results on the \emph{data complexity} (where the sizes of the logical theory $\Tmc$ and query $q(\vec{x})$ are assumed to be fixed) of query answering under these semantics (\cf survey \cite[Table 6]{DBLP:conf/rweb/Bourgaux25}).

\begin{theorem}\label{ub-thm} 
Let $\Lmc$ be an FOL fragment for which KB consistency and query entailment are in \ptime. 
Query entailment for $\Lmc$ KBs is in $\sigmaptwo$ under G-brave semantics, in $\piptwo$ under G-AR and G-IAR semantics, and for $X\in \{S,P,C\}$, it is in $\np$ under X-brave semantics, and in $\conp$ under X-AR and X-IAR semantics.  
\end{theorem}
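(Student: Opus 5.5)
We establish each upper bound by a guess-and-check procedure in data complexity, with $\Tmc$ and $q$ fixed. First we isolate the basic tractable checks, which are all in \ptime by the assumption on $\Lmc$:
\begin{itemize}
\item $\Tmc$-consistency of a subset $\Bmc\subseteq\Dmc$.
\item Whether $\Rmc$ is a repair: $\Rmc$ is consistent and $\Rmc\cup\{\alpha\}$ is inconsistent for every $\alpha\in\Dmc\setminus\Rmc$.
\item Whether $(\Rmc,\Tmc)\models q(\ans)$.
\end{itemize}
The last check lets us avoid working with causes explicitly. $\Kmc_\succ\bravemodels{X}q(\ans)$ iff some $\Rmc\in\xreps{\Kmc_\succ}$ entails $q(\ans)$. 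Likewise, $\Kmc_\succ\not\armodels{X}q(\ans)$ iff some $\Rmc\in\xreps{\Kmc_\succ}$ does not entail it. Finally, $\Kmc_\succ\not\iarmodels{X}q(\ans)$ iff $(\Bmc,\Tmc)\not\models q(\ans)$ for $\Bmc=\bigcap\xreps{\Kmc_\succ}$.

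For IAR we use monotonicity of entailment. Non-entailment from $\Bmc$ holds iff there is a fact set $F\subseteq\Dmc$, with $\Dmc\setminus F$ not entailing $q(\ans)$, such that each $\alpha\in F$ is missing from some optimal repair $\Rmc_\alpha$. So $\Bmc\subseteq\Dmc\setminus F$, and one direction follows. For the converse, take $F=\Dmc\setminus\Bmc$.

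The guess is $F$ together with one repair $\Rmc_\alpha$ per $\alpha\in F$. Its size is polynomial, since $|F|\le|\Dmc|$.

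Consequently, every semantics reduces to guessing polynomially many subsets of $\Dmc$ and checking that each is an $X$-optimal repair. The remaining checks are \ptime.
\begin{itemize}
\item Brave: this is an existential guess.
\item Complements of AR and IAR: these are also existential guesses, so AR and IAR are in the complementary class.
\end{itemize}

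The key step is therefore the complexity of \emph{checking optimality of a given repair}. We claim this is in \ptime for $X\in\{S,P,C\}$ and in \conp for $X=G$.
\begin{itemize}
\item For $S$ there is nothing to check.
\item For $G$, non-optimality is witnessed by a global improvement $\Bmc$. We guess $\Bmc$ and verify in \ptime that it is consistent, that $\Bmc\neq\Rmc$, and that each $\alpha\in\Rmc\setminus\Bmc$ has some $\beta\in\Bmc\setminus\Rmc$ with $\beta\succ\alpha$. So checking G-optimality is in \conp. Hence G-brave, and the complements of G-AR and G-IAR, are in $\np^{\np}=\sigmaptwo$, which places G-AR and G-IAR in $\piptwo$.
\item For $P$, a Pareto improvement can be assumed to have the form $\Bmc=(\Rmc\setminus\{\alpha:\beta\succ\alpha\})\cup\{\beta\}$ for some $\beta\in\Dmc\setminus\Rmc$. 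Shrinking a consistent set keeps it consistent, and the $\beta$ and removal condition are preserved. So a repair is Pareto-optimal iff no $\beta\notin\Rmc$ makes $(\Rmc\setminus\{\alpha\mid\beta\succ\alpha\})\cup\{\beta\}$ consistent, which is a polynomial test.
\item For $C$, we invoke the known characterization (Staworko et al.) that $\Rmc$ is completion-optimal iff it can be obtained by the greedy procedure. Equivalently, we can guess a completion $\succ'$ together with $\Rmc$ and check in \ptime that $\Rmc$ is the greedy output for the total order. Either way, the brave and complement-of-AR/IAR problems stay in \np.
\end{itemize}

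The main obstacle is the C case. Directly checking global optimality under the guessed completion would cost \conp. We need instead the fact that, for a total priority, global optimality coincides with Pareto optimality, or with the greedy characterization. Either one makes the check polynomial via the P case above. We will cite this known result rather than reprove it.

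Conflicts of arbitrary size cause no difficulty. Every check above only tests consistency of explicit subsets, so we never enumerate conflicts.
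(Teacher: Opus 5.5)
The paper does not prove this theorem. It states it as a summary of known complexity results, citing the survey table that goes back to Staworko et al.\ and Bienvenu and Bourgaux, so there is no in-paper argument to compare against. Your guess-and-check proof is correct and follows the standard reasoning behind those results:
\begin{itemize}
\item repair checking is in \ptime\ for $S$;
\item it is in \ptime\ for $P$, and your reduction of any Pareto improvement to the canonical one $(\Rmc\setminus\{\alpha\mid\beta\succ\alpha\})\cup\{\beta\}$ is right;
\item it is in \conp\ for $G$;
\item it is in \np\ for $C$ (in fact \ptime) via completions;
\item after that, brave and the complements of AR and IAR are existential guesses.
\end{itemize}
Two features of your write-up are worth keeping. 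First, your IAR argument guesses the set $F$ of excluded facts together with one witnessing optimal repair per fact. This avoids enumerating causes, which need not be polynomially many for an arbitrary fragment $\Lmc$. Second, in the $C$ case you correctly isolate the one external fact you need: for a total priority relation, the Pareto-, globally- and completion-optimal repairs coincide with the unique greedy repair. The paper itself recalls this fact at the start of its appendix proofs.

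One point to tighten is your closing claim that no check ever needs the conflicts. In the \emph{guess a completion $\succ'$} variant, verifying that the guessed relation really is a completion requires knowing which pairs co-occur in some conflict. The completion must be total on exactly those pairs. With unbounded conflicts, co-occurrence is in general only an \np\ property, and checking totality is then a \conp\ condition, which would push you out of \np.

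The fix is easy, and no such verification is needed:
\begin{itemize}
\item Guess a linear order $L$ on $\Dmc$ containing $\succ$; one exists since $\succ$ is acyclic.
\item Run the greedy procedure along $L$, then compare the output with $\Rmc$.
\end{itemize}
This works for three reasons. The restriction of $L$ to conflicting pairs is always a completion of $\succ$. Every completion arises this way, by taking any linear extension of it. And the greedy run only performs consistency tests on explicit subsets. With that change your argument is complete for arbitrary conflict sizes.
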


\begin{theorem}\label{lb-thm}
Let $\Lmc$ be any FOL fragment that extends \dllitecore\ or FDs. 
Query entailment for $\Lmc$ KBs is $\sigmaptwo$-hard under G-brave semantics, $\piptwo$-hard under G-AR and G-IAR semantics, $\np$-hard under X-brave semantics for $X\in \{P,C\}$, and $\conp$-hard under X-AR semantics  for $X\in \{S,P,C\}$ and 
under X-IAR semantics for $X\in \{P,C\}$.
\end{theorem}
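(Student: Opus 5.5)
The plan is to prove Theorem~\ref{lb-thm} by reductions, handling the four families separately. Lower bounds for richer fragments are inherited from \dllitecore\ and FDs, since any \dllitecore\ or FD KB is also an $\Lmc$ KB. For each semantics we therefore give one reduction with a fixed theory $\Tmc$ and a fixed query $q$, in both languages. The theory is a few FDs in the relational case and a few negative inclusions in \dllitecore.

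\textbf{Globally-optimal cases.} We reduce from $\forall\exists$-QBF, i.e.\ $\piptwo$-complete formulas $\forall \vec{x}\exists \vec{y}\,\varphi$ with $\varphi$ in 3CNF, and encode the formula as data. Each universal variable $x$ gets a pair of conflicting facts $T(x),F(x)$ with no priority between them, so that repairs choose an assignment to $\vec{x}$. The existential variables and clauses are encoded by further facts whose conflicts and priorities are arranged so that a candidate global improvement $\Bmc$ of a repair $\Rmc$ corresponds to guessing an assignment to $\vec{y}$ satisfying $\varphi$ under $\Rmc$'s $\vec{x}$-assignment. We add a distinguished fact $g$ that lies in a repair $\Rmc$ precisely when no such assignment exists, so that $g$ can be replaced by a preferred set exactly when a satisfying assignment exists.

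Then every globally-optimal repair contains $g$ iff the QBF is true. Taking $q$ to be the atomic query satisfied by $g$ gives $\piptwo$-hardness for G-AR, and the same holds for G-IAR because membership of $g$ in the intersection coincides with membership in every repair. Complementing the construction, with a fact marking ``the $\vec{x}$-assignment admits no witness'', gives $\sigmaptwo$-hardness for G-brave from $\exists\forall$-QBF.

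The hardest step is keeping all conflicts binary, so that FDs or \dllitecore\ suffice, while ensuring that checking a global improvement really amounts to an existential guess over $\vec{y}$ in which every removed fact is dominated. Clause satisfaction is $k$-ary and must be spread through a chain of binary conflicts with carefully oriented priorities. I would first try to adapt the existing $\Sigma_2^p$ reduction for G-repair checking (coNP-hard repair checking) of Staworko et al./Kimelfeld et al. and lift it to query answering by adding $g$.

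\textbf{Pareto/completion and subset cases.} For X-AR with $X\in\{S,P,C\}$, use the classical coNP-hardness of AR under \dllitecore\ or FDs by reduction from UNSAT. With the empty priority relation we have $\preps{\Kmc_\succ}=\creps{\Kmc_\succ}=\reps{\Kmc}$, so all three cases follow at once. Hardness for X-brave with $X\in\{P,C\}$ comes from SAT: with empty priority, brave is tractable for CQs under these fragments, so priorities are essential. The variable facts $T(x),F(x)$ are left unprioritized, and each clause gets a fact that is dominated by conflicting literal facts exactly when the clause is falsified. A query fact is then kept in some optimal repair iff $\varphi$ is satisfiable.

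Negating the same construction yields coNP-hardness of P-IAR and C-IAR. A fact belongs to all optimal repairs iff no assignment lets it be defeated, which reduces UNSAT to P-IAR and C-IAR.
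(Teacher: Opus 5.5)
Your reduction for X-AR with $X\in\{S,P,C\}$ is fine. With $\succ=\emptyset$ every repair is Pareto-optimal, and each repair is the optimal repair of the completion that ranks its own facts first. However, the \sigmaptwo/\piptwo cases, which are the substance of the statement, are not proved. You only say what the $\vec{y}$/clause gadget should achieve and defer to adapting an existing construction. That construction is not what you describe: G-repair checking is only \conp-complete, and the second level comes from the extra universal layer, which you leave unspecified. You also do not control the repairs that do not encode an $\vec{x}$-assignment.

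Moreover, the role you give $g$ is inverted. If $g$ lies in a repair exactly when its $\vec{x}$-assignment has no witness, and a global improvement removes $g$ when a witness exists, then the globally-optimal repairs containing $g$ are those whose assignment has \emph{no} witness. So ``$g$ is in every globally-optimal repair'' does not express $\forall\vec{x}\exists\vec{y}\varphi$; the replaceable fact and the query fact must be different. One way to do this with binary conflicts:
unprioritized conflicting pairs $T(v),F(v)$ for every variable;
a fact $c_j$ per clause, conflicting with each literal fact that satisfies $C_j$, with that literal preferred to $c_j$;
a fact $d$ conflicting with every $\vec{y}$-literal fact;
a fact $e$ conflicting with $d$ and with every $c_j$, with $e\succ d$;
and the query fact $g$ conflicting only with $d$, with $d\succ g$.
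Every Pareto-optimal repair containing $d$ then has the form $\Rmc_\nu=\nu\cup\{d\}\cup\{c_j\mid\nu\not\models C_j\}$ for a total $\vec{x}$-assignment $\nu$. A global improvement of $\Rmc_\nu$ cannot remove $\nu$, since nothing dominates it. If it keeps $d$ it equals $\Rmc_\nu$, so it must drop $d$ and hence add $e$. It must then drop every remaining $c_j$, each of which has to be dominated by an added $\vec{y}$-literal; this also rules out single-literal moves. Hence $\Rmc_\nu$ is globally optimal iff $\nu$ has no witness, and every repair without $d$ contains $g$. This gives \piptwo-hardness of G-AR and G-IAR with query $g$. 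With $\varphi=\neg\psi$ and query $d$, it gives \sigmaptwo-hardness of G-brave for $\exists\vec{x}\forall\vec{y}\psi$.

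The first-level sketch has a similar polarity problem. With binary conflicts a fact is excluded as soon as one conflicting fact is kept, so a clause fact cannot be ``dominated exactly when the clause is falsified''. It must conflict with the literals that \emph{satisfy} the clause (literal $\succ c_j$), and the query fact $q$ must conflict with every $c_j$, with $c_j\succ q$. Then a Pareto-optimal repair containing $q$ keeps a satisfying literal for every clause, since otherwise swapping $q$ for $c_j$ is a Pareto improvement. Conversely, for a satisfying $\nu$, the repair $\nu\cup\{q\}$ is the optimal repair of the completion that orients each pair $T(v),F(v)$ by $\nu$. This gives P- and C-brave.

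``Negating'' this construction only says that $q$ is in \emph{no} optimal repair, which is not an IAR condition. You need an extra fact $h$ conflicting only with $q$, with $q\succ h$; it lies in every P-/C-optimal repair iff $\varphi$ is unsatisfiable.

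Finally, you assert that all gadgets are realizable with a fixed TBox or a fixed set of FDs. Neither language yields arbitrary binary conflict graphs; for example, in \dllitecore two assertions conflict only if they share an individual. So the encoding still has to be given and rechecked, for instance via occurrence copies. The paper itself proves none of this: the theorem is presented as a summary of known results with a citation to the survey, and citing those results is the intended justification.
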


\subsubsection{Attack Relation, Grounded Semantics} 
The grounded semantics for prioritized KBs comes from the area of abstract argumentation. For prioritized KB $\Kmc_\succ$ with $\Kmc=(\Dmc,\Tmc)$, the \emph{attack relation} $\rightsquigarrow \subseteq (2^{\Dmc} \setminus \{\emptyset\}) \times \Dmc$ is defined by: 
$$\rightsquigarrow=\{(\Cmc\setminus\{\alpha\},\alpha) \mid \Cmc\in\conflicts{\Kmc}, \alpha\in\Cmc, \forall \beta \in \Cmc, \alpha \not \succ \beta\}.$$
We write $\Bmc\rightsquigarrow\alpha$ for $(\Bmc,\alpha)\in\rightsquigarrow$. The characteristic function $\Gamma:2^{\Dmc}\mapsto 2^{\Dmc}$ is defined by $$\Gamma(\Bmc) = \{\alpha \mid \Emc\rightsquigarrow\alpha\Rightarrow \exists \Fmc\subseteq\Bmc,\beta\in\Emc\text{ s.t. }\Fmc\rightsquigarrow \beta \}.$$ The \emph{grounded repair} of $\Kmc_\succ$ is the inclusion-minimal $\Gmc\subseteq\Dmc$ such that $\Gmc$ is $\Tmc$-consistent and $\Gmc=\Gamma(\Gmc)$, or equivalently, the least fixpoint of $\Gamma$. 
A tuple $\ans$ is an answer to $q(\vec{x})$ over $\Kmc_\succ$ under \emph{grounded semantics}, denoted $\Kmc_\succ\grmodels q(\ans)$, if $(\Gmc,\Tmc)\models q(\ans)$. 
It is known that $\Kmc_\succ\grmodels q(\ans)$ implies $\Kmc_\succ\iarmodels{P} q(\ans)$, so answers that hold under the grounded semantics hold under X-IAR for $X\in\{P,G,C\}$. 
Moreover, if $\Lmc$ is an FOL fragment for which the conflicts size is bounded independently from the data and KB consistency and query entailment are in \ptime, then grounded query entailment for $\Lmc$ KBs is in \ptime\ \cite{DBLP:conf/kr/BienvenuB20}.

\begin{example}\label{ex:prelimPrioKB}
Consider the KB $\Kmc=(\Dmc,\Tmc)$ with:
\begin{align*}
\Dmc=\{&A(a), B(a), C(a), D(a), A(b), B(b), C(b)\}\\
\Tmc=\{&A(x)\wedge B(x)\rightarrow \bot,\ 
B(x)\wedge C(x)\rightarrow \bot, \\&
C(x)\wedge D(x)\rightarrow \bot,\
D(x)\wedge A(x)\rightarrow \bot\}
\end{align*}
The conflicts and repairs of $\Kmc$ are as follows:
\begin{align*}
\conflicts{\Kmc}\!=\!\{&\{A(a),B(a)\}, \{B(a),C(a)\}, \{C(a),D(a)\}, \\&\{D(a),A(a)\}, \{A(b),B(b)\}, \{B(b),C(b)\}\}\\
\reps{\Kmc}\!=\!\{& \{A(a), C(a), B(b) \},\{B(a), D(a), B(b) \}\\
\{A(a), C&(a),A(b), C(b) \},\{B(a), D(a), A(b), C(b) \} \}
\end{align*}
Let us define a priority relation $\succ$ for $\Kmc$ by $A(a)\succ B(a)$, $C(a)\succ D(a)$, $A(b)\succ B(b)$, and $B(b)\succ C(b)$. 
\begin{align*}
&\creps{\Kmc_\succ}=\greps{\Kmc_\succ}=\{\{A(a), C(a),A(b), C(b) \}\}\\
&\preps{\Kmc_\succ}=\greps{\Kmc_\succ}\cup\{\{B(a), D(a), A(b), C(b) \}\}
\end{align*}
We thus obtain, \eg $\Kmc_\succ\armodels{G}A(a)$ while $\Kmc_\succ\not\armodels{P}A(a)$, $\Kmc_\succ\bravemodels{P}B(a)$ while $\Kmc_\succ\not\bravemodels{G}B(a)$, and $\Kmc_\succ\iarmodels{X}C(b)$ for $X\in\{P,G,C\}$. 
The grounded repair of $\Kmc_\succ$ is $\Gmc=\{A(b), C(b) \}$. Indeed, the attack relation $\rightsquigarrow$ is as written below, so $\Gamma(\emptyset)=\{A(b)\}$, since $A(b)$ is the only fact of $\Dmc$ that is not attacked, $\Gamma(\{A(b)\})=\{A(b),C(b)\}$, and $\Gamma(\{A(b),C(b)\})=\{A(b),C(b)\}$. 
\begin{align*}
&\{A(a)\}\rightsquigarrow B(a) & \{B(a)\}\rightsquigarrow C(a) && \{C(a)\}\rightsquigarrow B(a) \\  &\{C(a)\}\rightsquigarrow D(a) &
\{D(a)\}\rightsquigarrow A(a) && \{A(a)\}\rightsquigarrow D(a) \\  &\{A(b)\}\rightsquigarrow B(b) & \{B(b)\}\rightsquigarrow C(b)
\end{align*}
Note that $\Gmc$ is indeed included in the intersection of the Pareto-optimal repairs (which is actually equal to $\Gmc$ here).
\end{example}

\subsection{Answer Set Programming with Quantifiers}
We now recall basic notions about Answer Set Programming (ASP) and ASP with Quantifiers (ASP(Q)).
\subsubsection{ASP}
In ASP, \emph{atoms} take the form $\mt{p(t_1,\ldots,t_n)}$ where $\mt{p}$ is a predicate of arity $n\geq 0$ and each \emph{term} $\mt{t_i}$ is either a constant (integer or alphanumeric string starting with lowercase letter) or a variable (alphanumeric string starting with uppercase letter). 
A \emph{literal} is an atom $a$ or its negation $\mt{not}~a$, where $\mt{not}$ represents \emph{negation as failure}. 
It is \emph{negative} if it is of the form $\mt{not}~a$, \emph{positive} otherwise. 
An \emph{ASP program}\footnote{We consider core ASP, without disjunction in rule heads.} is a finite set of \emph{rules} of the form
$h\impl l_1,\ldots,l_n$ (with $n\geq 0$), whose \emph{head} $h$ is an atom, and whose \emph{body} $l_1,\ldots,l_n$ is interpreted as the conjunction of the literals $l_1,\ldots,l_n$. 
Every rule must be \emph{safe}, \ie each variable appearing in it appears in some positive body literal. 
A \emph{constraint} is a rule with an empty head ($\impl l_1,\ldots,l_n$), and a \emph{fact} is a rule with an empty body ($h\impl$). 
We also allow \emph{choice rules} of the forms
$\{h\} \impl l_1,\ldots,l_n$ and $1\{h_1;h_2\}1 \impl l_1,\ldots,l_n$: the former is used to choose to either add or omit $h$, and the latter enforces that  precisely one of $h_1$ and $h_2$ holds, when the rule body is satisfied. 
Choice rules are syntactic sugar which do not increase the expressive power  
but enable  
compact and intuitive modeling~\cite{DBLP:journals/tplp/CalimeriFGIKKLM20}. 
An ASP expression (atom, rule, program, etc.) is \emph{ground} if it contains no variable.

Given an ASP program $P$, the \emph{Herbrand Universe} of $P$ is the set $\mathcal{U}_P$ of constants appearing in $P$ and the \emph{Herbrand Base} of $P$ is the set $\mathcal{B}_P$ of ground atoms constructed from predicates of $P$ and constants in $\mathcal{U}_P$. 
We denote by $\mi{ground}(P)$ the set of all possible ground rules obtained from rules in $P$ by proper variable substitution with constants in $\mathcal{U}_P$ (cf.~\cite{DBLP:journals/tplp/CalimeriFGIKKLM20}). 
An \emph{interpretation} is a set of atoms $I\subseteq \mathcal{B}_P$.
A positive (resp.~negative) ground literal $l = a$ (resp.~$l=\mt{not}~a$) is true \wrt an interpretation $I$ if $a \in I$ (resp.~$a\notin I$), and false otherwise.
A conjunction of ground literals is true \wrt $I$ if all the literals are true \wrt $I$, and false otherwise. 
An interpretation $I$ satisfies a ground rule $r$ if the body of $r$ is false or its head is true \wrt $I$. 
It is a \emph{model} of an ASP program $P$ if it satisfies every $r \in \mi{ground}(P)$. 
The \emph{GL-reduct}~\cite{DBLP:journals/ngc/GelfondL91} of 
$P$ \wrt 
$I$ is the program $P^I$ obtained from $\mi{ground}(P)$ by $(i)$ removing each rule having at least one negative body literal false \wrt $I$; and $(ii)$ removing negative literals from the 
remaining rules.
An \emph{answer set} of $P$ is an interpretation $I$ such that $I$ is a subset-minimal model of $P^I$. 
We denote by $AS(P)$ the set of all answer sets of $P$. A program $P$ is \emph{coherent} if it has some answer set.

\subsubsection{ASP(Q)}
ASP with quantifiers (ASP(Q))~\cite{DBLP:journals/tplp/AmendolaRT19} extends ASP by allowing quantification over answer sets of different ASP programs. 
An \emph{ASP(Q) program} is an expression of the form:
\begin{equation}\label{eq:aspq}
    \Box_1 P_1 \ldots\Box_n P_n:C
\end{equation}
where $C$ is a stratified\footnote{Stratified programs respect some conditions that prevent default negation to be involved in recursion.} ASP program~\cite{DBLP:books/sp/CeriGT90} with constraints and for each $i \in \{1,\ldots,n\}$, $\Box_i \in \{\exists^{st},\forall^{st}\}$ is a quantifier and $P_i$ is an ASP program. 
Given an \ASPQ program $\Pi$ of form~(\ref{eq:aspq}), an ASP program $P$, and an interpretation $I$, define the following set of facts and constraints and ASP(Q) program, respectively: 
\begin{align*}
    \fix{P}{I}=&\{a\impl \mid a \in \mathcal{B}_{P}\cap I\} \cup \{\impl a \mid a \in \mathcal{B}_{P}\setminus I\}\\
    \Pi_{P,I}=&\Box_1 P_1 \cup \fix{P}{I}\ldots \Box_n P_n:C
\end{align*}
The \ASPQ semantics is defined inductively:
\begin{itemize}
    \item $\exists^{st} P:C$ is coherent if and only if there exists $M\in AS(P)$ such that $C\cup\fix{P}{M}$ is coherent;
    \item $\forall^{st} P:C$ is coherent if and only if for each $M\in AS(P)$, $C\cup\fix{P}{M}$ is coherent;
    \item $\exists^{st}P~\Pi$ is coherent if and only if there exists $M \in AS(P)$ such that $\Pi_{P,M}$ is coherent;
    \item $\forall^{st}P~\Pi$ is coherent if and only if for each $M\in AS(P)$, $\Pi_{P,M}$ is coherent.
\end{itemize}
The \emph{quantified answer sets} of an existential \ASPQ program $\exists^{st} P~\Pi$, with $\Pi$ of form~(\ref{eq:aspq}), are all $M \in AS(P)$ such that $\Pi_{P,M}$ is coherent. 
\begin{example}\label{ex:aspq}
Let $\Pi=\exists^{st} P_1\forall^{st} P_2:C$ where:

\begin{minipage}{.2\textwidth}
    $$
    P_1=\left\{\begin{array}{l}
        \mt{a\impl not~b}  \\
        \mt{b\impl not~a}  \\
        \mt{c\impl not~d}  \\
        \mt{d\impl not~c}  \\
    \end{array}\right\}
    $$    
    \end{minipage}
    \begin{minipage}{.2\textwidth}
    $$
    P_2=\left\{\begin{array}{l}
        \mt{e\impl a, not~f}  \\
        \mt{f\impl a, not~e}  \\
        \mt{\impl e, d}  \\
        \mt{\impl f, d}  \\
    \end{array}\right\}
    $$   
    \end{minipage}
    $$
    C=\left\{\begin{array}{l}
        \mt{\impl f}  \\
    \end{array}\right\}
    $$  
      Let us check whether $\Pi$ is coherent, \ie whether there is an answer set $M_1$ of $P_1$ such that \mbox{$\forall^{st} P_2\cup \fix{P_1}{M_1}:C$} is coherent. We have $AS(P_1)=\{\{\mt{a},\mt{c}\},\{\mt{a},\mt{d}\},\{\mt{b},\mt{c}\},\{\mt{b},\mt{d}\}\}$.
    Consider first $M_1 = \{\mt{a},\mt{c}\}$ and let $P_2' = P_2 \cup \fix{P_1}{M_1}$. $P'_2$ has two answer sets: $M_2' = \{\mt{a},\mt{c},\mt{e}\}$ and $M_2'' = \{\mt{a},\mt{c},\mt{f}\}$, so for $\forall^{st} P'_2:C$ to be coherent, both $C\cup \fix{P_2'}{M_2'}$ and $C\cup \fix{P_2'}{M_2''}$ have to be coherent. 
    However, $C\cup \fix{P_2'}{M_2''}$ is incoherent (it contains both $\mt{\impl f}$ and $\mt{f}\impl$).
    Thus, $M_1 = \{\mt{a},\mt{c}\}$ is not a quantified answer set of $\Pi$ and so not a witness for the coherence of $\Pi$. 
    Consider now $M_1 = \{\mt{b},\mt{c}\}$. Then $P_2' = P_2\cup\fix{P_1}{M_1}$ has exactly one answer set, $M_2 = \{\mt{b},\mt{c}\}$. In this case $C\cup \fix{P_2'}{M_2}$ is coherent as $\mt{f}$ is false \wrt $M_2$. Thus, $M_1 = \{\mt{b},\mt{c}\}$ is a quantified answer set of $\Pi$ and $\Pi$ is coherent.

    Let us now consider $\Pi'$ of the form $\exists^{st} P_1\forall^{st} P_2:C'$ where $C' = \{\mt{\impl not~fail}\}$.
    Observe that $\mt{fail}$ does not appear in any rule head of $P_1$, $P_2$, and $C'$. Hence, $C'\cup \fix{P_2\cup\fix{P_1}{M_1}}{M_2}$ will be incoherent for every choice of $M_1\in AS(P_1)$ and $M_2\in AS(P_2\cup\fix{P_1}{M_1})$, since $\mt{fail}$ is false \wrt any answer set of these programs. 
    Nonetheless, this does not imply that $\Pi'$ is incoherent as well. 
    Indeed, $M_1 = \{\mt{a},\mt{d}\}$ is a quantified answer set of $\Pi'$ because $P_2' = P_2\cup\fix{P_1}{M_1}$ is incoherent, so the universal quantification over answer sets of $P_2'$ is trivially satisfied.
\end{example}

\section{Encoding Semantics in ASP(Q)}\label{sec:encodings}

In this section, we present our approach to compute the answers that hold under the considered optimal repair-based semantics using ASP and ASP(Q) from their causes, the conflicts and the priority relation. 
Following \citeauthor{DBLP:conf/kr/BienvenuBIJ2025}~\shortcite{DBLP:conf/kr/BienvenuBIJ2025}, we assume that the input is given by ASP facts on the following predicates: \mt{conf} and \mt{cause} are unary predicates that store identifiers of the KB conflicts $\conflicts{\Kmc}$ and query answer causes $\causes{q(\ans),\Kmc}$, respectively, \mt{inConf} and \mt{inCause} are binary predicates such that \mt{inConf(C,A)} (\resp \mt{inCause(C,A)}) means that the fact with identifier \mt{A} belongs to the conflict (\resp cause) with identifier \mt{C}, 
and \mt{pref} is a binary predicate such that \mt{pref(A,B)} means that $\alpha\succ\beta$, where $\alpha$ and $\beta$ are the facts with identifiers \mt{A} and \mt{B} respectively.

\begin{table*}
\scalebox{0.85}{
\begin{tabular*}{1.15\textwidth}{l l }
\toprule
$\Pi_{\mi{ReachAll}}$ &
\mt{reachable(A) \impl inCause(C,A).} \\
&\mt{reachable(A) \impl inConf(C,A).}
\\
\midrule
$\Pi_{\mi{Attack}}$ & 
\mt{non\_attacking(C, A) \impl conf(C), inConf(C, A), inConf(C, B), A} != \mt{B, pref(A, B).} \\
& \mt{attacks(C, A) \impl conf(C), inConf(C, A), not~non\_attacking(C, A).}
\\
\midrule
$\Pi_{\mi{ReachS}}$ & $\Pi_{\mi{Attack}}$\\
& \mt{reachable(A) \impl cause(C), inCause(C,A).}\\
& \mt{reachable(A) \impl conf(C), attacks(C,B), reachable(B), inConf(C,A).}
\\
\midrule
$\Pi_{\mi{ReachW}}$ & \mt{weak\_attacks(C, A) \impl conf(C), inConf(C, A), inConf(C, B), A} != \mt{B, not~pref(A, B).}
\\
& \mt{reachable(A) \impl cause(C), inCause(C,A).}\\
& \mt{reachable(A) \impl conf(C), weak\_attacks(C,B), reachable(B), inConf(C,A).}
\\
\midrule
$\Pi_{\mi{ReachBin}}$ & \mt{reachable(A) \impl cause(C), inCause(C,A).}\\
&\mt{reachable(A) \impl conf(C), inConf(C,B), reachable(B), inConf(C,A), not~pref(B,A).}
\\
\midrule
$\Pi_{\mi{SubRep}}$&\mt{\{inRepair(A)\}\impl reachable(A)}.\\
&\mt{solved(C) \impl inConf(C,A), not~inRepair(A).}\\
&$ \impl $ \mt{conf(C), not~solved(C).}\\
\midrule
$\Pi_{\mi{Rep}}$&$\Pi_{\mi{SubRep}}$\\
&\mt{safe(C) \impl inConf(C, A), inConf(C, B), not~A = B, not~inRepair(A), not~inRepair(B).}\\
&\mt{keepOut(A) \impl inConf(C, A),not~inRepair(A), not~safe(C).}\\
&$ \impl $ \mt{reachable(A), not~inRepair(A), not~keepOut(A).}
\\
\midrule
$\Pi_{\mi{SatIfCause}}$&\mt{violatedCause(C) \impl inCause(C,A), not~inRepair(A).}\\
&\mt{sat \impl cause(C), not~violatedCause(C).}
\\
\midrule
$\Pi_{\mi{SomeCause}}$&$\Pi_{\mi{SatIfCause}}$\\
&$ \impl $ \mt{not~sat.}
\\
\midrule
$\Pi_{\mi{NoCause}}$&$\Pi_{\mi{SatIfCause}}$\\
&$ \impl $ \mt{sat.}
\\
\midrule
$\Pi_{\mi{GImp}}$ & \mt{\{global\_imp(A)\}\impl reachable(A)}.
\\
&\mt{solved\_global(C) \impl inConf(C,A), not~global\_imp(A).}\\
&$ \impl $ \mt{conf(C), not~solved\_global(C).}
\\
&\mt{impMinusRepair(A) \impl global\_imp(A), not~inRepair(A).}\\
&\mt{repairMinusImp(A) \impl inRepair(A), not~global\_imp(A).}\\
&\mt{diff \impl impMinusRepair(A).}\\
&\mt{diff \impl repairMinusImp(A).}\\
&\mt{fake\_improvement \impl not~diff.}\\
&\mt{ok(A) \impl repairMinusImp(A),impMinusRepair(B),pref(B,A).}\\
&\mt{fake\_improvement \impl repairMinusImp(A), not~ok(A).}\\
&\mt{global\_improvement \impl not~fake\_improvement.}
\\
&$ \impl $ \mt{not~global\_improvement.}
\\
\midrule
$\Pi_{\mi{POpt}}$ &$\Pi_{\mi{Attack}}$\\
&\mt{valid(A) \impl reachable(A), inRepair(A).}\\
&\mt{invalid\_att(C, A) \impl reachable(A), attacks(C, A), inConf(C, B), not~inRepair(B), not~A} = \mt{B.}\\
&\mt{valid(A) \impl reachable(A), conf(C), not~inRepair(A), attacks(C, A), not~invalid\_att(C, A).}\\
&$ \impl $ \mt{reachable(A), not~valid(A).}
\\
\midrule
$\Pi_{\mi{POptBin}}$ & \mt{valid(A) \impl reachable(A), inRepair(A).}\\
&\mt{valid(A) \impl reachable(A), conf(C), not~inRepair(A),  inConf(C, A), not~pref(A,B), inConf(C, B), inRepair(B).}\\
&$ \impl $ \mt{reachable(A), not~valid(A).}
\\
\midrule
$\Pi_{\mi{Compl}}$ & \mt{pref\_comp(A, B) \impl reachable(A), reachable(B), pref(A, B).}\\
&\mt{1\{pref\_comp(A, B); pref\_comp(B, A)\}1 \impl reachable(A), reachable(B), inConf(C, A), inConf(C, B),}\\& \hspace*{9cm}\mt{not~pref(A, B), not~pref(B, A), not~A} = \mt{B.}\\
&\mt{trans\_cl\_comp(A, B) \impl pref\_comp(A, B).}\\
&\mt{trans\_cl\_comp(A, B) \impl trans\_cl\_comp(A, Y), pref\_comp(Y, B).}\\
&$ \impl $ \mt{trans\_cl\_comp(A, A).}
\\
\midrule
$\Pi_{\mi{COpt}}$ & $\Pi_{\mi{Compl}}$\\
&\mt{valid(A) \impl reachable(A), inRepair(A).}\\
&\mt{invalid\_att(C, A) \impl reachable(A), not~inRepair(A), inConf(C, A), not~A} = \mt{B, inConf(C, B), not~inRepair(B).}\\
&\mt{invalid\_att(C, A) \impl reachable(A), not~inRepair(A), inConf(C, A), inConf(C, B), not~A} = \mt{B, pref\_comp(A, B).}\\
&\mt{valid(A) \impl reachable(A), not~inRepair(A), inConf(C, A), not~invalid\_att(C, A).}\\
&$ \impl $ \mt{reachable(A), not~valid(A).}
\\
\bottomrule
\end{tabular*}
}
\caption{Logic programs used to built the ASP(Q) programs that filter query answers that hold under X-brave or X-AR semantics from facts on predicates \mt{conf}, \mt{inConf}, \mt{pref}, \mt{cause} and \mt{inCause}. Intuitively, variables $\mt{A,B}$ are intended for some fact identifiers, and $\mt{C}$ for some conflict or cause identifier. 
}\label{tab:programs-semantics-asp(q)}
\end{table*}

\subsection{Naive Encodings}
The logic programs we assemble to build ASP(Q) encodings for semantics based on optimal repairs are given in Table~\ref{tab:programs-semantics-asp(q)}. 

 \subsubsection{G-Brave and G-AR}  
Using the programs from Table~\ref{tab:programs-semantics-asp(q)}, we build the following ASP(Q) programs for G-brave and G-AR semantics (where $\Pi_1\Pi_2$ stands for $\Pi_1\cup\Pi_2$):
\begin{align*}
\Pi^G_{\mi{brave}}=&\exists^{st} \Pi_{\mi{ReachAll}}\Pi_{\mi{Rep}}\Pi_{\mi{SomeCause}}\forall^{st} \Pi_{\mi{GImp}}:C\\
\Pi_{\mi{AR}}^{G}=&\exists^{st} \Pi_{\mi{ReachAll}}\Pi_{\mi{Rep}}\Pi_{\mi{NoCause}}\forall^{st} \Pi_{\mi{GImp}}:C
\end{align*}
with $C=\{\impl\mt{not~fail}\}$.

Let us first explain how $\Pi^G_{\mi{brave}}$ works. 
We want to check whether there exists $\Rmc\in\greps{\Kmc_\succ}$ such that $\Cmc\subseteq\Rmc$ for some $\Cmc\in\causes{q(\ans),\Kmc}$. 
First, applying the existential quantifier over $\Pi_{\mi{ReachAll}}\cup \Pi_{\mi{Rep}} \cup \Pi_{\mi{SomeCause}}$ serves to search for the existence of a repair that contains a cause. 
Since the facts that do not appear in any conflict belong to every repair, in what follows, we do not distinguish between actual repairs of $\Kmc$ and repairs of $(\Dmc',\Tmc)$, where $\Dmc'$ contains all facts that occur in some conflict or some cause (\ie facts whose identifiers are mentioned in our input ASP programs).
\begin{itemize}
\item 
The rules in $\Pi_{\mi{ReachAll}}$ compute the set of all such 
facts, 
encoded as atoms of the form $\mt{reachable(A)}$, where $\mt{A}$ is a fact identifier (we will see next how to restrict this set of facts using some notions of reachability, hence the name). 
\item The rules in $\Pi_{\mi{Rep}}$ compute a repair $\Rmc$ of the set of facts whose identifiers appear in the \mt{reachable(A)} atoms. 
The rules in $\Pi_{\mi{SubRep}}\subseteq \Pi_{\mi{Rep}}$ compute a $\Tmc$-consistent set $\Rmc$ of facts as follows. The choice rule in $\Pi_{\mi{SubRep}}$ guesses for each relevant fact $\alpha$ with identifier $\mt{A}$ whether $\alpha\in\Rmc$, which is encoded as $\mt{inRepair(A)}$. 
Remaining rules in $\Pi_{\mi{SubRep}}$ enforce $\Tmc$-consistency: $\mt{solved(C)}$ is derived if $\mt{C}$ is the identifier of a conflict which contains a fact $\alpha$ with identifier $\mt{A}$ such that $\alpha\notin\Rmc$, so the constraint $\mt{\impl conf(C),not~solved(C)}$ imposes that at least one fact from each conflict is excluded. 
The rules in $\Pi_{\mi{Rep}}\setminus\Pi_{\mi{SubRep}}$ ensure $\subseteq$-maximality. 
Specifically, they identify each conflict $\Cmc$ with identifier $\mt{C}$ that contains at least two facts not in $\Rmc$, encoded by $\mt{safe(C)}$, and for every non-safe conflict $\Cmc$, they mark the only fact $\alpha\in\Cmc\setminus\Rmc$ with identifier $\mt{A}$ using \mt{keepOut(A)}. 
The last constraint imposes that every $\alpha\notin\Rmc$ is marked by \mt{keepOut}, meaning that there is a conflict $\Cmc$ such that $\alpha\in\Cmc$ and $\Cmc\setminus\{\alpha\}\subseteq \Rmc$, \ie $\Rmc\cup\{\alpha\}$ is $\Tmc$-inconsistent.
\item The rules in $\Pi_{\mi{SomeCause}}$ ensure $\Rmc$ contains some~cause. 
The rules in $\Pi_{\mi{SatIfCause}}$ derive \mt{violatedCause(C)} for every cause $\Cmc$ with identifier \mt{C} such that $\Cmc\not\subseteq\Rmc$, and the atom $\mt{sat}$ is derived if there exists some cause $\Cmc$ for which \mt{violatedCause(C)} is not derived (hence $\Cmc\subseteq\Rmc$). 
The constraint in $\Pi_{\mi{SomeCause}}\setminus\Pi_{\mi{SatIfCause}}$ imposes that $\mt{sat}$ is true, so at least one cause must be included.
\end{itemize}
Hence, an answer set of $\Pi_{\mi{ReachAll}}\cup \Pi_{\mi{Rep}}\cup \Pi_{\mi{SomeCause}}$ corresponds to some $\Rmc\in\reps{\Kmc}$ with $(\Rmc,\Tmc)\models q(\ans)$. 
The second program of $\Pi^G_{\mi{brave}}$, $\Pi_{\mi{GImp}}$, is then used to search for a global improvement of $\Rmc$.
\begin{itemize}
    \item The three first rules guess a $\Tmc$-consistent set of facts $\Bmc$, in the same way as $\Pi_{\mi{SubRep}}$ did, using $\mt{global\_imp}$ to store the identifiers of the facts in $\Bmc$. The remaining rules verify whether $\Bmc$ is indeed a global improvement of $\Rmc$. 
   \item 
    Two rules compute $\Bmc\setminus\Rmc$ and $\Rmc\setminus\Bmc$, using predicates $\mt{impMinusRepair}$ and $\mt{repairMinusImp}$, respectively. This is used to (i) check whether $\Bmc\neq \Rmc$ with the two rules that derive $\mt{diff}$ if either $\Bmc\setminus\Rmc\neq\emptyset$ or $\Rmc\setminus\Bmc\neq\emptyset$ and (ii) check whether for every $\alpha \in \Rmc \setminus \Bmc$, there exists $\beta \in \Bmc \setminus \Rmc$ such that $\beta \succ \alpha$ with the rule that derives $\mt{ok(A)}$ if $\alpha\in \Rmc\setminus\Bmc$ with identifier $\mt{A}$ satisfies this condition. 
 The atom $\mt{fake\_improvement}$ is derived when (i) or (ii) is not satisfied, \ie if $\mt{diff}$ is false or if $\mt{ok(A)}$ is false for the identifier \mt{A} of some $\alpha\in\Rmc\setminus\Bmc$. 
Hence, if $\mt{fake\_improvement}$ is false, 
$\Bmc$ is a global improvement of $\Rmc$, and $\mt{global\_improvement}$ is derived. 

   \item The final constraint enforces that $\mt{global\_improvement}$ has been derived, so that $\Bmc$ is a global improvement of $\Rmc$. 
\end{itemize}
Finally, $C$ contains only the constraint $\mt{\impl not~fail}$ and $\mt{fail}$ does not appear in the head of any rule. Hence, as explained in Example~\ref{ex:aspq}, if we let $P_1=\Pi_{\mi{ReachAll}}\cup \Pi_{\mi{Rep}} \cup \Pi_{\mi{SomeCause}}$ and $P_2=\Pi_{\mi{GImp}}$, a quantified answer set $M$ of $\Pi^G_{\mi{brave}}$ must be an answer set of $P_1$ such that $P_2\cup\fix{P_1}{M}$ is incoherent.  
The next proposition follows.

\begin{restatable}{proposition}{PropBrave}
    $\mathcal{K}_{\succ} \bravemodels{G} q(\vec{a})$ iff 
    $\Pi^G_{\mi{brave}}$ is coherent.
\end{restatable}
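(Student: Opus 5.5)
The plan is to reduce the statement to two correspondence lemmas, one for each level of the quantifier prefix, and then combine them with the ASP(Q) semantics. Let $P_1=\Pi_{\mi{ReachAll}}\cup\Pi_{\mi{Rep}}\cup\Pi_{\mi{SomeCause}}$ and $P_2=\Pi_{\mi{GImp}}$. Since $C=\{\impl\mt{not~fail}\}$ and $\mt{fail}$ occurs in no rule head, $C\cup\fix{P_2'}{M_2}$ is incoherent for every answer set $M_2$ of $P_2'=P_2\cup\fix{P_1}{M}$. As observed in Example~\ref{ex:aspq}, $\Pi^G_{\mi{brave}}$ is therefore coherent iff some $M\in AS(P_1)$ makes $P_2\cup\fix{P_1}{M}$ incoherent. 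It then suffices to prove two things.

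\emph{Claim 1:} the map $M\mapsto\Rmc_M=\{\alpha\mid \mt{inRepair}(\mathit{id}(\alpha))\in M\}$ is a bijection between $AS(P_1)$ and the repairs $\Rmc$ of $(\Dmc',\Tmc)$ containing some cause of $q(\ans)$.

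\emph{Claim 2:} for such $M$, $P_2\cup\fix{P_1}{M}$ is coherent iff $\Rmc_M$ has a global improvement within $\Dmc'$.

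For Claim 1, I use the fact that $\Pi_{\mi{ReachAll}}$ is positive and derives $\mt{reachable}$ exactly for the facts of $\Dmc'$. The remaining rules form the standard guess-and-check pattern. Fixing the guess of $\mt{inRepair}$ atoms, the other atoms ($\mt{solved}$, $\mt{safe}$, $\mt{keepOut}$, $\mt{violatedCause}$, $\mt{sat}$) are uniquely determined by stratified rules over the guess. The constraints then express three conditions: each conflict misses some fact, which is $\Tmc$-consistency because inconsistency is witnessed by a conflict; each excluded fact $\alpha$ lies in a conflict $\Cmc$ with $\Cmc\setminus\{\alpha\}\subseteq\Rmc$, which is maximality; and some cause is included. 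The minimality check against the GL-reduct is routine for choice rules plus stratified definitions.

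For Claim 2, I apply the same reasoning to $\Pi_{\mi{GImp}}$ with $\mt{inRepair}$ frozen by $\fix{P_1}{M}$. Answer sets correspond to $\Tmc$-consistent $\Bmc\subseteq\Dmc'$ for which $\mt{global\_improvement}$ is derived. Unfolding the rules, that atom holds iff $\mt{diff}$ holds and every $\alpha\in\Rmc\setminus\Bmc$ has an $\mt{ok}$ witness, that is, iff $\Bmc\neq\Rmc$ and every removed fact is dominated by some added fact. This is exactly the definition of a global improvement.

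The final step, which I expect to be the main subtlety, is transferring between $\Dmc'$ and $\Dmc$. I must show that $\Rmc\in\greps{\Kmc_\succ}$ with a cause inside iff $\Rmc\cap\Dmc'$ is as in Claim 1 with no global improvement in $\Dmc'$. Facts outside every conflict belong to every repair, so repairs of $\Kmc$ are exactly $\Rmc'\cup(\Dmc\setminus\Dmc')$ for $\Rmc'$ a repair of $\Dmc'$, and causes lie in $\Dmc'$. For improvements, a global improvement $\Bmc$ of $\Rmc$ can be replaced by $\Bmc\cap\Dmc'$ and conversely by $\Bmc\cup(\Dmc\setminus\Dmc')$. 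Both stay $\Tmc$-consistent, since adding conflict-free facts cannot create a conflict. $\Rmc\setminus\Bmc$ lies in conflicts and hence in $\Dmc'$, and witnesses $\beta\succ\alpha$ lie in conflicts by the definition of $\succ$. One also has to check that $\Bmc\neq\Rmc$ is preserved: if $\Bmc\cap\Dmc'=\Rmc\cap\Dmc'$, then $\Rmc\setminus\Bmc=\emptyset$, so $\Bmc\supsetneq\Rmc$, which contradicts the maximality of $\Rmc$. Combining these equivalences with the reformulation above yields the proposition.
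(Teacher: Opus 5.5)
Your proposal is correct and follows essentially the same route as the paper's proof sketch. Both reduce coherence to the existence of some $M\in AS(P_1)$ with $P_2\cup\fix{P_1}{M}$ incoherent, then use that answer sets of $P_1$ correspond to repairs containing a cause and that coherence of $P_2\cup\fix{P_1}{M}$ corresponds to the existence of a global improvement. Your explicit transfer between $\Dmc'$ and $\Dmc$, including the check that $\Bmc\neq\Rmc$ survives restriction to $\Dmc'$, spells out a step the paper covers only with the remark about extending $\Rmc$ with facts outside all conflicts and causes.
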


The program for G-AR, $\Pi_{\mi{AR}}^{G}$, is exactly as $\Pi^G_{\mi{brave}}$ except that $\Pi_{\mi{SomeCause}}$ is replaced by $\Pi_{\mi{NoCause}}$, which ensures that the repair $\Rmc$ built by $\Pi_{\mi{ReachAll}}\cup \Pi_{\mi{Rep}}$ does not contain any cause for $q(\ans)$. Indeed, $\mt{\impl sat}\in \Pi_{\mi{NoCause}}$ requires that $\mt{sat}$ is not derived by $\Pi_{\mi{SatIfCause}}$, so that there is no cause $\Cmc$ such that $\Cmc\subseteq\Rmc$.  

\begin{proposition}
    $\mathcal{K}_{\succ} \armodels{G} q(\vec{a})$ iff 
    $\Pi_{\mi{AR}}^{G}$ is incoherent.
\end{proposition}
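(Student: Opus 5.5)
We prove the statement in its contrapositive form: $\Pi^G_{\mi{AR}}$ is coherent iff $\Kmc_\succ \not\armodels{G} q(\ans)$. By the characterization of X-AR recalled in the preliminaries, the latter holds iff there is $\Rmc\in\greps{\Kmc_\succ}$ such that no $\Cmc\in\causes{q(\ans),\Kmc}$ satisfies $\Cmc\subseteq\Rmc$. By the ASP(Q) semantics, and as in Example~\ref{ex:aspq} (since $C=\{\impl\mt{not~fail}\}$ and $\mt{fail}$ never occurs in a head), $\Pi^G_{\mi{AR}}$ is coherent iff there is $M\in AS(P_1)$ with $P_1=\Pi_{\mi{ReachAll}}\cup\Pi_{\mi{Rep}}\cup\Pi_{\mi{NoCause}}$ such that $\Pi_{\mi{GImp}}\cup\fix{P_1}{M}$ is incoherent. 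It therefore suffices to establish two correspondence lemmas and then combine them.

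\textbf{First lemma.} Let $\Dmc'$ be the set of facts mentioned in some conflict or cause. We show that the map $M\mapsto \Rmc_M=\{\alpha \mid \mt{inRepair}(\alpha)\in M\}$ is a bijection between $AS(P_1)$ and the repairs of $(\Dmc',\Tmc)$ containing no cause. Each such repair extends uniquely to a repair of $\Kmc$ by adding $\Dmc\setminus\Dmc'$, and this preserves the absence of causes because causes lie inside $\Dmc'$. The argument is routine.

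\begin{itemize}
\item Since $P_1$ is stratified apart from the choice rule, an answer set is determined by its $\mt{inRepair}$ atoms, with all other atoms obtained as the least model.
\item The constraint $\impl\mt{conf(C),not~solved(C)}$ holds iff every conflict has a fact outside $\Rmc_M$. Since the conflicts are exactly the minimal inconsistent subsets, this is equivalent to $\Tmc$-consistency.
\item The $\mt{safe}/\mt{keepOut}$ constraint holds iff every $\alpha\in\Dmc'\setminus\Rmc_M$ lies in a conflict $\Cmc$ with $\Cmc\setminus\{\alpha\}\subseteq\Rmc_M$. This is equivalent to maximality.
\item $\impl\mt{sat}$ holds iff no cause is included in $\Rmc_M$.
\end{itemize}

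\textbf{Second lemma.} Fix $M$, and let $\Rmc=\Rmc_M$. We show that the answer sets of $\Pi_{\mi{GImp}}\cup\fix{P_1}{M}$ correspond exactly to the global improvements of $\Rmc$ within $\Dmc'$. The $\fix{}{}$ part freezes $\mt{inRepair}$ to $\Rmc$. The choice rule and the $\mt{solved\_global}$ constraint make the guessed set $\Bmc$ range over the $\Tmc$-consistent subsets of $\Dmc'$. The stratified rules then derive $\mt{global\_improvement}$ exactly when both of the following hold:
\begin{itemize}
\item $\Bmc\neq\Rmc$, witnessed by $\mt{diff}$;
\item every $\alpha\in\Rmc\setminus\Bmc$ has some $\beta\in\Bmc\setminus\Rmc$ with $\beta\succ\alpha$, witnessed by $\mt{ok}$.
\end{itemize}
So the program is incoherent iff $\Rmc$ has no global improvement inside $\Dmc'$.

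\textbf{Main obstacle.} The key step is showing that restricting to $\Dmc'$ is harmless. We must show that $\Rmc\cup(\Dmc\setminus\Dmc')$ has a global improvement in $\Dmc$ iff $\Rmc$ has one in $\Dmc'$.
\begin{itemize}
\item Given an improvement $\Bmc$ in $\Dmc$, intersect it with $\Dmc'$. The set stays consistent. Every $\alpha$ removed from $\Rmc$ lies in $\Dmc'$, and its witness $\beta$ with $\beta\succ\alpha$ shares a conflict with $\alpha$, so $\beta\in\Dmc'$. For $\Bmc\cap\Dmc'\neq\Rmc$: if $\Bmc$ only differed from $\Rmc\cup(\Dmc\setminus\Dmc')$ outside $\Dmc'$, then $\Bmc$ would be missing some conflict-free fact. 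Such a removal has no witness, because conflict-free facts are not $\succ$-related to anything, contradicting that $\Bmc$ is an improvement.
\item Conversely, given an improvement $\Bmc$ in $\Dmc'$, the set $\Bmc\cup(\Dmc\setminus\Dmc')$ is consistent, because the added facts lie in no conflict. It is an improvement of $\Rmc\cup(\Dmc\setminus\Dmc')$.
\end{itemize}

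Combining the two lemmas with this restriction step gives: $\Pi^G_{\mi{AR}}$ is coherent iff some cause-free repair has no global improvement, that is, iff some globally-optimal repair contains no cause, which is exactly $\Kmc_\succ \not\armodels{G} q(\ans)$. This mirrors the proof of the G-brave proposition stated just before, with $\Pi_{\mi{SomeCause}}$ replaced by $\Pi_{\mi{NoCause}}$.
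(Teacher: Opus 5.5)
Your proof is correct and follows the paper's route. The paper proves the G-brave case via two correspondences: answer sets of the first program versus repairs containing a cause (cause-free repairs here), and incoherence of $\Pi_{\mi{GImp}}\cup\fix{P_1}{M}$ versus the absence of a global improvement; it obtains G-AR by swapping $\Pi_{\mi{SomeCause}}$ for $\Pi_{\mi{NoCause}}$. Your check that restricting to facts occurring in some conflict or cause does not change whether a global improvement exists makes explicit a step the paper only asserts informally.
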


We also considered an alternative program of the form $\forall^{st}P\exists^{st}P':C'$ for G-AR, but as it was less efficient in practice, we present it only in 
the appendix.

 \subsubsection{P- and C- Brave and AR} For $X\in\{P,C\}$, since the data complexity of query answering under X-brave and X-AR semantics is in the first level of the polynomial hierarchy, we 
 use the following plain ASP programs, which verify whether there exists an optimal repair that either contains some cause or does not contain any cause for the query.  
 \begin{align*}
 \Pi_{\mi{brave}}^X=& \Pi_{\mi{ReachAll}}\cup\Pi_{\mi{SubRep}}\cup\Pi_{\mi{XOpt}}\cup\Pi_{\mi{SomeCause}}\\
\Pi_{\mi{AR}}^X=&\Pi_{\mi{ReachAll}}\cup\Pi_{\mi{SubRep}}\cup\Pi_{\mi{XOpt}}\cup\Pi_{\mi{NoCause}}
\end{align*}
As explained in the case $X=G$, $\Pi_{\mi{ReachAll}}\cup\Pi_{\mi{SubRep}}$ guesses a $\Tmc$-consistent set $\Rmc$ of facts. 
The programs $\Pi_{\mi{POpt}}$ and $\Pi_{\mi{COpt}}$ ensure that $\Rmc$ is a Pareto- or completion-optimal repair, respectively, and were used by \citeauthor{DBLP:conf/kr/BienvenuBIJ2025}~\shortcite{DBLP:conf/kr/BienvenuBIJ2025}.
\begin{itemize}
\item The rules in $\Pi_{\mi{Attack}}\subseteq \Pi_{\mi{POpt}}$ compute the attack relation $\rightsquigarrow$. They produce $\mt{attacks(C,A)}$ if $\mt{C}$ and \mt{A} are identifiers of a conflict $\Cmc$ and fact $\alpha$ such that $\Cmc\setminus\{\alpha\}\rightsquigarrow\alpha$. 
The next rules in $\Pi_{\mi{POpt}}$ derive $\mt{valid(A)}$ for the identifier $\mt{A}$ of a fact $\alpha$ if (i) $\alpha\in\Rmc$, or (ii) $\alpha$ is attacked by $\Cmc\setminus\{\alpha\}$ for some conflict $\Cmc$  
such that $\Cmc\setminus\{\alpha\}\subseteq\Rmc$. 
The constraint $\mt{\impl reachable(A), not~valid(A)}$ thus ensures that every $\alpha\notin\Rmc$ is attacked by a set of facts included in $\Rmc$. This means that $\Rmc$ is $\subseteq$-maximal and Pareto-optimal: otherwise, there would be $\alpha\notin\Rmc$ such that $\Rmc\cup\{\alpha\}\setminus\{\beta\mid\alpha\succ\beta\}$ is $\Tmc$-consistent and $\alpha$ would not be attacked by any subset of $\Rmc$.

\item The rules in $\Pi_{\mi{Compl}}\subseteq\Pi_{\mi{COpt}}$ guess a completion $\succ'$ of $\succ$, using $\mt{pref\_comp(A,B)}$ to indicate that $\alpha\succ'\beta$, where $\alpha$ and $\beta$ have identifiers $\mt{A}$ and $\mt{B}$. 
The next rules in $\Pi_{\mi{COpt}}$ derive $\mt{valid(A)}$ for the identifier $\mt{A}$ of a fact $\alpha$ if (i) $\alpha\in\Rmc$, or (ii) $\alpha$ is in a conflict $\Cmc$ such that  $\Cmc\setminus\{\alpha\}\subseteq \Rmc$ and  $\alpha\not\succ'\beta$ for every $\beta\in\Cmc\setminus\{\alpha\}$. 
The constraint $\mt{\impl reachable(A), not~valid(A)}$ thus ensures that every $\alpha\notin\Rmc$ is attacked (\wrt the attack relation defined \wrt $\succ'$) by a set of facts included in $\Rmc$.
\end{itemize}
Finally, as before, 
$\Pi_{\mi{SomeCause}}$ (\resp $\Pi_{\mi{NoCause}}$) enforces that $\Rmc$ contains a cause (\resp does not contain any cause). 
 
\begin{proposition}
    For $X\in\{P,C\}$, $\mathcal{K}_{\succ} \bravemodels{X} q(\vec{a})$ iff $\Pi_{\mi{brave}}^X$ is coherent, and $\mathcal{K}_{\succ} \armodels{X} q(\vec{a})$ iff $\Pi_{\mi{AR}}^X$ is incoherent.
\end{proposition}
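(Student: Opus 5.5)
The plan is to show that the answer sets of $\Pi_{\mi{ReachAll}}\cup\Pi_{\mi{SubRep}}\cup\Pi_{\mi{XOpt}}$ correspond exactly to the X-optimal repairs of $\Kmc_\succ$, restricted to the relevant facts $\Dmc'$. The cause filters then finish the argument. For this we use the characterizations of X-brave and X-AR via causes. $\Pi_{\mi{SomeCause}}$ (resp. $\Pi_{\mi{NoCause}}$) adds only stratified rules and a constraint. So answer sets of $\Pi^X_{\mi{brave}}$ (resp. $\Pi^X_{\mi{AR}}$) are those answer sets of the base program whose $\Rmc$ contains some cause (resp. no cause). Hence $\Pi^X_{\mi{brave}}$ is coherent iff some $\Rmc\in\xreps{\Kmc_\succ}$ contains a cause, and $\Pi^X_{\mi{AR}}$ is coherent iff some $\Rmc\in\xreps{\Kmc_\succ}$ contains no cause, i.e. iff $\Kmc_\succ\not\armodels{X}q(\ans)$. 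As noted earlier, facts outside conflicts lie in every repair, so working over $\Dmc'$ is harmless.

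First, standard reasoning on the GL-reduct shows that answer sets of $\Pi_{\mi{ReachAll}}\cup\Pi_{\mi{SubRep}}$ correspond bijectively to $\Tmc$-consistent $\Rmc\subseteq\Dmc'$, with $\Rmc=\{\alpha\mid\mt{inRepair(A)}\}$. The remaining rules are stratified over \mt{inRepair}, \mt{pref} and \mt{pref\_comp}, so each guess determines unique derived atoms.

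\textbf{Case $X=P$.} The constraint holds iff every $\alpha\in\Dmc'\setminus\Rmc$ has a conflict $\Cmc\ni\alpha$ with $\Cmc\setminus\{\alpha\}\subseteq\Rmc$ and $\Cmc\setminus\{\alpha\}\rightsquigarrow\alpha$. Call this property $(\ast)$. I will prove that for consistent $\Rmc$, $(\ast)$ holds iff $\Rmc\in\preps{\Kmc_\succ}$.

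For the forward direction, $(\ast)$ immediately gives maximality. For Pareto-optimality, suppose $\Bmc$ is a Pareto improvement with witness $\beta\in\Bmc\setminus\Rmc$. By $(\ast)$ there is a conflict $\Cmc\ni\beta$ with $\Cmc\setminus\{\beta\}\subseteq\Rmc$ and $\beta\not\succ\gamma$ for all $\gamma\in\Cmc$. Since $\Bmc$ is consistent, some $\gamma\in\Cmc\setminus\Bmc$ exists, and $\gamma\in\Rmc\setminus\Bmc$. The improvement condition then gives $\beta\succ\gamma$, a contradiction.

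For the converse, suppose $\alpha\notin\Rmc$ is not attacked by any subset of $\Rmc$. Let $\Bmc=(\Rmc\cup\{\alpha\})\setminus\{\gamma\mid\alpha\succ\gamma\}$. Any conflict inside $\Bmc$ must contain $\alpha$, have its other facts in $\Rmc$, and have $\alpha\not\succ$ any of its members. That is an attack on $\alpha$ from within $\Rmc$, which is impossible. So $\Bmc$ is consistent and is a Pareto improvement.

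\textbf{Case $X=C$.} Answer sets of $\Pi_{\mi{Compl}}$ correspond to acyclic total extensions $\succ'$ of $\succ$ on conflicting pairs among reachable facts, i.e. to completions. The valid-constraint expresses $(\ast)$ for $\succ'$. By the $P$ case applied to $\Kmc_{\succ'}$, this says $\Rmc\in\preps{\Kmc_{\succ'}}$.

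It remains to use the known fact that for a total priority relation Pareto- and global-optimality coincide. I would verify it directly: take a global improvement $\Bmc$ of $\Rmc$ under total $\succ'$. Choose $\beta\in\Bmc\setminus\Rmc$ that is $\succ'$-maximal among the $\beta$'s dominating elements of $\Rmc\setminus\Bmc$, using acyclicity. Totality over conflicting pairs makes $\beta$ a Pareto witness, possibly after shrinking $\Bmc$. I expect this to be the main obstacle: making the shrinking argument rigorous, or citing Staworko et al.\ for $\creps{}$ being characterized by Pareto-optimality w.r.t.\ a completion.

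Finally, the $\exists$ over completions inside a single answer set gives exactly $\creps{\Kmc_\succ}$.
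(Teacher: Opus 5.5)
Your proposal is correct and follows essentially the paper's own justification. The paper also reads $\Pi_{\mi{POpt}}$ as enforcing your property $(\ast)$, argues optimality through the same set $(\Rmc\cup\{\alpha\})\setminus\{\gamma\mid\alpha\succ\gamma\}$, and for $X=C$ simply recalls from Staworko et al.\ that completion-optimal repairs are exactly the repairs that are Pareto-optimal \wrt some completion, so citing that result is all the paper does at the step you flagged.

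If you prefer your direct argument, take $\beta\in\Bmc\setminus\Rmc$ maximal \wrt the transitive closure of $\succ'$ (not just \wrt $\succ'$ itself), and use $(\Rmc\cup\{\beta\})\setminus\{\alpha\mid\beta\succ'\alpha\}$ as the Pareto improvement rather than a subset of $\Bmc$. Any conflict inside this set has $\beta$ as its $\succ'$-minimum, and its fact $\gamma\notin\Bmc$ lies in $\Rmc\setminus\Bmc$ with $\gamma\succ'\beta$ by totality. The global improvement then gives some $\beta'\in\Bmc\setminus\Rmc$ with $\beta'\succ'\gamma\succ'\beta$, contradicting maximality.
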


\subsubsection{X-IAR} 
We compute 
$\bigcap_{\Rmc\in\xreps{\Kmc_\succ}}\Rmc$ by checking for each fact 
whether it holds under X-AR (since a fact holds under X-AR iff it is in every optimal repair). It is then possible to use this set to evaluate the queries under X-IAR.

\begin{table*}
\scalebox{0.85}{
\begin{tabular*}{1.15\textwidth}{l l }
\toprule
$\Pi_{\Gamma(\emptyset)}$ &  \mt{unsafe(A, 0) \impl attacks(C, A).}\\
&\mt{safe(A, 0) \impl inConf(C, A), not~unsafe(A, 0).}
\\
\midrule
$\Pi_{\Gamma incr}$ & \mt{safe(A, t) \impl safe(A, t-1).}
\\
       & \mt{non\_subset(C, A, t) \impl conf(C), inConf(C, A), inConf(C, B), A} != \mt{B, not~safe(B, t-1).}
\\
       & \mt{subset(C, A, t) \impl conf(C), inConf(C, A), not~non\_subset(C, A, t).}
\\
        & \mt{protected(C, A, t) \impl attacks(C, A), inConf(C, B), A} != \mt{B, attacks(C2, B), subset(C2, B, t).} 
\\
       & \mt{unsafe(A, t) \impl attacks(C, A), not~protected(C, A, t).} 
\\
       & \mt{safe(A, t) \impl inConf(C, A), not~unsafe(A, t).}
\\
       & \mt{continue(t) \impl safe(A, t), not~safe(A,t-1).}\\
\bottomrule
\end{tabular*}
}
\caption{Logic programs used to compute the grounded repair from facts on predicates \mt{conf}, \mt{inConf}, and \mt{attacks} (computed by $\Pi_{\mi{Attack}}$).}\label{tab:programs-grounded}
\end{table*}

\subsection{Localization}
To avoid considering the whole dataset in the encodings, we \emph{localize} them to relevant facts, defined from the query causes using some notions of \emph{reachability} \wrt the conflicts and priority relation. We define two such notions of reachability (strong and weak). 
The first one was already 
used to localize SAT or ASP encodings by \citeauthor{DBLP:conf/kr/BienvenuB22}~\shortcite{DBLP:conf/kr/BienvenuB22} and \citeauthor{DBLP:conf/kr/BienvenuBIJ2025}~\shortcite{DBLP:conf/kr/BienvenuBIJ2025}, and the proofs of the theorems below are strongly inspired by the proofs of correctness of the SAT encodings for non-binary conflicts given in the extended version of \cite{DBLP:conf/kr/BienvenuB22}. 
\begin{definition}
Given a prioritized KB $\Kmc_\succ$ with $\Kmc=(\Dmc,\Tmc)$  in $\Kmc_\succ$ and $\Bmc\subseteq\Dmc$, $\reachstrong(\Bmc)$ is the set of facts reachable from $\Bmc$ in the directed hypergraph\footnote{A node $\gamma$ is reachable from a set of nodes $\Bmc$ in a directed hypergraph if $\gamma\in\Bmc$ or there exists a node $\beta$ reachable from $\Bmc$ and some edge $(\beta,\Emc)$ such that $\gamma\in\Emc$.} whose edges are $\{(\alpha,\Emc)\mid \Emc\rightsquigarrow\alpha\}=\{(\alpha,\Emc)\mid \Emc\cup\{\alpha\}\in\conflicts{\Kmc}, \forall \beta \in \Emc, \alpha \not \succ \beta\}$ and $\reachweak(\Bmc)$ is the set of facts reachable from $\Bmc$ in the directed hypergraph whose edges are $\{(\alpha,\Emc)\mid \Emc\cup\{\alpha\}\in\conflicts{\Kmc}, \exists \beta \in \Emc, \alpha \not \succ \beta\}$. 
\end{definition} 
Note that $\reachstrong(\Bmc)\subseteq \reachweak(\Bmc)$ and that the these two sets coincide when the conflicts are \emph{binary}: in this case, reachability is done in the oriented graph whose edges are $\{(\alpha,\beta)\mid \{\alpha,\beta\}\in\conflicts{\Kmc}, \alpha\not\succ\beta\}$. 

\begin{restatable}{theorem}{ThmLocalizationGlobal}\label{ThmLocalizationGlobal}
Let $X\in\{P,G,C\}$, $\Bmc\subseteq\Dmc$, $\Bmc_r=\reachweak(\Bmc)$, $\Kmc^{\Bmc_r}=(\Bmc_r,\Tmc)$ and $\succ^{\Bmc_r}$ be the restriction of $\succ$ to $\Bmc^r$. 
\begin{itemize}
\item If $\Rmc\in\xreps{\Kmc_\succ}$, then $\Rmc\cap\Bmc_r\in\xreps{\Kmc^{\Bmc_r}_{\succ^{\Bmc_r}}}$.
\item If $\Rmc\in \xreps{\Kmc^{\Bmc_r}_{\succ^{\Bmc_r}}}$, then there exists $\Rmc'\in\xreps{\Kmc_\succ}$ such that $\Rmc=\Rmc'\cap\Bmc_r$. 
\end{itemize}
\end{restatable}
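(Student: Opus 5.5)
The whole proof rests on one structural property of $\Bmc_r=\reachweak(\Bmc)$. By definition of the weak hypergraph, $\Bmc_r$ is closed: if $\alpha\in\Bmc_r$, $\Cmc\in\conflicts{\Kmc}$, $\alpha\in\Cmc$, and $\alpha\not\succ\beta$ for some $\beta\in\Cmc\setminus\{\alpha\}$, then $\Cmc\subseteq\Bmc_r$.

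\emph{Boundary lemma.} Let $\Cmc$ be a conflict with $\Cmc\not\subseteq\Bmc_r$ and $\Cmc\cap\Bmc_r\neq\emptyset$. Then $\Cmc\cap\Bmc_r=\{\gamma\}$ for a single fact $\gamma$, and $\gamma\succ\beta$ for every $\beta\in\Cmc\setminus\{\gamma\}$. Indeed, closure forces every $\gamma\in\Cmc\cap\Bmc_r$ to dominate all other members of $\Cmc$. Two such facts would dominate each other, which contradicts acyclicity of $\succ$.

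I will use this lemma throughout, together with these standard facts:
\begin{itemize}
\item $\Rmc$ is Pareto-optimal iff it is a repair and there is no $\alpha\notin\Rmc$ with $(\Rmc\cup\{\alpha\})\setminus\{\beta\mid\alpha\succ\beta\}$ $\Tmc$-consistent.
\item Completion-optimal repairs are exactly the globally-optimal repairs for some completion.
\item For a total priority, the notions P, G and C coincide, and the optimal repair is obtained greedily by adding facts in an order compatible with $\succ'$ whenever consistency is preserved.
\end{itemize}

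\emph{First bullet.} Let $\Rmc\in\xreps{\Kmc_\succ}$ and $\Rmc_0=\Rmc\cap\Bmc_r$. Consistency of $\Rmc_0$ is clear. For maximality, take $\alpha\in\Bmc_r\setminus\Rmc$ and a conflict $\Cmc\subseteq\Rmc\cup\{\alpha\}$ containing $\alpha$. If $\Cmc\not\subseteq\Bmc_r$, the boundary lemma says that $\alpha$ is the only fact of $\Cmc$ in $\Bmc_r$ and that $\alpha$ dominates all the others, which all lie in $\Rmc\setminus\Bmc_r$. I expect this to contradict optimality of $\Rmc$: with respect to facts of $\Bmc_r$, $\Rmc$ behaves optimally only relative to conflicts inside $\Bmc_r$.

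To make this precise, suppose $\Bmc'\subseteq\Bmc_r$ is an X-improvement of $\Rmc_0$ in the restricted KB. I lift it to
\[
\Bmc''=\Bmc'\cup\{\delta\in\Rmc\setminus\Bmc_r\mid \text{there is no } \gamma\in\Bmc'\setminus\Rmc \text{ with } \gamma\succ\delta\}.
\]
Consistency of $\Bmc''$ follows from the lemma. Any conflict $\Cmc\subseteq\Bmc''$ must cross the boundary, since $\Bmc'$ and $\Rmc\setminus\Bmc_r$ are consistent and the lemma forbids conflicts of the form $\{\gamma\}\cup\Emc$ with $\gamma\in\Rmc$. So its unique $\Bmc_r$-part is some $\gamma\in\Bmc'\setminus\Rmc$ dominating the rest, and those facts were removed.

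Every removed fact of $\Rmc$ is then dominated by a fact of $\Bmc''\setminus\Rmc$, so $\Bmc''$ is a global improvement of $\Rmc$. This settles $X=G$, and the same argument shows maximality of $\Rmc_0$.

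For $X=P$ I use the single-fact characterization. If $\alpha\in\Bmc_r\setminus\Rmc_0$ gives a one-fact Pareto improvement of $\Rmc_0$, then $(\Rmc\cup\{\alpha\})\setminus\{\beta\mid\alpha\succ\beta\}$ is consistent by the lemma, because the only boundary conflicts through $\alpha$ have all other members dominated by $\alpha$. This contradicts Pareto-optimality of $\Rmc$.

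For $X=C$, I take a completion $\succ'$ witnessing optimality of $\Rmc$. The restriction of $\succ'$ to $\Bmc_r$ is a completion of $\succ^{\Bmc_r}$. Since $\reachweak$ with respect to $\succ'$ is contained in $\reachweak$ with respect to $\succ$, the lemma still applies with respect to $\succ'$, and the G argument goes through.

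\emph{Second bullet.} Let $\Rmc\in\xreps{\Kmc^{\Bmc_r}_{\succ^{\Bmc_r}}}$. I would build $\Rmc'$ greedily on top of $\Rmc$ as follows.
\begin{enumerate}
\item Choose a linear extension of $\succ$ restricted to $\Dmc\setminus\Bmc_r$; for $X=C$, instead use a completion of $\succ$ outside $\Bmc_r$.
\item Scan $\Dmc\setminus\Bmc_r$ in that order, adding each fact if consistency with the current set is preserved. Let $\Rmc'$ be the result.
\end{enumerate}
Then $\Rmc'\cap\Bmc_r=\Rmc$. By the lemma, a boundary conflict never blocks a fact of $\Bmc_r$ unless that fact dominates the rest, so no fact of $\Bmc_r$ outside $\Rmc$ becomes addable. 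Hence $\Rmc'$ is maximal.

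For optimality, I argue by contradiction. Suppose $\Rmc'$ had an improvement. Intersecting it with $\Bmc_r$ gives either an improvement of $\Rmc$, which is impossible, or an improvement living entirely outside $\Bmc_r$. The latter is ruled out by the greedy construction: a fact skipped outside $\Bmc_r$ was blocked by facts added earlier, which are $\succ$-preferred or incomparable, or by facts of $\Rmc$ that it does not dominate.

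For $X=C$, I combine the completion witnessing $\Rmc$ inside $\Bmc_r$, the chosen completion outside, and orient boundary pairs as the lemma dictates. This combination needs to be checked to be acyclic.

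The main obstacle is the second bullet for $X=G$. There the greedy argument must exclude global improvements that trade facts across the boundary, and I would handle such trades by splitting an improvement into its $\Bmc_r$-part and its outside part and applying the lemma to each removed fact.
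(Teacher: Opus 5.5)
\paragraph{The gap: optimality of $\Rmc'$ in the second bullet for $X=G$.}
Your reduction step is fine. If $\Rmc_0$ is a global improvement of $\Rmc'$, closure of $\Bmc_r$ puts the dominator of every removed fact of $\Rmc$ inside $\Bmc_r$. So either $\Rmc_0\cap\Bmc_r$ improves $\Rmc$, or $\Rmc_0\cap\Bmc_r=\Rmc$. Trades across the boundary are therefore \emph{not} the obstacle you name at the end.

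The remaining case is not settled by ``a skipped fact was blocked by earlier or non-dominated facts''. That one-step observation suffices for Pareto improvements: a single witness must dominate every removed fact, so the blockers survive. In a global improvement, however, the blockers of an added fact may themselves be removed, each compensated by another added fact, and so on. These chains live entirely outside $\Bmc_r$, where your lemma says nothing. The ``greedy repair is optimal for a total priority'' fact does not apply either, because $\Rmc'$ is greedy only outside $\Bmc_r$, on top of an arbitrary globally-optimal $\Rmc$.

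The missing step is an induction along your scan order, which is what the paper does. Let $\alpha$ be the first scanned fact on which $\Rmc_0$ and $\Rmc'$ disagree.
\begin{itemize}
\item If $\alpha\in\Rmc'\setminus\Rmc_0$, its dominator in $\Rmc_0\setminus\Rmc'$ lies outside $\Bmc_r$, since $\Rmc_0\cap\Bmc_r=\Rmc'\cap\Bmc_r$. It was therefore scanned before $\alpha$, contradicting the choice of $\alpha$.
\item If $\alpha\in\Rmc_0\setminus\Rmc'$, the conflict that blocked it is contained in $\Rmc$ together with earlier-scanned facts of $\Rmc'$. All of these lie in $\Rmc_0$, contradicting the consistency of $\Rmc_0$.
\end{itemize}

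\paragraph{The $X=C$ acyclicity check you left open is easy.}
\begin{itemize}
\item If $\beta\succ\alpha$ with $\alpha\in\Bmc_r$, closure puts their common conflict inside $\Bmc_r$. So no pair points from outside into $\Bmc_r$.
\item Boundary pairs are already oriented from inside to outside by your lemma.
\item By the lemma, two facts of $\Bmc_r$ never share a crossing conflict, so the inside completion already orients every inside pair.
\end{itemize}
Hence any cycle of the combined relation $\succ'$ stays on one side, where it is acyclic. Since $\succ'$ is total and $\Bmc_r$ remains closed for $\succ'\supseteq\succ$, it suffices to run your $P$ argument with respect to $\succ'$, which does go through, as the paper does. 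You do not need $G$ here. Also, maximality of $\Rmc'$ needs no lemma: facts of $\Bmc_r\setminus\Rmc$ are already blocked by $\Rmc\subseteq\Rmc'$.

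\paragraph{First bullet.}
Your first bullet is correct and shorter than the paper's. The boundary lemma makes the paper's two claims, and its descending chain $(\beta_i)$ inside each conflict, unnecessary. Your lifted set $\Bmc''$ is directly a global improvement of $\Rmc$, and a Pareto one in the maximality case.

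One attribution is off. A crossing conflict in $\Bmc''$ whose inside fact lies in $\Rmc$ is excluded by the consistency of $\Rmc$, not by the lemma.

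What the paper's heavier argument buys is reuse. It only needs that $\minconf{\Cmc}$ avoids $\Bmc_r$ for crossing conflicts. That property also holds for $\reachstrong$, so the argument transfers to Theorem~\ref{ThmLocalizationParetoComp}. Your lemma fails for strong reachability.
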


\begin{restatable}{theorem}{ThmLocalizationParetoComp}\label{ThmLocalizationParetoComp} 
If $\Bmc_r=\reachweak(\Bmc)$ is replaced by $\Bmc_r=\reachstrong(\Bmc)$, Theorem~\ref{ThmLocalizationGlobal} still holds for $X\in\{P,C\}$.
\end{restatable}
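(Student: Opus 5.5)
The plan is to use the characterization of optimal repairs through the attack relation, which already underlies $\Pi_{\mi{POpt}}$ and $\Pi_{\mi{COpt}}$. A repair $\Rmc$ is Pareto-optimal iff $\Rmc$ is $\Tmc$-consistent and every $\alpha\in\Dmc\setminus\Rmc$ is attacked by some $\Emc\subseteq\Rmc$. It is completion-optimal iff there is a completion $\succ'$ of $\succ$ such that every $\alpha\notin\Rmc$ is attacked, \wrt $\succ'$, by some $\Emc\subseteq\Rmc$.

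The property of $\Bmc_r=\reachstrong(\Bmc)$ that I will exploit is closure under attackers: if $\alpha\in\Bmc_r$ and $\Emc\rightsquigarrow\alpha$, then $\Emc\subseteq\Bmc_r$. Equivalently, every conflict $\Cmc\not\subseteq\Bmc_r$ has the following shape: each $\alpha\in\Cmc\cap\Bmc_r$ satisfies $\alpha\succ\beta$ for some $\beta\in\Cmc$. I also use that the conflicts of $\Kmc^{\Bmc_r}$ are exactly the conflicts of $\Kmc$ contained in $\Bmc_r$, so attacks between facts of $\Bmc_r$ are the same in both KBs.

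\emph{First item (restriction).} Let $\Rmc\in\preps{\Kmc_\succ}$ and $\alpha\in\Bmc_r\setminus\Rmc$. Pareto-optimality gives some $\Emc\subseteq\Rmc$ with $\Emc\rightsquigarrow\alpha$. By closure, $\Emc\subseteq\Bmc_r$, so $\Emc\subseteq\Rmc\cap\Bmc_r$ and $\Emc\cup\{\alpha\}$ is a conflict of $\Kmc^{\Bmc_r}$. Since $\Rmc\cap\Bmc_r$ is consistent, it is Pareto-optimal in the restricted KB.

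For $X=C$ the argument is the same with $\succ'$ in place of $\succ$, using the restriction of $\succ'$ to $\Bmc_r$ as the completion. Closure still applies here: an attack \wrt $\succ'$ is in particular an attack \wrt $\succ$, because $\alpha\not\succ'\beta$ implies $\alpha\not\succ\beta$.

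\emph{Second item (extension).} Let $\Rmc$ be optimal in the restricted KB. For $X=C$, fix the witnessing completion $\succ'_r$ on $\Bmc_r$ and extend $\succ'_r\cup\succ$ to a completion $\succ'$ of $\succ$ on all of $\Dmc$. This extension is possible provided $\succ'_r\cup\succ$ is acyclic, which needs a short argument.

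I then build $\Rmc'$ greedily. Start from $\Rmc$ and process the facts of $\Dmc\setminus\Bmc_r$ along a linear extension of $\succ'$, higher-priority facts first. Add a fact whenever the current set stays $\Tmc$-consistent. By construction $\Rmc'$ is consistent and $\Rmc'\cap\Bmc_r=\Rmc$. Facts of $\Bmc_r\setminus\Rmc$ remain attacked by subsets of $\Rmc\subseteq\Rmc'$.

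For a rejected $\alpha\notin\Bmc_r$, there is a conflict $\Cmc\subseteq\Rmc'\cup\{\alpha\}$. Its members outside $\Bmc_r$ were processed before $\alpha$, so $\alpha\not\succ'$ any of them. The Pareto case then follows from the completion case, since a completion-optimal repair is Pareto-optimal, with $\succ'$ taken as any completion extending $\succ$.

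The step I expect to be the main obstacle is showing that $\alpha\not\succ'\beta$ also holds for the members $\beta\in\Cmc\cap\Bmc_r$, so that $\Cmc\setminus\{\alpha\}\rightsquigarrow\alpha$ really is an attack. Closure only says that each such $\beta$ dominates some other element of $\Cmc$; it does not exclude $\alpha\succ\beta$.

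My first attempt will be to choose the completion so that facts of $\Bmc_r$ are placed as low as $\succ$ permits relative to outside facts in shared conflicts. I would then pick, among the conflicts witnessing the rejection of $\alpha$, one that is minimal with respect to the transitive closure of $\succ'$. The hope is that this minimal conflict contains no $\beta\in\Bmc_r$ with $\alpha\succ'\beta$.

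If that fails, the fallback is a direct argument. Suppose $\Rmc'$ had a Pareto improvement $\Bmc'$ with witness $\gamma$. Track $\gamma$ and the facts removed from $\Rmc'$, and use closure to push the improvement into $\Bmc_r$, which would contradict the optimality of $\Rmc$.
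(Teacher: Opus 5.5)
Your first item is fine. Arguing through the attack characterization is more direct than the paper's explicit construction of a Pareto improvement of $\Rmc$: each $\alpha\in\Bmc_r\setminus\Rmc$ is attacked by some $\Emc\subseteq\Rmc$, and strong closure puts $\Emc$ inside $\Bmc_r$.

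For the second item, the obstacle you flag is real. It is resolved by the processing order alone, which is the idea your attempts miss; in the paper this is the claim $\minconf{\Cmc}\cap\Rmc=\emptyset$ plus the chain argument. Suppose a rejected $\alpha\notin\Bmc_r$ with witnessing conflict $\Cmc$ had $\alpha\succ'\beta_0$ for some $\beta_0\in\Cmc$.
\begin{itemize}
\item $\beta_0$ cannot precede $\alpha$ in the order, yet it is already in the current set, so $\beta_0\in\Rmc$.
\item Since $\alpha\in\Cmc\setminus\Bmc_r$, strong closure rules out $\Cmc\setminus\{\beta_0\}\rightsquigarrow\beta_0$. Hence $\beta_0\succ\beta_1$ for some $\beta_1\in\Cmc$.
\item Again $\beta_1$ comes after $\alpha$, so $\beta_1\in\Rmc$. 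Iterating gives an infinite $\succ$-chain in the finite set $\Cmc$.
\end{itemize}
So $\Cmc\setminus\{\alpha\}\rightsquigarrow\alpha$, with no special choice of completion or conflict.

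Separately, deriving the Pareto case from the completion case ``since a completion-optimal repair is Pareto-optimal'' does not work. $\Rmc$ is only Pareto-optimal in the restricted KB, so there is no witnessing completion on $\Bmc_r$. Facts of $\Bmc_r\setminus\Rmc$ also need not be attacked with respect to an arbitrary completion. Instead, run the construction along a linear extension of $\succ$ and check attacks with respect to $\succ$ directly, as the paper does.

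The step you defer for $X=C$, acyclicity of $\succ'_r\cup\succ$, cannot be supplied. In fact, under strong reachability the second item is false for $X=C$. Take conflicts $\{a,b\}$, $\{a,x,y\}$, $\{b,x,z\}$ and $\succ=\{(b,x),(x,a),(a,y)\}$.
\begin{itemize}
\item Because $a\succ y$ and $b\succ x$, the only attacks on $a$ and $b$ are $\{b\}\rightsquigarrow a$ and $\{a\}\rightsquigarrow b$. So $\reachstrong(\{a\})=\{a,b\}$ and the restricted priority is empty.
\item Then $\{a\}$ is completion-optimal in the restricted KB via the completion $a\succ' b$.
\item Every completion of $\succ$ must have $b\succ' a$, since otherwise $a\succ' b\succ x\succ a$ is a cycle.
\item Then any repair $\Rmc'\ni a$ has the Pareto improvement $(\Rmc'\setminus\{a,x\})\cup\{b\}$ with witness $b$.
\end{itemize}
Hence no completion-optimal repair of $\Kmc_\succ$ meets $\Bmc_r$ in $\{a\}$.

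The paper's proof takes the same unjustified step: it simply picks a completion of $\succ$ agreeing with ${\succ'}^{\Bmc_r}$. That step is sound for $\reachweak$. There, a conflict not contained in $\Bmc_r$ meets $\Bmc_r$ in a single fact that dominates all other members, so $\succ$ never points from outside into $\Bmc_r$. Under $\reachstrong$ it can, as $x\succ a$ shows above. What can actually be proved is both items for $X=P$ and the first item for $X=C$.
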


We can thus replace $\Pi_{\mi{ReachAll}}$ by $\Pi_{\mi{ReachW}}$ or $\Pi_{\mi{ReachS}}$ (if $X\in\{P,C\}$) in the naive encodings, in order to restrict the set of facts from $\Dmc$ considered (whose identifiers are stored in \mt{reachable}). Indeed, these programs compute $\reachstrong(\Bmc)$ and $\reachweak(\Bmc)$ for $\Bmc=\bigcup_{\Cmc\in\causes{q(\ans),\Kmc}}\Cmc$, respectively. 
We leave open whether Theorem~\ref{ThmLocalizationParetoComp} holds for $X=G$.

\subsection{Simplified Encoding for Binary Conflicts}

When the conflicts are of size exactly 2 (we assume that any self-inconsistent fact has been removed from the dataset), we can simplify the encodings as follows. 

\subsubsection{Reachable Facts} First, as explained earlier, the set of relevant (\mt{reachable}) facts is easier to compute with binary conflicts: we replace $\Pi_{\mi{ReachW}}$ or $\Pi_{\mi{ReachS}}$ by $\Pi_{\mi{ReachBin}}$. 

\subsubsection{Repairs} In the encodings for semantics based on globally-optimal repairs, which  build a full, $\subseteq$-maximal, repair, we modify $\Pi_{\mi{Rep}}$ by replacing 
the two rules of $\Pi_{\mi{Rep}}$ that compute the facts that are necessary to keep out of the repair 
by the simpler rules (with one negation instead of four):
\begin{align*}
\mt{dangerous(C)}\text{ $ \impl $ }&\mt{conf(C),inConf(C, A), inRepair(A).}\\
\mt{keepOut(A)}\text{ $ \impl $ }&\mt{dangerous(C),inConf(C, A),}\\&\mt{not~inRepair(A).}
\end{align*}

\subsubsection{Pareto-Optimality} We use again the fact that when the conflicts are binary the attack relation is straightforward ($\{\beta\}\rightsquigarrow\alpha$ iff $\{\alpha,\beta\}\in\conflicts{\Kmc}$ and $\alpha\not\succ\beta$), and that including a single conflicting fact is sufficient to exclude a fact, to replace $\Pi_{\mi{POpt}}$ by $\Pi_{\mi{POptBin}}$.

\subsection{Approximations}

Another way to answer queries under optimal repair-based semantics more efficiently is to use \emph{approximations} to compute subsets or supersets of the answers and thus avoid relying on more demanding programs for some answers. 

\subsubsection{Tractable Under-Approximations of P-IAR} 
We use a preprocessing step that  identifies a subset of the answers that hold under P-IAR (hence under all the considered semantics). 
We consider two tractable such under-approximations: the \emph{grounded semantics} (recalled in Section~\ref{prelims:reps}) and the \emph{trivially P-IAR answers} \cite{DBLP:conf/aaai/BienvenuBG14,DBLP:conf/kr/BienvenuB22}, which have some cause such that none of its fact is attacked (\wrt 
$\rightsquigarrow$). Note that the trivially P-IAR answers are 
those that hold \wrt $\Gamma(\emptyset)$. 
Table~\ref{tab:programs-grounded} shows the ASP programs used to compute $\Gamma(\emptyset)$ then incrementally compute the grounded repair. 
Using incremental solving mirrors the fixpoint definition of grounded semantics and allows for a very direct encoding.  
Once we have computed one of these two sets, we use it to filter the answers that have some cause included in it. 

\subsubsection{P-AR and P-Brave} We also investigate the use of semantics whose data complexity is in the first level of the polynomial hierarchy, \eg based on Pareto-optimal repairs, to obtain lower or upper bounds on semantics based on globally-optimal repairs: $\Kmc_\succ\armodels{P} q(\ans)$ implies $\Kmc_\succ\armodels{G} q(\ans)$ (hence also $\Kmc_\succ\bravemodels{G} q(\ans)$) and $\Kmc_\succ\not\bravemodels{P} q(\ans)$ implies $\Kmc_\succ\not\bravemodels{G} q(\ans)$ (hence also $\Kmc_\succ\not\armodels{G} q(\ans)$).

\section{Experiments}
Our experimental evaluation aims at (i) evaluating the impact of adopting globally-optimal repairs, (ii) assessing the grounded semantics, and (iii) comparing the different approaches of the same problem in terms of runtime. In more detail, we consider the following questions:
\begin{itemize}
\item What proportion of the intersection of optimal repairs is given by the grounded repair? And by the trivially P-IAR facts (\ie $\Gamma(\emptyset)$)? 
What is the overhead in term of runtime to compute the grounded repair instead of $\Gamma(\emptyset)$?

\item Given a semantics, what is the impact of 
localization or binary conflicts-specific encodings?

\item What is the impact of using globally-optimal repairs instead of Pareto- or completion-optimal ones, both in terms of the answers obtained and runtime overhead?

\end{itemize}

\subsection{Experimental Setting}

\begin{table}\footnotesize
\begin{tabular*}{0.47\textwidth}{l@{\extracolsep{\fill}}lrrrrr}
\toprule
 &&$\!\cap_{S}\!$& $\!\Gamma^1\!\setminus\!\cap_{S}\!$ & $\!\Gamma^2\!\setminus\!\Gamma^1\!$ & $\!\Gamma^3\!\setminus\!\Gamma^2\!$ & $\!\Gamma^4\!\setminus\!\Gamma^3\!$ \\
 \midrule
\mn{u1c1} & \multirow{2}{*}{$\ \succ^{ns}$} & 73351 &921 & 1365 & 0 & 0
\\
\mn{u1c50} &&43779& 7289 & 21319 & 1329 & 8
\\
\midrule
\mn{u1c1} &\multirow{2}{*}{$\ \succ^{ss}$}&73351& 1225 & 881 & 0 & 0
\\
\mn{u1c50} &&43779& 8493 & 9368 & 11 & 0
\\
\midrule
\mn{u1c1} &\multirow{2}{*}{$\ \succ^{nb}$}& 73131 & 989 & 1406 & 4 & 0
\\
\mn{u1c50} && 43612 & 10777 & 18629 & 52 & 0
\\
\bottomrule
\end{tabular*}
\caption{Number of facts in the different parts of the grounded repair: $\cap_{S}=\bigcap_{\Rmc\in\reps{\Kmc}} \Rmc$, $\Gamma^i=\Gamma^i(\emptyset)$.
}\label{tab:grounded-repair-incremental}
\end{table}

\begin{table}\footnotesize
    \begin{tabular*}{0.47\textwidth}{l@{\extracolsep{\fill}}rrrr}
        \toprule
        & \multicolumn{2}{c}{$\succ^{ss}$} & \multicolumn{2}{c}{$\succ^{ns}$} \\
        & $\!\Gmc\!\setminus\!\cap_{S}\!$ & $\!\Gamma^1\!\setminus\!\cap_{S}\!$ 
        & $\!\Gmc\!\setminus\!\cap_{S}\!$ & $\!\Gamma^1\!\setminus\!\cap_{S}\!$ \\
        \midrule
        \mn{u1c1}  & 1.58   & 0.16 & 2.10   & 0.16 \\
        \mn{u1c20} & 14.92  & 1.26 & 21.22  & 1.39 \\
        \mn{u1c50} & 61.80  & 3.53 & 126.39 & 3.80 \\
        \midrule
        \mn{u5c1}  & 7.85   & 0.68 & 10.70  & 0.68 \\
        \mn{u5c20} & 129.31 & 7.12 & 240.34 & 7.11 \\
        \mn{u5c50} & 304.11 & 14.55 & oom & 14.69 \\
        \midrule
        \mn{u20c1} & 23.75  & 2.41 & 47.05  & 2.39 \\
        \mn{u20c20} & oom  & 36.07 & oom  &  33.13\\
        \bottomrule
    \end{tabular*}
    \caption{Time (in seconds) to compute the facts that belong to the grounded repair $\Gmc$ or to $\Gamma^1=\Gamma(\emptyset)$ (among those that belong to some conflict) from the precomputed attack relation.}
    \label{tab:time-grounded-compact}
\end{table}

\begin{table}[t]\footnotesize
\begin{tabular*}{0.47\textwidth}{l@{\extracolsep{\fill}}lrrrrr}
\toprule
&&$\!\cap_{S}\!$& $\!\Gmc\!\setminus\!\cap_{S}\!$ & $\!\cap_{P}\!\setminus\!\Gmc\!$ & $\!\cap_{G}\!\setminus\!\cap_{P}\!$ & $\!\cap_{C}\!\setminus\!\cap_{G}\!$ \\
\midrule
\mn{u1c1} &\multirow{3}{*}{ $\ \succ^{ns}$ }&73351 &2286 & 0 & 0 & 0
\\
\mn{u1c10} &&65310& 10134 & 3 & 22 & 1
\\
\mn{u1c20}& &56646 &18507 & 82 & 46 & 7
\\
\bottomrule
\end{tabular*}
\caption{
Size of grounded repair $\Gmc$ and optimal repair intersections: $\cap_{X}=\bigcap_{\Rmc\in\xreps{\Kmc_\succ}}\Rmc$ for $X\in\{S, P,G,C\}$.
}\label{tab:grounded-vs-iar-facts}
\end{table}

Following \citeauthor{DBLP:conf/kr/BienvenuBIJ2025}~\shortcite{DBLP:conf/kr/BienvenuBIJ2025}, 
we translated into ASP programs a subset of the \mn{ORBITS} benchmark \cite{DBLP:conf/kr/BienvenuB22} which provides several conflict sets, priority relations, and potential answers associated with their causes built from the 
\mn{CQAPri} benchmark \cite{DBLP:phd/hal/Bourgaux16}, a synthetic benchmark adapted from LUBM$^\exists_{20}$ \cite{DBLP:conf/semweb/LutzSTW13} to evaluate inconsistency-tolerant query answering over DL-Lite KBs.   
To experiment also with non-binary conflicts, 
\citeauthor{DBLP:conf/kr/BienvenuBIJ2025}~\shortcite{DBLP:conf/kr/BienvenuBIJ2025} added a denial constraint that yields conflicts of size 10 and built some priority relations for this case using preference rules. 
\subsubsection{Datasets} The datasets of the \mn{CQAPri} benchmark are named 
\mn{uXcY}, with \mn{X} and \mn{Y} related to the size and the proportion of facts involved in some conflicts respectively, and are such that $\mn{uXcY}\subseteq\mn{uXcY'}$ for $\mn{Y}\leq\mn{Y'}$ and $\mn{uXcY}\subseteq\mn{uX'cY}$ for $\mn{X}\leq\mn{X'}$. 
Since our focus is on the use of the more demanding globally-optimal repairs, we mostly use the smallest datasets \mn{u1cY} with $\mn{Y}\in\{1, 5, 10, 20, 30, 50\}$, which contain from 75K to 78K facts. 
The proportion of facts involved in some (binary) conflict in these datasets ranges from 3\% to 44\% and the corresponding conflict sets contain from 2K to 81K conflicts, involving 2K to 34K facts. Forty non-binary conflicts are generated (in all datasets) when the additional denial constraint is considered. 
We also run a few experiments with some larger datasets \mn{u5cY} with $\mn{Y}\in\{1, 5, 10, 20\}$, with 12K to 231K conflicts involving 12K to 137K facts.  
\subsubsection{Priority Relations}In the binary conflicts case, we use the two priority relations of the \mn{ORBITS} benchmark: $\succ^{ss}$ is built from priority levels, hence is such that globally-, Pareto- and completion-optimal repairs coincide, while $\succ^{ns}$ is not score-structured and allows us to distinguish between the three kinds of repairs. Moreover, $\succ^{ss}$ assigns a priority between the two facts of about 80\% of the conflicts and this proportion is about 60\%-65\% for $\succ^{ns}$. For the non-binary conflicts case, we use a priority relation $\succ^{nb}$ resulting from some preference rules given by \citeauthor{DBLP:conf/kr/BienvenuBIJ2025}~\shortcite{DBLP:conf/kr/BienvenuBIJ2025} which assigns a priority between two facts that belong to some conflict in about 90\% of the cases. 
\subsubsection{Queries}
We use the 8 queries \citeauthor{DBLP:conf/kr/BienvenuBIJ2025}~\shortcite{DBLP:conf/kr/BienvenuBIJ2025} selected from the \mn{CQAPri} benchmark for having a lower number of potential answers.  The total number of potential answers over all queries ranges from 1,525 (on \mn{u1c1}) to 8,206 (on \mn{u5c20}).

\subsubsection{Setup}
All experiments were executed on a machine equipped with an Intel(R) Xeon(R) CPU E7-8880 v4 @ 2.20GHz, running Debian GNU/Linux 12, with memory and CPU (i.e., user+system) limited to 8GB and 600s. Time and memory usage have been measured with pyrunlim\footnote{pyrunlim is available at \url{https://github.com/alviano/python.git}}. 
As \ASPQ solver we used the \textsc{casper} system~\cite{DBLP:conf/aaai/CuteriMR26}\footnote{Other \ASPQ systems are available, such as \textsc{qasp}~\cite{DBLP:conf/lpnmr/AmendolaCRT22} and \textsc{pyqasp}~\cite{DBLP:journals/tplp/FaberMR23}, but \textsc{casper} performed better in our preliminary evaluation.}, and as ASP solver we used \textsc{clingo} \cite{DBLP:journals/corr/GebserKKS17}.

\subsubsection{Preprocessing}Reported times exclude the computation of the attack relation 
using $\Pi_{\mi{Attack}}$, as it can be considered a query-independent preprocessing task. 
Computing the attack relation took at most 5 seconds for the \mn{u1cY} datasets (small datasets), about 30 seconds for \mn{u5c20} (
largest dataset we used for the query answering task) and up to about one hour
for 
\mn{u20c50} (dataset with 2M facts, 46\% of facts involved in some conflict, and 3M conflicts, which we considered only for the task of computing the grounded repair).

\begin{table*}
    \centering\footnotesize
    \resizebox{\textwidth}{!}{    
    \begin{tabular}{ll|rr|r|rlrlrl|rlrlrl}
        \hline
           &                      & Triv & GR$\setminus$Triv  & Pot$\setminus$GR  & \multicolumn{2}{c}{P-AR$\setminus$GR}    & \multicolumn{2}{c}{G-AR$\setminus$GR}  & \multicolumn{2}{c|}{C-AR$\setminus$GR}  & \multicolumn{2}{c}{P-brave$\setminus$GR}   & \multicolumn{2}{c}{G-brave$\setminus$GR}   & \multicolumn{2}{c}{C-brave$\setminus$GR}\\
        \hline
        \mn{u1c1} & \multirow{5}{*}{$\succ^{ns}$}                     & 1465	       & 59	     &   1    & 0 &(0)	            & 0 &(0)	            & 0 &(0)	            & 0 &(0)                 & 0 &(0)                 & 0 &(0)\\
        \mn{u1c20} &                     & 937	            & 584	 &   35        & 1 &(0)	            & 0 &(10)            & 1 &(0)	            & 17 &(0)                & 16 &(6)                & 17 &(0)\\
        \mn{u1c50} &                     & 625	            & 875	&    141        & 26 &(0)	            & 9 &(82)            & 20 &(56)           & 79 &(0)                & 28 &(81)               & 38 &(40)\\
        \mn{u5c1} &                      & 7637	        & 316	&       10     & 0 &(0)	            & 0 &(0)	            & 0 &(0)             & 2 &(0)                 & 2 &(0)                 & 2 &(0)\\
        \mn{u5c20} &                     & 4947	        & 3035	 &        224   & 37 &(0)              & 6 &(96)	        & 32 &(32)           & 91 &(0)                & 51 &(151)              & 83 &(18)\\
        \hline
        \mn{u1c1} & \multirow{3}{*}{$\succ^{nb}$}             & 1447	        & 76	  &      2    & 0 &(0)               & 0 &(0)             & 0 &(0)             & 0 &(0)                 & 0 &(0)                 & 0 &(0)\\
        \mn{u1c20} &             & 940	            & 571	    & 45       & 0 &(0)               & 0 &(2)             & 0 &(0)             & 1 &(0)                 & 0 &(2)                 & 1 &(0)\\
        \mn{u1c50} &            & 680	            & 823	 &      138     & 0 &(0)               & 0 &(43)            & 0 &(0)             & 20 &(0)                & 13 &(37)               & 20 &(0)\\
        \hline
    \end{tabular}
    }
    \caption{Number of answers found trivially P-IAR (Triv), grounded (GR) but not trivially P-IAR, potential answers (Pot) not grounded, and X-AR or X-brave but not grounded. The number of potential answers for which we ran out of time (600s) is given in parenthesis. 
    }\label{tab:number-answers}
\end{table*}
\begin{figure*}
\centering
\includegraphics[width=0.4\textwidth,valign=t]{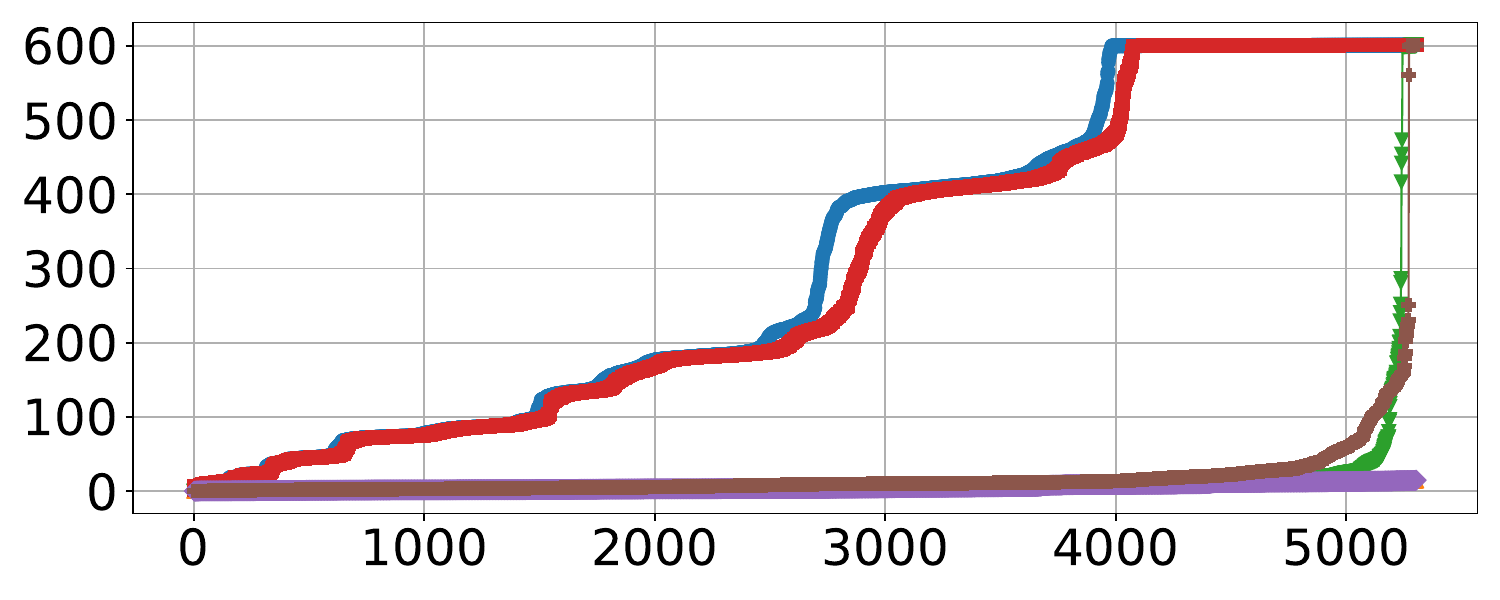}
\quad
\includegraphics[width=0.1\textwidth,valign=t]{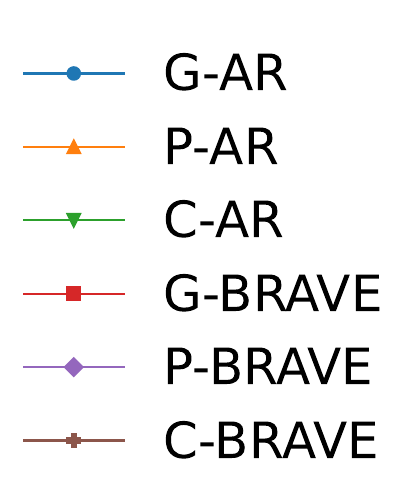}
\includegraphics[width=0.4\textwidth,valign=t]{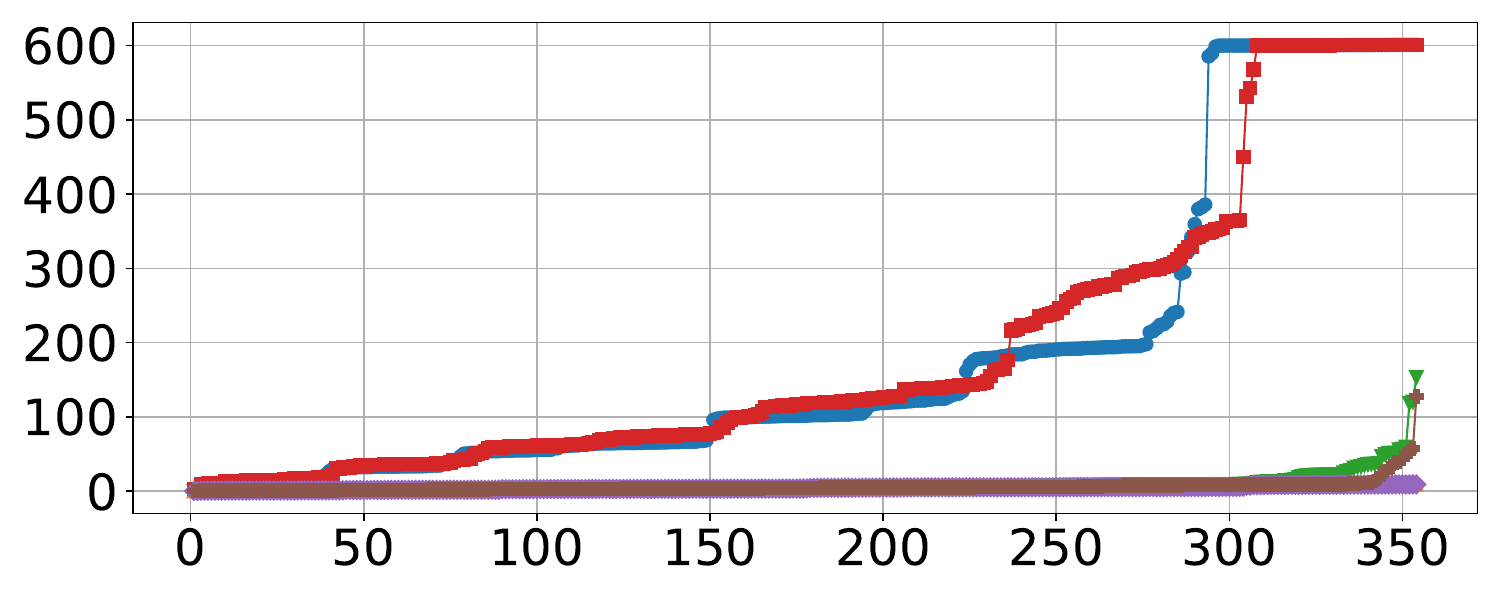}
\caption{Time (in seconds, y-axis) to decide for each $q(\ans)$, dataset \mn{uXcY}, and priority relation, whether $q(\ans)$ holds under X-AR/X-brave, with instances sorted by increasing solving time 
along the x-axis (\ie a point $(i,j)$ indicates that $i$ instances could be solved within $j$ seconds).\\(left) binary conflicts case (\mn{u1cY} and \mn{u5cY}, two priority relations) (right) non-binary conflicts case (\mn{u1cY}, one priority relation)}\label{fig:overall-runtime}
\end{figure*}

\subsection{Experimental Results}
In what follows, we summarize our main observations.
\begin{figure}
\includegraphics[width=0.155\textwidth]{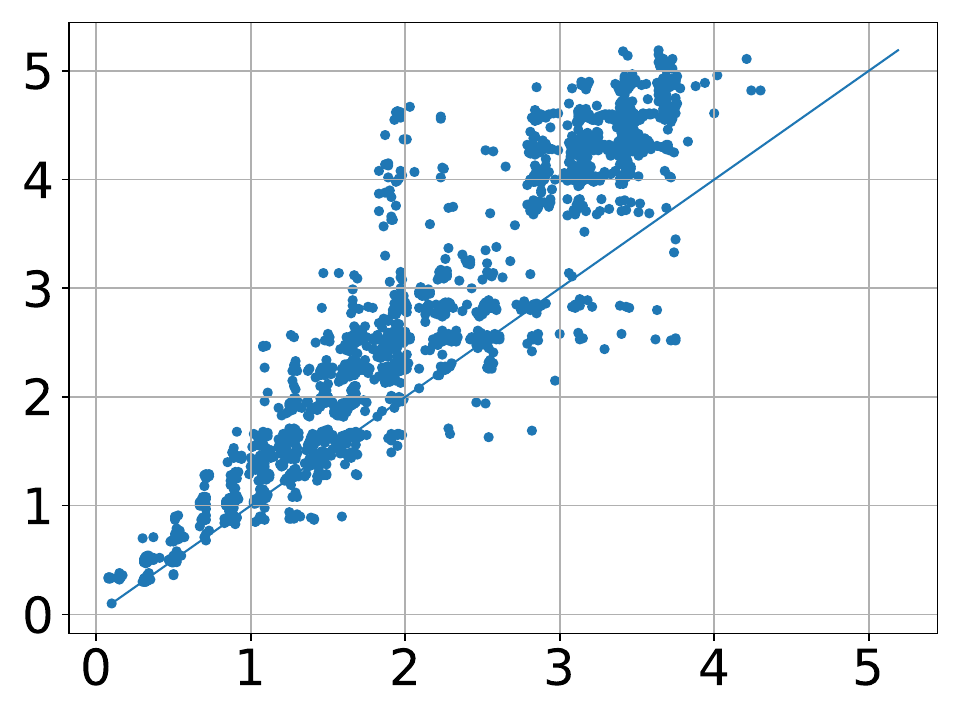}
\includegraphics[width=0.155\textwidth]{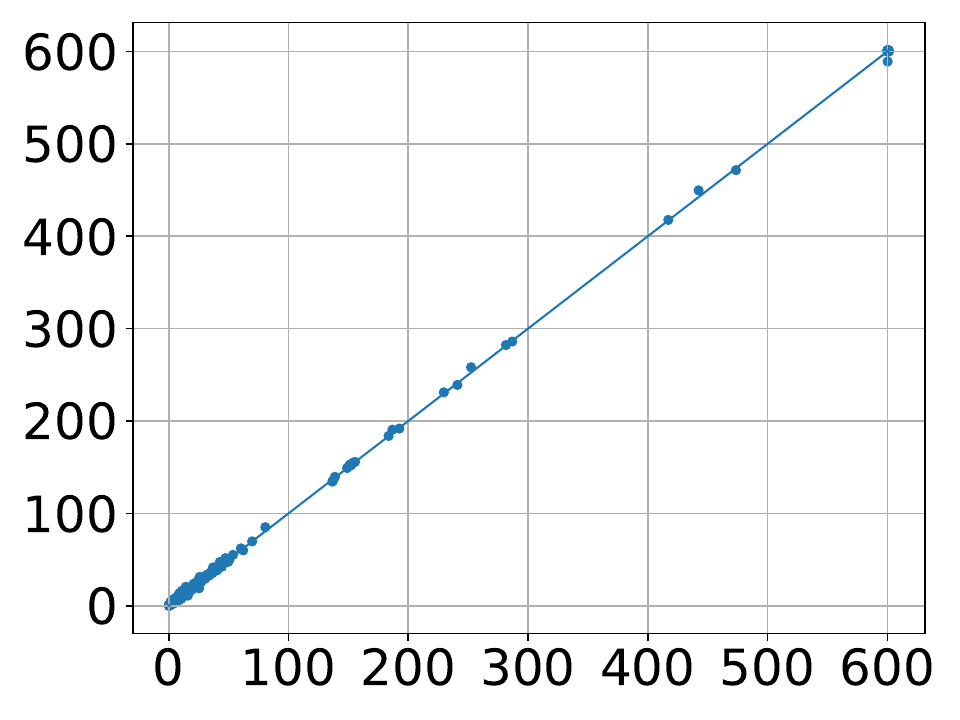}
\includegraphics[width=0.155\textwidth]{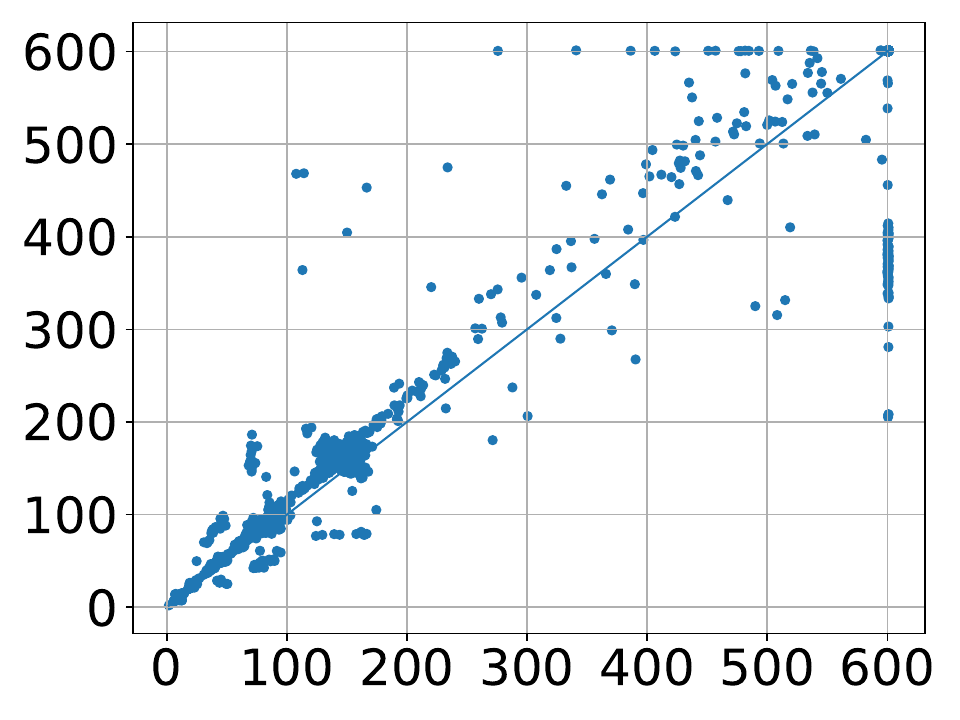}
\caption{Time for the generic encoding (y-axis) vs 
time for the encoding specific for binary conflicts (x-axis), both with localization, to decide for each $q(\ans)$, dataset \mn{uXcY}, and priority relation,  
if $q(\ans)$ holds under X-AR. (left) X=P (middle) X=C (right) X=G}\label{fig:impact-binary}
\end{figure}

\subsubsection{Tractable Approximations} Table~\ref{tab:grounded-repair-incremental} shows the number of plain (S-)IAR facts (not involved in any conflict, $\cap_{S}$), other trivially P-IAR facts ($\Gamma(\emptyset)$), and facts added at each step of the incremental computation of the grounded repair, for some example datasets and priority relations. 
Over all the cases we consider, 
the proportion of facts that belong to 
the grounded repair but which are not trivially P-IAR (\ie are in $\Gmc\setminus\Gamma(\emptyset)$) varies from 1\%-2\% on the datasets with a low proportion of facts in conflicts (\mn{uXc1}) to 15\%-31\% (depending on the priority relation) on the dataset with the highest proportion of facts in conflicts \mn{u1c50}. 
Table~\ref{tab:time-grounded-compact} gives examples of times needed to compute the grounded repair and the trivially P-IAR facts. Over all the cases we consider, 
computing $\Gmc$ instead of $\Gamma(\emptyset)$ from the attack relation multiplies the runtime by up to 34
but it remains below 600s. However, we encounter memory problems for larger datasets (note that for this task we also used datasets \mn{u20cY} of 2M facts). 
Table~\ref{tab:grounded-vs-iar-facts} shows the size of the grounded repair and how many facts are added by taking the intersections of the optimal repairs.  
Given the cost of checking whether each non-grounded fact holds under X-(I)AR, these numbers suggest the interest of using 
grounded 
rather than X-IAR for $X\in\{P,G,C\}$. This is also the reason that we did not conduct further experimental evaluations of the X-IAR semantics. 

\subsubsection{Impact of Localization} 
Unsurprisingly, localization is crucial. For example, on \mn{u1c10} with $\succ^{nb}$, we manage to decide for 99\% of the pairs of a potential answer (which does not hold under grounded) and a semantics (X-AR or X-brave  with $X\in\{P,G,C\}$) 
whether the answer holds under the semantics, while they all yield a time-out without localization. 
We adopt localization (using weak reachability for globally-optimal repairs and strong one for Pareto- and completion-optimal repairs) in the rest of the experiments.

\subsubsection{Impact of Specific Encoding for Binary Conflicts} 
Figure~\ref{fig:impact-binary} compares for each potential answer (which does not hold under grounded), dataset with binary conflicts, and priority relation, the time needed to decide if it holds under X-AR semantics with and without using the simplified encodings proposed in the case of binary conflicts. 
We can see that this simplification is indeed generally helpful when X=P, and mostly neutral when X=C.  When X=G, we observe a more diverse outcome: even if the binary conflicts-specific encoding improves the performance in a majority of cases, there are a significant number of cases where it is the opposite. Moreover, the difference between the two encodings is more important (in particular, the two encodings do not lead to time-out on the same instances).

\subsubsection{Semantics Comparison in Terms of Runtimes} 
Figure~\ref{fig:overall-runtime} shows the times needed to decide whether a potential answer which does not hold under grounded semantics holds under a given semantics for X-AR and X-brave with $X\in\{P,G,C\}$, over all the cases we consider, using localization and the specific encoding for binary conflicts when possible. 
Recall that in a cactus plot instances are sorted by solving time, so a point $(i,j)$ in the plot indicates that for the considered semantics,
there were $i$ instances which could be (individually) solved within $j$ seconds. 
These results confirm that using globally-optimal repairs is significantly more difficult in practice than using Pareto- or even completion-optimal repairs, and the difference becomes more pronounced as the proportion of facts involved in some conflicts increases.

\subsubsection{Semantics Comparison in Terms of Answers}
Table~\ref{tab:number-answers} shows the number of answers we obtain under the different semantics and the number of potential (non grounded) answers for which we did not manage to determine whether they hold under the considered semantics in our 600s time limit.  
A first observation is that the grounded semantics seems to be a very good approximation of the X-AR semantics ($X\in\{P,G,C\}$). This is key to achieving feasibility as it allows us to reduce the number of potential answers for which we need to call the procedures for the P/C-AR/brave semantics (cf.\ the column Pot$\setminus$GR which gives the number of potential answers for which we actually need to use the encodings for X-AR or X-brave semantics).  For X-brave answers however, we do observe cases where a large share of the answers cannot be obtained without using the designated encodings (in particular in the $\succ^{ss}$ case, which is omitted from the table as in this case the three kinds of optimal repair coincide). 
Another useful information is that the three kinds of repairs seem to often yield the same answers (recall that answers that hold under P-AR semantics are included in those that hold under G-AR semantics, which are included in those that hold under C-AR semantics, and the inclusions for the X-brave semantics are the other way around).

\section{Conclusion}
In this paper, we have presented the first implementation of globally-optimal repair-based 
semantics, using ASP(Q), and of the grounded semantics, making it possible for 
the first time to compare these semantics with the previously implemented semantics
based upon Pareto- and completion-optimal repairs. 
While we are 
able to compute query answers for some instances with globally-optimal repairs, the runtimes are significantly longer, with more timeouts, than for the Pareto-
and completion-optimal repairs, in line with the worst-case complexity. This suggests that even if we aim to compute G-AR and G-brave answers, a better strategy is to first test whether the candidate answer holds under P-AR semantics and P-brave semantics (which are the fastest optimal repair semantics)  and only call the G-AR and G-brave procedures if the status remains unresolved.

Our evaluation also highlighted 
the surprising effectiveness of  
the grounded 
semantics as an approximation of the optimal repair-based semantics, an insight which moreover can be applied also to non-ASP-based implementations. Importantly, the grounded repair can be computed as a preprocessing task, 
thus making it possible to employ a principled inconsistency-tolerant semantics with little overhead compared to standard query answering. 
This suggests the interest of developing highly optimized methods for computing and updating the grounded repair for very large datasets.

\section*{Acknowledgements}
This work was supported by the ANR AI Chair INTENDED (ANR-19-CHIA-0014); by the Italian Ministry of Industrial Development (MISE) under project EI-TWIN n. F/310168/05/X56 CUP B29J24000680005; and by the Italian Ministry of Research (MUR) under project PNRR FAIR - Spoke 9 - WP 9.1 CUP H23C22000860006.

\section*{AI Declaration}
The authors have not employed any Generative AI tools.

\bibliographystyle{kr}
\bibliography{kr-sample}

\appendix

\section{Proofs for Section~\ref{sec:encodings}}

\PropBrave*
\begin{proof}[Proof Sketch]
\noindent$(\Leftarrow)$ Assume that $\Pi^G_{\mi{brave}}$ has a quantified answer set $M$. Since $M\in AS(P_1)$, the set $\Rmc$ that contains all facts whose identifiers are in $\{\mt{A}\mid \mt{inRepair(A)}\in M\}$ (extended with facts that do not appear in any conflict nor cause) is a repair of $\Kmc$ that contains a cause for $q(\ans)$. Moreover, since $P_2\cup\fix{P_1}{M}$ is incoherent, $\Rmc$ does not have any global improvement. Hence $\mathcal{K}_{\succ} \bravemodels{G} q(\vec{a})$. 

\noindent$(\Rightarrow)$ Conversely, assume that $\Pi^G_{\mi{brave}}$ is incoherent. Let $\Rmc\in\reps{\Kmc}$ be such that $\Rmc$ contains some cause for $q(\ans)$. The interpretation $M$ that contains (i) \mt{inRepair(A)} for every identifier \mt{A} of $\alpha\in\Rmc$ and (ii) the atoms (deterministically) derived by $P_1\setminus\{\mt{\{inRepair(A)\}\impl reachable(A)}\}$ is such that $M\in AS(P_1)$. Since $M$ is such that $P_2\cup\fix{P_1}{M}$ is coherent (otherwise $\Pi^G_{\mi{brave}}$ will be coherent), $\Rmc$ admits a global improvement. Thus $\mathcal{K}_{\succ} \not\bravemodels{G} q(\vec{a})$.
\end{proof}

\subsubsection{Proof of Theorem~\ref{ThmLocalizationGlobal}}
We prove Proposition~\ref{prop:ThmLocalizationGlobalFirst} and Proposition~\ref{prop:ThmLocalizationGlobalSecond} below, which correspond two the first and second item of Theorem~\ref{ThmLocalizationGlobal}, respectively. 
Recall that we consider a prioritized KB $\Kmc_\succ$ with $\Kmc=(\Dmc,\Tmc)$, a subset $\Bmc\subseteq\Dmc$, and the set of facts reachable from $\Bmc$ (\wrt `weak' reachability) $\Bmc_r=\reachweak(\Bmc)$, which is used to define the prioritized KB $\Kmc^{\Bmc_r}_{\succ^{\Bmc_r}}$ by $\Kmc^{\Bmc_r}=(\Bmc_r,\Tmc)$ and $\succ^{\Bmc_r}$ the restriction of $\succ$ to $\Bmc^r$. 

In the proofs of the propositions, we will use the \emph{set of minimal (\wrt $\succ$) facts of a conflict}: 
for every $\Cmc \in \conflicts{\Kmc}$ let $\minconf{\Cmc} = \{ \gamma | \forall \delta \in \Cmc, \gamma \not\succ \delta \}$. 

We recall that if $\succ'$ is a total priority relation, there is a unique Pareto-, globally- and completion-optimal repair \cite{DBLP:journals/amai/StaworkoCM12}, so one can define completion-optimal repairs equivalently as those that are Pareto-optimal or globally-optimal \wrt some completion $\succ'$ of $\succ$.

\begin{proposition}\label{prop:ThmLocalizationGlobalFirst}
     For $X\in\{P,G,C\}$, if $\Rmc\in\xreps{\Kmc_\succ}$, then $\Rmc\cap\Bmc_r\in\xreps{\Kmc^{\Bmc_r}_{\succ^{\Bmc_r}}}$.
\end{proposition}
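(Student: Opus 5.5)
The plan is to prove the three cases separately, starting with two facts about the restricted KB. Write $\Rmc_r=\Rmc\cap\Bmc_r$. First, the conflicts of $\Kmc^{\Bmc_r}$ are exactly the conflicts of $\Kmc$ contained in $\Bmc_r$, since consistency is monotone and minimality is inherited. Second, $\Bmc_r$ is closed under weak reachability. If $\Cmc\in\conflicts{\Kmc}$ and $\alpha\in\Cmc\cap\Bmc_r$ is such that some $\beta\in\Cmc\setminus\{\alpha\}$ satisfies $\alpha\not\succ\beta$, then $\Cmc\subseteq\Bmc_r$. The contrapositive is the form we will use: if a conflict $\Cmc$ meets $\Bmc_r$ in $\alpha$ but $\Cmc\not\subseteq\Bmc_r$, then $\alpha\succ\beta$ for every $\beta\in\Cmc\setminus\{\alpha\}$. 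In particular $\alpha$ is the unique element of $\Cmc$ dominating all others, and $\Cmc\cap\Bmc_r=\{\alpha\}$, because any other $\alpha'$ in the intersection would have to dominate $\alpha$, contradicting acyclicity.

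\textbf{Maximality.} The set $\Rmc_r$ is consistent because it is a subset of $\Rmc$. Take $\alpha\in\Bmc_r\setminus\Rmc$. Since $\Rmc$ is maximal, there is a conflict $\Cmc\ni\alpha$ with $\Cmc\setminus\{\alpha\}\subseteq\Rmc$. If $\Cmc\subseteq\Bmc_r$, this conflict already blocks $\alpha$ in the restricted KB. Otherwise $\alpha$ dominates every other element of $\Cmc$. Here I use optimality of $\Rmc$, in its Pareto form, which also covers G and C since $\greps{}\subseteq\creps{}$ is false but $\creps{}\subseteq\greps{}\subseteq\preps{}$ holds. Pareto-optimality rules out the improvement $\Rmc\cup\{\alpha\}\setminus\{\beta\mid\alpha\succ\beta\}$, so $\Rmc\cup\{\alpha\}\setminus\{\beta\mid\alpha\succ\beta\}$ must be inconsistent. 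It therefore contains a conflict $\Cmc'\ni\alpha$ with $\Cmc'\setminus\{\alpha\}\subseteq\Rmc$ and $\alpha\not\succ\beta$ for all $\beta\in\Cmc'\setminus\{\alpha\}$. By the closure property, $\Cmc'\subseteq\Bmc_r$, so $\Cmc'$ blocks $\alpha$ in $\Kmc^{\Bmc_r}$. Hence $\Rmc_r$ is a repair of the restricted KB.

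\textbf{Optimality.} I argue by contraposition: an improvement $\Bmc'$ of $\Rmc_r$ inside $\Bmc_r$ will be lifted to an improvement of $\Rmc$ of the same kind. The candidate is $\Bmc''=\Bmc'\cup(\Rmc\setminus\Bmc_r)$. The differences are unchanged: $\Bmc''\setminus\Rmc=\Bmc'\setminus\Rmc_r$ and $\Rmc\setminus\Bmc''=\Rmc_r\setminus\Bmc'$. So the Pareto and global improvement conditions, including $\Bmc''\neq\Rmc$, transfer immediately, and $\Bmc''$ is an improvement of $\Rmc$ of the same kind provided it is consistent.

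\textbf{Consistency of $\Bmc''$ (main obstacle).} Suppose $\Bmc''$ contains a conflict $\Cmc$. $\Cmc$ cannot lie inside $\Bmc_r$, since $\Bmc'$ is consistent, and cannot lie inside $\Rmc$. So $\Cmc$ contains some $\alpha\in\Bmc'\setminus\Rmc_r\subseteq\Bmc_r$ together with elements outside $\Bmc_r$. By the closure property, $\alpha\succ\beta$ for all other $\beta\in\Cmc$, and $\Cmc\cap\Bmc_r=\{\alpha\}$. Therefore $\Cmc\setminus\{\alpha\}\subseteq\Rmc\setminus\Bmc_r\subseteq\Rmc$.

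This situation is not itself contradictory, so I repair $\Bmc''$ rather than discard it. I drop from $\Rmc\setminus\Bmc_r$ every fact dominated by some element of $\Bmc'\setminus\Rmc_r$; call this set $\Emc$. This removes every such $\Cmc$, since each has at least one element outside $\Bmc_r$ and all those elements are dominated by $\alpha$. The same argument, using $\Cmc\cap\Bmc_r=\{\alpha\}$, shows the resulting set is consistent.

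The cost is that the removed facts now appear in $\Rmc\setminus\Bmc''$. For a Pareto improvement this is harmless only if there is a single witness dominating them all. For a global improvement, each removed fact needs some dominator in $\Bmc''\setminus\Rmc$, and elements of $\Bmc'\setminus\Rmc_r$ serve exactly this role, so the global case goes through directly.

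For the Pareto case I will choose $\Bmc'$ minimal, adding only the witness $\beta$ together with maximally many facts of $\Rmc_r$. Then I will check, using the closure property, that removal is needed only for facts dominated by $\beta$. I expect this to be the delicate step, possibly requiring an argument that any other added fact can itself be dropped.

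For $X=C$, I extend a completion $\succ'$ witnessing $\Rmc$ only to the restricted priority. The restriction of $\succ'$ is a completion of $\succ^{\Bmc_r}$. The P-argument applied to the total relation $\succ'$, which respects the closure property because it extends $\succ$, shows that $\Rmc_r$ is Pareto-optimal w.r.t.\ that restriction. This yields completion-optimality, using that P, G and C optimality coincide for total priority relations.
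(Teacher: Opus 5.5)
Your proposal is correct and follows essentially the paper's proof: an improvement of $\Rmc\cap\Bmc_r$ is lifted by adding $\Rmc\setminus\Bmc_r$ and then deleting facts of $\Rmc$ that are dominated by newly added facts and that break every conflict this creates (the paper's $\widetilde{\Rmc_0}\setminus\Gamma_{\widetilde{\Rmc_0}}$), and $X=C$ is reduced to $X=P$ via a completion, where your remark that the closure property transfers to $\succ'\supseteq\succ$ makes explicit a step the paper leaves tacit. The Pareto step you call delicate is immediate, since $\{\beta\}\cup(\Bmc'\cap\Rmc)$ is a consistent subset of $\Bmc'$ with the same $(\Rmc\cap\Bmc_r)\setminus\Bmc'$ (exactly the paper's choice of $\widetilde{\Rmc_0}$), and your observation that $\Cmc\cap\Bmc_r$ is a single fact dominating the rest of $\Cmc$ makes the paper's descending-chain argument for $X=G$ unnecessary.
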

\begin{proof}
\noindent\textbf{Case $\mathbf{X=P}$.} 
Let $\Rmc\in\preps{\Kmc_\succ}$ and suppose for a contradiction that $\Rmc \cap \Bmc_r\notin\preps{\Kmc^{\Bmc_r}_{\succ^{\Bmc_r}}}$: there exists $\Rmc_0 \subseteq \Bmc_r$ such that there exists $\beta \in \Rmc_0 \setminus (\Rmc \cap \Bmc_r)$, such that $\beta \succ \alpha$ for all $\alpha \in  (\Rmc \cap \Bmc_r) \setminus \Rmc_0$.

Let $\widetilde{\Rmc_0} = \{\beta\} \cup (\Rmc_0 \cap \Rmc) \cup (\Rmc \setminus \Bmc_r)$. We build a Pareto improvement of $\Rmc$ \wrt $\Kmc_\succ$ by removing selected elements from $\widetilde{\Rmc_0}$ to obtain a contradiction with $\Rmc\in\preps{\Kmc_\succ}$.

\begin{claim}\label{lem:gamma_pareto}
    For every $\Cmc \in \conflicts{\Kmc}$ such that $\Cmc \subseteq \widetilde{\Rmc_0}$, there exists $\gamma_{\Cmc}\in\Cmc$ such that $\beta \succ \gamma_\Cmc$. 
\end{claim}
\noindent{\emph{Proof of claim.}}
    Let $\Cmc \in \conflicts{\Kmc}$ be such that $\Cmc \subseteq \widetilde{\Rmc_0}$. 
    First, note that $\beta \in \Cmc$ as otherwise $\Cmc \subseteq \Rmc$, which contradicts the $\Tmc$-consistency of $\Rmc$. 
    Moreover, $\Cmc \cap (\Rmc \setminus \Bmc_r) \neq \emptyset$ as otherwise $\Cmc \subseteq \Rmc_0$ which contradicts the $\Tmc$-consistency of $\Rmc_0$. 
    By definition of $\Bmc_r=\reachweak(\Bmc)$, and since $\beta\in\Bmc_r$, if for every $\gamma\in\Cmc$, $\beta\not\succ\gamma$, it would hold that $\Cmc\subseteq \Bmc_r$. Since we showed that $\Cmc \cap (\Rmc \setminus \Bmc_r) \neq \emptyset$, it follows that 
    there exists $\gamma_\Cmc \in \Cmc$ such that $\beta \succ \gamma_\Cmc$. \emph{(end of claim proof)}
    \smallskip

Let $\Gamma_{\widetilde{\Rmc_0}} = \{\gamma_\Cmc \mid \Cmc \in \conflicts{\Kmc}, \Cmc \subseteq \widetilde{\Rmc_0}, \beta\succ\gamma_\Cmc\}$ and \mbox{$\Rmc'_0 = \widetilde{\Rmc_0} \setminus \Gamma_{\widetilde{\Rmc_0}}$.} 
As $\Rmc'_0$ is obtained by removing at least one fact per conflict included in $\widetilde{\Rmc_0}$, it is $\Tmc$-consistent. Also:
    \begin{align*}
        \Rmc'_0 \setminus \Rmc &= (\widetilde{\Rmc_0} \setminus \Gamma_{\widetilde{\Rmc_0}}) \setminus \Rmc\\
        &= ((\{\beta\} \cup (\Rmc_0 \cap \Rmc) \cup (\Rmc \setminus \Bmc_r)) \setminus \Gamma_{\widetilde{\Rmc_0}}) \setminus \Rmc\\
        &= \{\beta\}\text{   since $\beta \not \in \Gamma_{\widetilde{\Rmc_0}}$ and $\beta \not \in \Rmc$;}
        \\
        \Rmc \setminus \Rmc'_0 &= \Rmc \setminus (\widetilde{\Rmc_0} \setminus \Gamma_{\widetilde{\Rmc_0}}) \\
        &= ((\Rmc \cap \Bmc_r) \setminus \Rmc_0) \cup \Gamma_{\widetilde{\Rmc_0}}\text{   since $\Gamma_{\widetilde{\Rmc_0}} \subseteq \Rmc$.}
    \end{align*}

By construction of $\Rmc'_0$, for every $ \alpha \in \Rmc \setminus \Rmc'_0$, $\beta \succ \alpha$. Hence $\Rmc'_0$ is a Pareto improvement of $\Rmc$, which contradicts $\Rmc\in\preps{\Kmc_\succ}$. 
Therefore $\Rmc \cap \Bmc_r \in \preps{\Kmc^{\Bmc_r}_{\succ^{\Bmc_r}}}$.

\smallskip
\noindent\textbf{Case $\mathbf{X=G}$.} 
Let $\Rmc\in\greps{\Kmc_\succ}$ and suppose for a contradiction that $\Rmc \cap \Bmc_r\notin\greps{\Kmc^{\Bmc_r}_{\succ^{\Bmc_r}}}$: there exists $\Rmc_0 \subseteq \Bmc_r$ such that $\Rmc_0\neq \Rmc \cap \Bmc_r$ and for every $\alpha \in  (\Rmc \cap \Bmc_r) \setminus \Rmc_0$, there exists $\beta \in \Rmc_0 \setminus (\Rmc \cap \Bmc_r)$ such that $\beta \succ \alpha$.

Let $\widetilde{\Rmc_0} = \Rmc_0  \cup (\Rmc \setminus \Bmc_r)$. We build a global improvement of $\Rmc$ \wrt $\Kmc_\succ$ by removing selected elements from $\widetilde{\Rmc_0}$ to obtain a contradiction with $\Rmc\in\greps{\Kmc_\succ}$.

We first show that for every $\Cmc \in \conflicts{\Kmc}$ such that $\Cmc \subseteq \widetilde{\Rmc_0}$, there exist $\beta_\Cmc \in \Cmc\cap(\Rmc_0 \setminus (\Rmc \cap \Bmc_r))$ and $\gamma_\Cmc \in\Cmc\setminus (\Rmc_0 \setminus (\Rmc \cap \Bmc_r))$ such that $\beta_\Cmc \succ \gamma_\Cmc$. 
Let $\Cmc \in \conflicts{\Kmc}$ be such that $\Cmc \subseteq \widetilde{\Rmc_0}$. First, note that $\Cmc \cap (\Rmc_0 \setminus (\Rmc \cap \Bmc_r)) \neq \emptyset$ as otherwise $\Cmc \subseteq \Rmc$, which contradicts the $\Tmc$-consistency of $\Rmc$. 
Moreover, $\Cmc \cap (\Rmc \setminus \Bmc_r) \neq \emptyset$ as otherwise $\Cmc \subseteq \Rmc_0$, which contradicts the $\Tmc$-consistency of $\Rmc_0$. 

\begin{claim}\label{lem:min_global}
   For every $\Cmc \in \conflicts{\Kmc}$ such that $\Cmc \subseteq \widetilde{\Rmc_0}$, $\minconf{\Cmc} \subseteq \Rmc \setminus \Bmc_r$
\end{claim}
\noindent{\emph{Proof of claim.}}
By definition of $\Bmc_r=\reachweak(\Bmc)$, if there was $\alpha\in\minconf{\Cmc}$ such that $\alpha\in\Bmc_r$, it would hold that $\Cmc\subseteq\Bmc_r$. Since $\Cmc \cap (\Rmc \setminus \Bmc_r) \neq \emptyset$, it follows that $\minconf{\Cmc} \subseteq \Rmc \setminus \Bmc_r$. 
 \emph{(end of claim proof)}
    \smallskip

We define ${(\beta_i)}_{i \in \mathbb{N}} \in \Cmc^{\mathbb{N}}$ with $\beta_0 \in \Cmc \cap (\Rmc_0 \setminus (\Rmc \cap \Bmc_r))$ (such $\beta_0$ exists since $\Cmc \cap (\Rmc_0 \setminus (\Rmc \cap \Bmc_r)) \neq \emptyset$) and for $i \in \mathbb{N}$:
    \begin{itemize}
        \item if $\beta_i \in \Rmc_0$ then let $\beta_{i+1} \in \Cmc$ be such that $\beta_i \succ \beta_{i+1}$ (such $\beta_{i+1}$ exists because $\minconf{\Cmc} \subseteq \Rmc \setminus \Bmc_r$ and $\Rmc_0\subseteq\Bmc_r$ so $\beta_i\notin\minconf{\Cmc}$);
        \item otherwise ($\beta_i \not \in \Rmc_0$), let $\beta_{i+1} = \beta_i$.
    \end{itemize}
As $\Dmc$ is finite and $\succ$ is acyclic the sequence is ultimately constant. Let $j = \min_{i\in \mathbb{N}}{(\beta_{i+1} \not \in \Rmc_0 \setminus (\Rmc \cap \Bmc_r))}$. 
 Let $\beta_\Cmc = \beta_j $ and $\gamma_\Cmc = \beta_{j+1}$. By definition of $j$ we have $\beta_\Cmc \in \Rmc_0 \setminus (\Rmc \cap \Bmc_r)$ and $\gamma_\Cmc \not \in \Rmc_0 \setminus (\Rmc \cap \Bmc_r)$, and by construction of $(\beta_i)_{i\in\mathbb{N}}$, it holds that $\beta_\Cmc \succ \gamma_\Cmc$.

 Let $\Gamma_{\widetilde{\Rmc_0}} = \{\gamma_\Cmc \mid \Cmc \in \conflicts{\Kmc}, \Cmc \subseteq \widetilde{\Rmc_0}\}$ and \mbox{$\Rmc'_0 = \widetilde{\Rmc_0} \setminus \Gamma_{\widetilde{\Rmc_0}}$}. As $\Rmc'_0$ is obtained by removing at least one fact per conflict included in $\widetilde{\Rmc_0}$,
it is $\Tmc$-consistent. Also:
\begin{align*}
         \Rmc'_0 \setminus \Rmc &= (\widetilde{\Rmc_0} \setminus \Gamma_{\widetilde{\Rmc_0}}) \setminus \Rmc \\
        &= (\Rmc_0 \setminus (\Rmc \cap \Bmc_r)) \setminus \Gamma_{\widetilde{\Rmc_0}} \\
        &= \Rmc_0 \setminus (\Rmc \cap \Bmc_r)
        \end{align*}
        as for all $\gamma_\Cmc \in \Gamma_{\widetilde{\Rmc_0}}$ we have $\gamma_\Cmc \not \in \Rmc_0 \setminus (\Rmc \cap \Bmc_r)$, and
\begin{align*}
        \Rmc \setminus \Rmc'_0 &= \Rmc \setminus (\widetilde{\Rmc_0} \setminus \Gamma_{\widetilde{\Rmc_0}})\\
        &= \Rmc \setminus ((\Rmc_0  \cup (\Rmc \setminus \Bmc_r)) \setminus \Gamma_{\widetilde{\Rmc_0}}) \\
         &= ((\Rmc \cap \Bmc_r) \setminus \Rmc_0) \cup \Gamma_{\widetilde{\Rmc_0}}.
        \end{align*}

Let $\alpha \in \Rmc \setminus \Rmc'_0$:
\begin{itemize}
\item  If $\alpha\in (\Rmc \cap \Bmc_r) \setminus \Rmc_0$, by assumption on $\Rmc_0$, there exists $\beta \in \Rmc_0 \setminus (\Rmc \cap \Bmc_r)=\Rmc'_0 \setminus \Rmc$ such that $\beta \succ \alpha$. 

\item Otherwise, $\alpha\in \Gamma_{\widetilde{\Rmc_0}}$ is equal to some $\gamma_\Cmc$ and $\beta=\beta_\Cmc$ is such that $\beta \in \Rmc_0 \setminus (\Rmc \cap \Bmc_r)= \Rmc'_0 \setminus \Rmc$ (by definition of the $\beta_\Cmc$) and $\beta \succ \alpha$ (since $\beta_\Cmc\succ\gamma_\Cmc$). 
\end{itemize}
Hence $\Rmc'_0$ is a global improvement of $\Rmc$, which contradicts $\Rmc\in\greps{\Kmc_\succ}$. 
Therefore $\Rmc \cap \Bmc_r \in \greps{\Kmc^{\Bmc_r}_{\succ^{\Bmc_r}}}$.

\smallskip
\noindent\textbf{Case $\mathbf{X=C}$.} 
Let $\Rmc\in\creps{\Kmc_\succ}$, \ie $\Rmc\in\preps{\Kmc_{\succ'}}$ for some $\succ'$ completion of $\succ$. Thus, by the case $X=P$ of the proposition that we have already shown, we have $\Rmc \cap \Bmc_r\in\preps{\Kmc_{{\succ'}^{\Bmc_r}}^{\Bmc_r}}$. As ${\succ'}^{\Bmc_r}$ is a completion of ${\succ}^{\Bmc_r}$ in $\Kmc^{\Bmc_r}$ we directly obtain that $\Rmc \cap \Bmc_r\in\creps{\Kmc_{{\succ}^{\Bmc_r}}^{\Bmc_r}}$.
\end{proof}

\begin{proposition}\label{prop:ThmLocalizationGlobalSecond}
    For $X\in\{P,G,C\}$, if $\Rmc\in \xreps{\Kmc^{\Bmc_r}_{\succ^{\Bmc_r}}}$, then there exists $\Rmc'\in\xreps{\Kmc_\succ}$ such that $\Rmc=\Rmc'\cap\Bmc_r$. 
\end{proposition}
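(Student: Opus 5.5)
Given $\Rmc\in\xreps{\Kmc^{\Bmc_r}_{\succ^{\Bmc_r}}}$, the plan is to extend it by an optimal repair of the ``outside part'' $\Dmc\setminus\Bmc_r$ and check that the union is optimal in $\Kmc_\succ$. The closure property of $\Bmc_r=\reachweak(\Bmc)$ does the work. If a conflict $\Cmc$ meets $\Bmc_r$ in some fact $\alpha\in\minconf{\Cmc}$, then $\alpha\not\succ\beta$ for every $\beta\in\Cmc$, so $\Cmc\subseteq\Bmc_r$. Hence any conflict not contained in $\Bmc_r$ has all its minimal facts outside $\Bmc_r$.

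Concretely, define $\Dmc_o=\Dmc\setminus\Bmc_r$ and let $\Kmc'$ be the KB on $\Dmc_o$ restricted to conflicts $\Cmc$ with $\Cmc\not\subseteq\Bmc_r$ and $\Cmc\cap\Bmc_r\subseteq\Rmc$, with each such conflict replaced by $\Cmc\setminus\Bmc_r$. These are the residual conflicts once $\Rmc$ is fixed. Then take an $X$-optimal repair $\Rmc_o$ of this residual prioritized hypergraph, with $\succ$ restricted to $\Dmc_o$. Its existence is standard, e.g.\ by the greedy completion construction. Set $\Rmc'=\Rmc\cup\Rmc_o$. By construction $\Rmc'$ is $\Tmc$-consistent and $\Rmc'\cap\Bmc_r=\Rmc$. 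Maximality follows from maximality of $\Rmc$ and $\Rmc_o$: a fact of $\Bmc_r\setminus\Rmc$ is blocked already inside $\Bmc_r$.

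For $X=C$ the plan is to avoid the generic argument. Take a completion $\succ_1$ of $\succ^{\Bmc_r}$ witnessing $\Rmc$, and extend it to a completion $\succ'$ of $\succ$ on all of $\Dmc$ so that it is consistent with $\succ$ on cross conflicts. This should be possible since $\succ\cup\succ_1$ is acyclic, using the closure property to rule out cycles through outside facts. Then take the unique optimal repair $\Rmc'$ for $\succ'$; its intersection with $\Bmc_r$ is optimal for $\succ_1$ by Proposition~\ref{prop:ThmLocalizationGlobalFirst} case P. Since $\succ_1$ is total, the optimal repair for it is unique, so the intersection is exactly $\Rmc$.

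For $X=P$ and $X=G$ I would argue by contradiction. Suppose $\Bmc'$ is a Pareto (resp.\ global) improvement of $\Rmc'$. Split into two cases.
\begin{itemize}
\item \textbf{Case A:} the improvement changes nothing inside $\Bmc_r$. Then it projects to an improvement of $\Rmc_o$ in the residual KB, which is a contradiction.
\item \textbf{Case B:} the improvement changes something inside $\Bmc_r$. Then $\Bmc'\cap\Bmc_r$ should be an improvement of $\Rmc$ in $\Kmc^{\Bmc_r}$. Every $\alpha\in\Rmc\cap\Bmc_r$ removed must be beaten by some $\beta\in\Bmc'\setminus\Rmc'$. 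The difficulty is to show that such a $\beta$ can be chosen in $\Bmc_r$. Since $\succ$ only relates facts sharing a conflict, $\beta\succ\alpha$ with $\alpha\in\Bmc_r$ and $\beta\notin\Bmc_r$ requires a cross conflict containing both. I would argue via the closure property that $\alpha$ is then not minimal in that conflict, but that alone is not a contradiction.
\end{itemize}
So the main obstacle is this cross-boundary domination in the $G$ case (and the analogous Pareto case with the single $\beta$). My first attempt would be to refine the choice of $\Rmc_o$: take $\Rmc_o$ globally-optimal with respect to a preference that already accounts for $\Rmc$ being fixed. If that fails, I would instead choose $\Rmc'$ as a $\succ$-optimal repair among those whose intersection with $\Bmc_r$ equals $\Rmc$. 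I would then show that any improvement of $\Rmc'$ yields, by the chain argument used in Proposition~\ref{prop:ThmLocalizationGlobalFirst}, an improvement lying in $\Bmc_r$ or one preserving $\Rmc$, contradicting the respective optimality.
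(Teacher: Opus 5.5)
Your Case~B is left unresolved, and that is where the proof actually lives. The obstacle comes from the closure property you extracted. The property you state (a conflict meeting $\Bmc_r$ in a fact of $\minconf{\Cmc}$ lies inside $\Bmc_r$) is exactly the closure property of $\reachstrong$. With only that property, the cross-boundary domination you describe is a real obstacle. This is consistent with the paper leaving open whether the $G$ case holds for $\reachstrong$. Your fallback strategies (re-choosing $\Rmc_o$, or optimizing over repairs extending $\Rmc$) do not supply a missing ingredient and are too vague to count as a proof.

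What you are not using is the actual definition of $\reachweak$: an edge $(\alpha,\Cmc\setminus\{\alpha\})$ exists as soon as \emph{some} $\beta\in\Cmc\setminus\{\alpha\}$ has $\alpha\not\succ\beta$. Suppose $\beta\succ\alpha$ with $\alpha\in\Bmc_r$. Then $\alpha\not\succ\beta$ by acyclicity, and $\alpha,\beta$ lie in a common conflict $\Cmc$ by the definition of a priority relation. Hence $\Cmc\subseteq\Bmc_r$ and $\beta\in\Bmc_r$. In words: no fact outside $\Bmc_r$ ever dominates a fact inside $\Bmc_r$. This is precisely the step the paper's $G$ case relies on (``as $\alpha\not\succ\beta$ and $\alpha\in\Bmc_r$ we must have $\Cmc\subseteq\Bmc_r$ hence $\beta\in\Bmc_r$'').

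With this fact, your decomposition goes through by a route different from the paper's. In Case~B there are two subcases:
\begin{itemize}
\item If the improvement $\Bmc'$ removes some fact of $\Rmc$, every fact dominating a removed one lies in $(\Bmc'\cap\Bmc_r)\setminus\Rmc$. So $\Bmc'\cap\Bmc_r$ is a Pareto (resp.\ global) improvement of $\Rmc$ in $\Kmc^{\Bmc_r}_{\succ^{\Bmc_r}}$.
\item If it removes nothing of $\Rmc$, then $\Bmc'\cap\Bmc_r\supsetneq\Rmc$ is consistent, contradicting maximality of $\Rmc$.
\end{itemize}
Case~A is fine as you describe it: any $X$-optimal $\Rmc_o$ of the residual instance works, and the greedy construction shows one exists.

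The paper instead builds $\Rmc^*$ greedily starting from $\Rmc$, which is exactly your $\Rmc\cup\Rmc_o$ with greedy $\Rmc_o$. It then argues directly: for $P$, a descending-chain argument based on $\minconf{\Cmc}\cap\Rmc=\emptyset$; for $G$, first $\Rmc\subseteq\Rmc_0$ using the fact above, then an induction along the greedy order. Your split into an inside part and a residual part is cleaner once the key fact is in place.

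Your $C$ case silently needs the same fact. $\succ\cup\succ_1$ is acyclic because a cycle through outside facts would have to re-enter $\Bmc_r$ along a $\succ$-edge, which is impossible. The minimal-fact property alone does not exclude this. Given that, your use of the first item plus uniqueness of the optimal repair for a total priority is a valid alternative to the paper's reduction to the $P$ case.
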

\begin{proof}

\smallskip
\noindent\textbf{Case $\mathbf{X=P}$.} 
Let $\Rmc\in\preps{\Kmc_{{\succ}^{\Bmc_r}}^{\Bmc_r}}$. 
Following the proof of Lemma 9 from the extended version of \cite{DBLP:conf/kr/BienvenuB22}, we build some $\Rmc^*\subseteq\Dmc$ as required by the proposition as follows: 
let $\Rmc' = \Rmc$, $\Dmc' = \Dmc$, and repeat the following steps until $\Dmc'$ is empty:
    \begin{itemize}
        \item  Choose $\alpha \in \Dmc'$ such that $\beta \not \succ \alpha$ for all $\beta \in  \Dmc'$.
        \item If $\Rmc' \cup \{\alpha\}$ is $\Tmc$-consistent, add $\alpha$ to $\Rmc'$.
        \item Remove $\alpha$ from $\Dmc'$.
    \end{itemize}   
Let $\Rmc^*$ be the final $\Rmc'$. 
First, note that $\Rmc\subseteq\Rmc^*\cap\Bmc_r$ by initialization of $\Rmc^*$ and definition of $\Rmc$. Also $\Rmc^*\cap\Bmc_r\subseteq\Rmc$ as otherwise there would exist $\alpha \in (\Rmc^*\cap\Bmc_r) \setminus \Rmc$ but then $\Rmc \cup \{\alpha\}$ would be $\Tmc$-consistent and hence a Pareto improvement of $\Rmc$. Thus $\Rmc=\Rmc^*\cap\Bmc_r$.

We show that $\Rmc^* \in \preps{\Kmc_{\succ}}$. Suppose for a contradiction there exists $\Rmc_0 \subseteq \Dmc$ such that there exists $\beta \in \Rmc_0 \setminus \Rmc^*$ such that $\beta \succ \alpha$ for every $\alpha \in \Rmc^* \setminus \Rmc_0$. Since $\beta \not \in \Rmc^*$, by construction of $\Rmc^*$, there exists $\Cmc \in \conflicts{\Kmc}$ such that $\beta \in \Cmc$ and $\Cmc \setminus \{\beta\} \subseteq \Rmc '$ when $\beta$ was chosen in the construction of $\Rmc^*$ (otherwise $\beta$ would have been added to $\Rmc '$). 

\begin{claim}\label{lem:minconf_pareto}
    For every $\Cmc \in \conflicts{\Kmc}$ such that $\beta \in \Cmc$, $\minconf{\Cmc} \cap \Rmc = \emptyset$.
\end{claim}
\noindent{\emph{Proof of claim.}}
    Suppose for a contradiction $\minconf{\Cmc} \cap \Rmc \neq \emptyset$, \ie there exists $\gamma\in\Cmc\cap\Rmc\subseteq \Cmc\cap\Bmc_r$ such that $\gamma\not\succ\delta$ for every $\delta\in\Cmc$. By definition of $\Bmc_r=\reachweak(\Bmc)$, $\Cmc \subseteq \Bmc_r$, and in particular $\beta \in \Bmc_r$. Hence $\Rmc_0 \cap \Bmc_r$ is a Pareto improvement of $\Rmc$, which contradicts the assumption $\Rmc\in \preps{\Kmc^{\Bmc_r}_{\succ^{\Bmc_r}}}$. 
 \emph{(end of claim proof)}
    \smallskip

As $\Rmc_0$ is $\Tmc$-consistent, $\Cmc \setminus \{\beta\} \not \subseteq \Rmc_0$ so there is $\alpha_0 \in \Cmc \cap (\Rmc^* \setminus \Rmc_0)$. Hence $\beta \succ \alpha_0$ (by definition of $\Rmc_0$ and $\beta$). 

Since $\alpha_0$ has been chosen after $\beta$ in the construction of $\Rmc^*$ and was already in $\Rmc'$ when $\beta$ was considered, we necessarily have that $\alpha_0 \in \Rmc$. Hence, since $\minconf{\Cmc} \cap \Rmc = \emptyset$, there exists $\alpha_1 \in \Cmc$ with $\alpha_0 \succ \alpha_1$. 

Because $\beta \succ \alpha_0 \succ \alpha_1$, $\alpha_1$ was chosen after $\beta$ in the construction of $\Rmc^*$ thus necessarily $\alpha_1 \in \Rmc$. Thus we can iterate the argument to create an infinite chain $\beta \succ \alpha_0 \succ \alpha_1 \succ \ldots $ with all $\alpha_i \in \Rmc$. As $\succ$ is acyclic and $\Dmc$ is finite we obtain a contradiction. 
Therefore $\Rmc^* \in \preps{\Kmc_{\succ}}$.

\smallskip
\noindent\textbf{Case $\mathbf{X=G}$.} 
Let $\Rmc\in\greps{\Kmc_{{\succ}^{\Bmc_r}}^{\Bmc_r}}$.
As we did in the case $X=P$, let $\Rmc' = \Rmc$, $\Dmc' = \Dmc$, and repeat the following steps until $\Dmc'$ is empty:
    \begin{itemize}
        \item  Choose $\alpha \in \Dmc'$ such that $\beta \not \succ \alpha$ for all $ \beta \in  \Dmc'$.
        \item If $\Rmc' \cup \{\alpha\}$ is $\Tmc$-consistent, add $\alpha$ to $\Rmc'$.
        \item Remove $\alpha$ from $\Dmc'$.
    \end{itemize}
Let $\Rmc^*$ be the final $\Rmc'$. 
As in the case $X=P$, we obtain  
$\Rmc=\Rmc^*\cap\Bmc_r$.

We show that $\Rmc^* \in \greps{\Kmc_{\succ}}$. Suppose for a contradiction there exists $\Rmc_0 \subseteq \Dmc$ such that $\Rmc_0\neq \Rmc^*$ and for every $\alpha \in \Rmc^* \setminus \Rmc_0$, there exists $\beta \in \Rmc_0 \setminus \Rmc^*$ such that $ \beta \succ \alpha$.

Assume for a contradiction that there exists $\alpha \in \Rmc \setminus \Rmc_0$. Since $\Rmc \setminus \Rmc_0\subseteq\Rmc^*\setminus\Rmc_0$, there exists $\beta \in \Rmc_0 \setminus \Rmc^*$ such that $ \beta \succ \alpha$. 
By definition of $\succ$, $ \beta \succ \alpha$ implies that there exists $\Cmc \in \conflicts{\Kmc_\succ}$ such that $\{\alpha,\beta\} \subseteq \Cmc$. 
In particular, as $\alpha \not \succ \beta$ and $\alpha \in \Bmc_r$ we must have $\Cmc \subseteq \Bmc_r$ hence $\beta \in \Bmc_r$.
Therefore for every $\alpha \in \Rmc \setminus (\Rmc_0 \cap \Bmc_r)$ there exists $\beta \in (\Rmc_0 \cap \Bmc_r) \setminus \Rmc$ such that $\beta \succ \alpha$, which contradicts $\Rmc \in \greps{\Kmc_{{\succ}^{\Bmc_r}}^{\Bmc_r}}$.
Thus $\Rmc \subseteq \Rmc_0$. Hence $\Rmc = \Rmc_0 \cap \Bmc_r$ (otherwise we have a contradiction with $\Rmc$ being inclusion-maximal).

Let $\alpha_0$ be the first element of $(\Rmc^* \cup \Rmc_0) \setminus \Bmc_r$ considered in the construction of $\Rmc^*$.
    \begin{itemize}
        \item If $\alpha_0 \in \Rmc^* \setminus \Rmc_0$ then there exists $\beta \in \Rmc_0 \setminus \Rmc^*$ such that $\beta \succ \alpha_0$ but then $\beta \not \in \Bmc_r$ because as we previously showed $\Rmc^* \cap \Bmc_r = \Rmc = \Rmc_0 \cap \Bmc_r$ thus $\Rmc_0 \setminus \Rmc^* \subseteq \Dmc \setminus \Bmc_r$. Hence $\beta$ has been considered before $\alpha_0$ which contradicts our assumption.
        \item If $\alpha_0 \in \Rmc_0 \setminus \Rmc^*$ then there exists $\Cmc \in \conflicts{\Kmc}$ such that $\alpha_0 \in \Cmc$ and $\Cmc \setminus \{\alpha_0\} \subseteq \Rmc = \Rmc^* \cap \Bmc_r$ (otherwise $\alpha_0$ would have been added) but then $\Cmc \setminus \{\alpha_0\} \subseteq \Rmc_0 \cap \Bmc_r$ as $\Rmc = \Rmc_0 \cap \Bmc_r$. Hence $\Cmc \subseteq \Rmc_0$ and as $\Cmc \in \conflicts{\Kmc}$ this contradicts the fact that $\Rmc_0$ is $\Tmc$-consistent.
        \item Thus $\alpha_0 \in \Rmc^* \cap \Rmc_0$. 
    \end{itemize}

    Suppose we proved that the $i$ first elements $\alpha_0, \ldots, \alpha_{i-1}$ of $(\Rmc^* \cup \Rmc_0) \setminus \Bmc_r$ considered in the construction of $\Rmc^*$  are in $\Rmc^* \cap \Rmc_0$, and let $\alpha_i  \in (\Rmc^* \cup \Rmc_0) \setminus \Bmc_r$ be the next one.
    \begin{itemize}
        \item If $\alpha_i \in \Rmc^* \setminus \Rmc_0$ then there exists $\beta \in \Rmc_0 \setminus \Rmc^*$ such that $\beta \succ \alpha_i$ but then $\beta \not \in \Bmc_r$ has been considered before $\alpha_i$ which contradicts our assumption.
        \item If $\alpha_i \in \Rmc_0 \setminus \Rmc^*$ then there exists $\Cmc \in \conflicts{\Kmc}$ such that $\alpha_i \in \Cmc$ and $\Cmc \setminus \{\alpha_i\} \subseteq \Rmc \cup \{\alpha_0, \ldots , \alpha_{i-1}\}$ but then (because $\Rmc = \Rmc_0 \cap \Bmc_r$ and $\{\alpha_0, \ldots, \alpha_{i-1}\} \subseteq \Rmc_0$) we have $\Cmc \subseteq \Rmc_0$. As $\Cmc\in \conflicts{\Kmc}$ this contradicts the fact that $\Rmc_0$ is $\Tmc$-consistent.
        \item Thus $\alpha_i \in \Rmc^* \cap \Rmc_0$. 
    \end{itemize}

    Therefore we have recursively that $\Rmc^* = \Rmc_0$ which contradicts our assumption.

\smallskip
\noindent\textbf{Case $\mathbf{X=C}$.} 
Let $\Rmc\in\creps{\Kmc_{{\succ}^{\Bmc_r}}^{\Bmc_r}}$. 
There exists a completion ${\succ'}^{\Bmc_r}$ of ${\succ}^{\Bmc_r}$ in $\Kmc^{\Bmc_r}$ such that $\Rmc\in\preps{\Kmc_{{\succ'}^{\Bmc_r}}^{\Bmc_r}}$. Let $\succ'$ be a completion of $\succ$ in $\Kmc$ that agrees with ${\succ'}^{\Bmc_r}$ on $\Kmc^{\Bmc_r}$. 
We showed that there exists $\Rmc' \in \preps{\Kmc_{\succ'}}$ such that $\Rmc = \Rmc'\cap \Bmc_r$. Since $\succ'$ is a completion of $\succ$ in $\Kmc$, we obtain $\Rmc' \in \creps{\Kmc_{\succ}}$. 
\end{proof}

\subsubsection{Proof of Theorem~\ref{ThmLocalizationParetoComp}}
Proposition~\ref{prop:ThmLocalizationParetoCompFirst} and Proposition~\ref{prop:ThmLocalizationParetoCompSecond} below 
correspond to the first and second item of Theorem~\ref{ThmLocalizationParetoComp}, respectively. 
Here $\Bmc_r=\reachstrong(\Bmc)$, is the set of facts reachable from $\Bmc$ \wrt `strong' reachability. 

\begin{proposition}\label{prop:ThmLocalizationParetoCompFirst}
    For $X\in\{P,G,C\}$, if $\Rmc\in\xreps{\Kmc_\succ}$, then $\Rmc\cap\Bmc_r\in\xreps{\Kmc^{\Bmc_r}_{\succ^{\Bmc_r}}}$.
\end{proposition}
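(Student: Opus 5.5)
The plan is to reuse the three arguments of Proposition~\ref{prop:ThmLocalizationGlobalFirst} nearly verbatim. Those arguments use only one property of $\Bmc_r$: a closure property under conflicts. The first step is to check that $\reachstrong(\Bmc)$ has the same closure property in the form those proofs need. That form is: \emph{if $\Cmc\in\conflicts{\Kmc}$ and some $\alpha\in\Cmc\cap\Bmc_r$ satisfies $\alpha\not\succ\delta$ for every $\delta\in\Cmc$, then $\Cmc\subseteq\Bmc_r$.} This holds because such an $\alpha$ gives $\Cmc\setminus\{\alpha\}\rightsquigarrow\alpha$. So $(\alpha,\Cmc\setminus\{\alpha\})$ is an edge of the strong hypergraph, and all of $\Cmc$ is reachable from $\alpha\in\Bmc_r$. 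Equivalently, if $\alpha\in\Cmc\cap\Bmc_r$ and $\Cmc\not\subseteq\Bmc_r$, then $\alpha\succ\gamma$ for some $\gamma\in\Cmc$. Equivalently again, $\minconf{\Cmc}\cap\Bmc_r=\emptyset$ whenever $\Cmc\not\subseteq\Bmc_r$.

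For $X=P$, I would repeat the proof of Proposition~\ref{prop:ThmLocalizationGlobalFirst} word for word. Assume a Pareto improvement $\Rmc_0\subseteq\Bmc_r$ of $\Rmc\cap\Bmc_r$ with witness $\beta$, and form $\widetilde{\Rmc_0}=\{\beta\}\cup(\Rmc_0\cap\Rmc)\cup(\Rmc\setminus\Bmc_r)$. In Claim~\ref{lem:gamma_pareto}, every conflict $\Cmc\subseteq\widetilde{\Rmc_0}$ contains $\beta\in\Bmc_r$ and meets $\Rmc\setminus\Bmc_r$. The closure property then yields $\gamma_\Cmc\in\Cmc$ with $\beta\succ\gamma_\Cmc$. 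Removing these $\gamma_\Cmc$ produces a Pareto improvement of $\Rmc$ in $\Kmc_\succ$, exactly as before. For $X=G$, the only use of reachability is Claim~\ref{lem:min_global}: $\minconf{\Cmc}\subseteq\Rmc\setminus\Bmc_r$ for every conflict $\Cmc\subseteq\widetilde{\Rmc_0}$. This is precisely the closure property, since such a $\Cmc$ meets $\Rmc\setminus\Bmc_r$ and so $\Cmc\not\subseteq\Bmc_r$. The descending-chain construction of $\beta_\Cmc\succ\gamma_\Cmc$ and the verification that $\widetilde{\Rmc_0}\setminus\Gamma_{\widetilde{\Rmc_0}}$ is a global improvement then go through unchanged.

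The case $X=C$ is where I expect the real obstacle. Previously we reduced to the $P$ case for a completion $\succ'$, but strong reachability depends on the priority relation. Indeed, $\reachstrong$ computed with respect to $\succ'$ may be strictly smaller than $\Bmc_r$, so we cannot simply cite the $P$ case for $\Kmc_{\succ'}$. To handle this, I would state the $P$ argument in a slightly more general form. Take any priority relation $\succ''\supseteq\succ$ on $\Kmc$ and any $\Bmc_r$ satisfying the closure property with respect to $\succ$. Then $\Rmc\in\preps{\Kmc_{\succ''}}$ implies $\Rmc\cap\Bmc_r\in\preps{\Kmc^{\Bmc_r}_{\succ''^{\Bmc_r}}}$. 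The proof is unchanged, because the closure property gives $\beta\succ\gamma_\Cmc$, hence $\beta\succ''\gamma_\Cmc$, and this is all that is used. Applying this with $\succ''=\succ'$ gives $\Rmc\cap\Bmc_r\in\preps{\Kmc^{\Bmc_r}_{{\succ'}^{\Bmc_r}}}$. Since ${\succ'}^{\Bmc_r}$ is a completion of $\succ^{\Bmc_r}$ and completion-optimal coincides with Pareto-optimal for some completion, we conclude $\Rmc\cap\Bmc_r\in\creps{\Kmc^{\Bmc_r}_{\succ^{\Bmc_r}}}$.
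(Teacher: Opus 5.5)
Your proposal is correct and matches the paper's proof: the paper also reuses the argument of Proposition~\ref{prop:ThmLocalizationGlobalFirst} unchanged, re-justifying only Claims~\ref{lem:gamma_pareto} and~\ref{lem:min_global} via the closure property of $\reachstrong(\Bmc)$, and it handles $X=C$ by passing to a completion $\succ'$ and invoking the $P$ case. Your extra care in the $C$ case is justified and is not a different route. The paper simply cites the $P$ case for $\Kmc_{\succ'}$, but $\reachstrong$ computed \wrt $\succ'$ can be strictly smaller than $\Bmc_r$; your observation that the $P$ argument only needs $\beta\succ\gamma_\Cmc$ (hence $\beta\succ'\gamma_\Cmc$) for the fixed $\Bmc_r$ is exactly what makes that citation legitimate.
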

\begin{proof}
The proof is the same as that of Proposition~\ref{prop:ThmLocalizationGlobalFirst}, except that the proofs of Claims~\ref{lem:gamma_pareto} and~\ref{lem:min_global}, which are the only places where 
$\Bmc_r = \reachweak(\Bmc)$ is used, now use $\Bmc_r = \reachstrong(\Bmc)$. 
\end{proof}

\begin{proposition}\label{prop:ThmLocalizationParetoCompSecond}
    For $X\in\{P,C\}$, if $\Rmc\in \xreps{\Kmc^{\Bmc_r}_{\succ^{\Bmc_r}}}$, then there exists $\Rmc'\in\xreps{\Kmc_\succ}$ such that $\Rmc=\Rmc'\cap\Bmc_r$.
\end{proposition}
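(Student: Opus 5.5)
We reuse the construction from the proof of Proposition~\ref{prop:ThmLocalizationGlobalSecond}, now with $\Bmc_r=\reachstrong(\Bmc)$. The goal is to check that the only use of weak reachability in the cases $X=P$ and $X=C$ can be replaced by strong reachability.

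\textbf{Case $X=P$.} Let $\Rmc\in\preps{\Kmc^{\Bmc_r}_{\succ^{\Bmc_r}}}$. We build $\Rmc^*$ greedily as before: start from $\Rmc'=\Rmc$ and $\Dmc'=\Dmc$, and repeatedly pick a $\succ$-maximal remaining fact (one not dominated by any other fact of $\Dmc'$), adding it whenever consistency is preserved.

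The equality $\Rmc=\Rmc^*\cap\Bmc_r$ follows exactly as in the weak case. That argument uses only the $\Tmc$-consistency of $\Rmc\cup\{\alpha\}$ for $\alpha\in\Bmc_r$ and the fact that adding such an $\alpha$ would be a Pareto improvement. No property of reachability is involved.

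For Pareto-optimality of $\Rmc^*$, we again assume a Pareto improvement $\Rmc_0$ with witness $\beta$, and take a conflict $\Cmc\ni\beta$ with $\Cmc\setminus\{\beta\}$ contained in the current $\Rmc'$ at the moment $\beta$ is considered. The chain argument $\beta\succ\alpha_0\succ\alpha_1\succ\cdots$ inside $\Rmc$ needs only Claim~\ref{lem:minconf_pareto}, namely $\minconf{\Cmc}\cap\Rmc=\emptyset$ for conflicts $\Cmc$ containing $\beta$. This claim is the only place where reachability is used, so it is the main obstacle.

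To reprove it, let $\gamma\in\minconf{\Cmc}\cap\Rmc$. Since $\gamma\not\succ\delta$ for every $\delta\in\Cmc\setminus\{\gamma\}$, we have $\Cmc\setminus\{\gamma\}\rightsquigarrow\gamma$. So $(\gamma,\Cmc\setminus\{\gamma\})$ is an edge of the strong hypergraph, and since $\gamma\in\Bmc_r$ we get $\Cmc\subseteq\reachstrong(\Bmc)=\Bmc_r$. In particular $\beta\in\Bmc_r$.

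Then $\Rmc_0\cap\Bmc_r$ is a Pareto improvement of $\Rmc$ in the localized KB. It is $\Tmc$-consistent and contains $\beta\notin\Rmc$. Every $\alpha\in\Rmc\setminus(\Rmc_0\cap\Bmc_r)=\Rmc\setminus\Rmc_0\subseteq\Rmc^*\setminus\Rmc_0$ satisfies $\beta\succ\alpha$. This contradicts the choice of $\Rmc$, so the claim holds. The key point is that the claim only ever follows edges out of a $\succ$-minimal fact of the conflict, and such edges lie in the strong hypergraph as well as the weak one.

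\textbf{Case $X=C$.} We argue exactly as in Proposition~\ref{prop:ThmLocalizationGlobalSecond}. Take a completion ${\succ'}^{\Bmc_r}$ witnessing $\Rmc\in\creps{\cdot}$, so that $\Rmc$ is Pareto-optimal with respect to it, and extend it to a completion $\succ'$ of $\succ$ on $\Kmc$.

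One subtlety remains: strong reachability must be computed with respect to $\succ'$ for the $X=P$ case to apply to $\Kmc_{\succ'}$. However, $\succ'\supseteq\succ$, so the strong hypergraph for $\succ'$ has fewer edges. Hence $\reachstrong^{\succ'}(\Bmc)\subseteq\Bmc_r$, which could fail to equal $\Bmc_r$.

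To handle this, we choose the extension so that the relevant argument still goes through. The Pareto claim above only used $\gamma\in\minconf{\Cmc}$ taken with respect to the relation in force, here $\succ'$. A $\succ'$-minimal element is also $\succ$-minimal, because $\succ'\supseteq\succ$. Therefore the strong edge $(\gamma,\Cmc\setminus\{\gamma\})$ exists for $\succ$, and $\Cmc\subseteq\Bmc_r$ still holds.

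So we rerun the $X=P$ argument directly with $\succ'$ on $\Dmc$, keeping $\Bmc_r$ defined via $\succ$ and checking Claim~\ref{lem:minconf_pareto} as just described. This yields $\Rmc^*\in\preps{\Kmc_{\succ'}}\subseteq\creps{\Kmc_\succ}$ with $\Rmc^*\cap\Bmc_r=\Rmc$.
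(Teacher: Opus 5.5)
Your $X=P$ case is the paper's argument: only Claim~\ref{lem:minconf_pareto} uses reachability, and a $\succ$-minimal fact $\gamma$ of $\Cmc$ gives the strong edge $(\gamma,\Cmc\setminus\{\gamma\})$. That part is fine.

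The gap is in $X=C$, at ``extend it to a completion $\succ'$ of $\succ$ on $\Kmc$''. Such an extension exists only if $\succ\cup{\succ'}^{\Bmc_r}$ is acyclic. You never check this, and with strong reachability it can fail, because a $\succ$-path may leave $\Bmc_r$ and re-enter it. Under weak reachability the step is sound. If $x\in\reachweak(\Bmc)$ and $y\succ x$ with $x,y$ in a common conflict $\Cmc$, then $x\not\succ y$, so the weak edge from $x$ covers $\Cmc$ and $y\in\Bmc_r$. Hence every $\succ$-path ending in $\Bmc_r$ stays in $\Bmc_r$, and any cycle would lie inside the local completion. The paper's one-line proof of this case (``same proof as Proposition~\ref{prop:ThmLocalizationGlobalSecond}'') relies on the same extension without justification, so it has the same hole. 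The subtlety you did address ($\succ'$-minimal implies $\succ$-minimal) is correct, but it is not the crux.

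In fact the $X=C$ part of the statement is false, so no patch of your argument can work. Take $\Dmc=\{\alpha,\beta,\gamma,\delta,\epsilon\}$ with conflicts $\{\alpha,\beta\}$, $\{\beta,\gamma,\delta\}$, $\{\alpha,\gamma,\epsilon\}$ (realizable by denial constraints). Let $\beta\succ\gamma$, $\gamma\succ\alpha$, $\alpha\succ\epsilon$, and $\Bmc=\{\alpha\}$.
\begin{itemize}
\item Because $\alpha\succ\epsilon$ and $\beta\succ\gamma$, the only strong edges out of $\alpha$ and $\beta$ go through $\{\alpha,\beta\}$. So $\Bmc_r=\reachstrong(\Bmc)=\{\alpha,\beta\}$.
\item The localized KB therefore has the single conflict $\{\alpha,\beta\}$ and an empty priority. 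Completing with $\alpha\succ'\beta$ shows that $\{\alpha\}$ is completion-optimal there.
\item In $\Kmc$, every completion $\succ''$ must have $\beta\succ''\alpha$, since otherwise $\alpha\succ''\beta\succ\gamma\succ\alpha$ is a cycle.
\item For any repair $\Rmc'$ with $\Rmc'\cap\Bmc_r=\{\alpha\}$, the set $(\Rmc'\setminus\{\alpha,\gamma\})\cup\{\beta\}$ is $\Tmc$-consistent. It is a Pareto, hence global, improvement of $\Rmc'$ with respect to $\succ''$.
\end{itemize}
So no completion-optimal repair of $\Kmc_\succ$ restricts to $\{\alpha\}$; the only ones are $\{\beta,\gamma,\epsilon\}$ and $\{\beta,\delta,\epsilon\}$. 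A correct statement for $X=C$ needs weak reachability, as in Theorem~\ref{ThmLocalizationGlobal}, where the extension step is justified by the upward closure above. Strong and weak reachability coincide for binary conflicts, so the failure only concerns non-binary conflicts.
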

\begin{proof}
\smallskip
\noindent\textbf{Case $\mathbf{X=P}$.} 
The proof is the same as the one for Proposition~\ref{prop:ThmLocalizationGlobalSecond} in case $X=P$, except that the proof of Claim~\ref{lem:minconf_pareto}, which is the only place where the fact that $\Bmc_r = \reachweak(\Bmc)$ is used, now use $\Bmc_r = \reachstrong(\Bmc)$.

\smallskip
\noindent\textbf{Case $\mathbf{X=C}$.}
Same proof as Proposition~\ref{prop:ThmLocalizationGlobalSecond}.
\end{proof}

\section{Alternative ASP(Q) Encoding for G-AR}

For G-AR semantics, we considered the following alternative ASP(Q) encoding.
\begin{align*}
\Pi_{\mi{AR}}^{G\ 2}=&\forall^{st}\Pi_{\mi{ReachAll}}\Pi_{\mi{Rep}}\exists^{st}\Pi_{\mi{PartGImp}} : C_{AR}^{G\ 2}
\end{align*}
\begin{align*}
&\text{with }&C_{AR}^{G\ 2}=&\Pi_{\mi{SatIfCause}}\\&&&\{\impl\mt{not~sat, not~global\_improvement.}\}
\end{align*}
and $\Pi_{\mi{PartGImp}}$ is obtained from $\Pi_{\mi{GImp}}$ by removing the constraint ($ \impl $ \mt{not~global\_improvement}). 

This second program for G-AR, $\Pi_{\mi{AR}}^{G\ 2}$, intuitively checks whether for every $\Rmc\in\reps{\Kmc}$, there exists some $\Bmc\subseteq\Dmc$ such that either $\Rmc$ contains some cause for $q(\ans)$ or $\Bmc$ is a global improvement for $\Rmc$, so that $\Rmc\in\greps{\Kmc_\succ}$ implies $(\Rmc,\Tmc)\models q(\ans)$. Specifically, if we let $P_1=\Pi_{\mi{ReachAll}}\cup\Pi_{\mi{Rep}}$ and $P_2=\Pi_{\mi{PartGImp}}$, so that $\Pi_{\mi{AR}}^{G\ 2}=\forall^{st}P_1\exists^{st}P_2 : C_{AR}^{G\ 2}$, we know that $\Pi_{\mi{AR}}^{G\ 2}$ is coherent if for every $M_1\in AS(P_1)$, $\exists^{st}P_2\cup\fix{P_1}{M_1}:C_{AR}^{G\ 2}$ is coherent, \ie if for $P'_2=P_2\cup\fix{P_1}{M_1}$, there exists $M_2\in AS(P'_2)$ such that $C_{AR}^{G\ 2}\cup\fix{P'_2}{M_2}$ is coherent. We have seen that answer sets of $P_1$ correspond to repairs, and that given a repair $\Rmc$, $P_2$ guesses a $\Tmc$-consistent $\Bmc$ and derives $\mt{global\_improvement}$ iff $\Bmc$ is a global improvement of $\Rmc$, and the part $\Pi_{\mi{SatIfCause}}$ of the constraint $C_{AR}^{G\ 2}$ derives $\mt{sat}$ iff $\Rmc$ contains some cause. Hence, the remaining constraint,  
$\impl\mt{not~sat, not~global\_improvement}$, which ensures that either $\mt{sat}$ or $\mt{global\_improvement}$ has been derived, enforces the above characterization of $\Kmc_\succ\armodels{G}q(\ans)$.
\begin{proposition}
    $\mathcal{K}_{\succ} \armodels{G} q(\vec{a})$  iff $\Pi_{\mi{AR}}^{G\ 2}$ is coherent.
\end{proposition}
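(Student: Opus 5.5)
We prove both directions by unfolding the ASP(Q) semantics for a two-block program $\forall^{st}P_1\exists^{st}P_2:C$, reusing the correspondences already set up for $\Pi^G_{\mi{brave}}$. Write $P_1=\Pi_{\mi{ReachAll}}\cup\Pi_{\mi{Rep}}$ and $P_2=\Pi_{\mi{PartGImp}}$.

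\textbf{Correspondence lemmas.} First we isolate two facts, both argued exactly as in the proof sketch of the proposition for $\Pi^G_{\mi{brave}}$.

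(a) The map $M\mapsto \Rmc_M$ is a bijection between $AS(P_1)$ and the repairs of $\Kmc$, where $\Rmc_M$ is the set of facts $\alpha$ with $\mt{inRepair}(\mt{A})\in M$, extended with the facts occurring in no conflict and no cause. One direction: an answer set fixes the choice atoms, and all remaining atoms are determined by the stratified part. The constraints of $\Pi_{\mi{SubRep}}$ and the \mt{keepOut} constraint then encode consistency and maximality. Conversely, for a repair we take its \mt{inRepair} atoms together with their deterministic consequences.

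(b) For fixed $M_1$, the answer sets of $P_2'=P_2\cup\fix{P_1}{M_1}$ correspond to the $\Tmc$-consistent sets $\Bmc$ of reachable facts, via the $\mt{global\_imp}$ atoms. Each such answer set contains $\mt{global\_improvement}$ iff $\Bmc$ is a global improvement of $\Rmc_{M_1}$. This holds because \mt{fake\_improvement} is derived exactly when $\Bmc=\Rmc$ or some $\alpha\in\Rmc\setminus\Bmc$ lacks a dominating $\beta\in\Bmc\setminus\Rmc$. Note that $P_2'$ no longer contains the constraint $\impl\mt{not~global\_improvement}$, so it is always coherent; the empty set is one consistent choice.

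\textbf{Role of the constraint program.} Next we analyse $C_{AR}^{G\,2}\cup\fix{P_2'}{M_2}$. Since $\Pi_{\mi{SatIfCause}}$ is stratified and its input predicates are fixed by $M_2$, this program has at most one answer set. In it, $\mt{sat}$ is derived iff $\Rmc_{M_1}$ contains some cause. There is one subtlety: $\mt{sat}$ and $\mt{violatedCause}$ may already occur in $\mathcal{B}_{P_2'}$ and be fixed false by $\fix{}{}$. We must check that they do not occur in $P_2'$, which is true since $P_1$ and $P_2$ never mention them. Hence the remaining constraint is satisfied iff $\Rmc_{M_1}$ contains a cause or $\Bmc$ is a global improvement of $\Rmc_{M_1}$.

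\textbf{Combining.} The whole program is coherent iff for every repair $\Rmc$ there is a consistent $\Bmc$ such that $\Rmc$ contains a cause or $\Bmc$ globally improves $\Rmc$. Equivalently, every $\Rmc\in\reps{\Kmc}$ either contains a cause or is not globally optimal. By the cause characterization of G-AR, this is exactly $\Kmc_\succ\armodels{G}q(\ans)$.

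\textbf{Main obstacle.} The delicate point is the restriction of $\Bmc$ to reachable (mentioned) facts in (b). A global improvement of the full repair may use facts outside the input. We handle this as the paper does for the other encodings. Facts belonging to no conflict lie in every repair. If such a fact is dropped from a global improvement, adding it back preserves consistency and still yields a global improvement, because the condition quantifies over $\Rmc\setminus\Bmc$, which only shrinks. We can therefore assume without loss of generality that $\Bmc$ contains all of them, and then restrict attention to the mentioned facts.
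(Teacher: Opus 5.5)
Your proof is correct and follows the same route as the paper's justification. Answer sets of $P_1$ correspond to repairs; $\Pi_{\mi{PartGImp}}$ guesses a $\Tmc$-consistent $\Bmc$ and derives $\mt{global\_improvement}$ exactly when $\Bmc$ globally improves the repair; $\Pi_{\mi{SatIfCause}}$ derives $\mt{sat}$ exactly when the repair contains a cause; and the final constraint enforces the resulting disjunction that characterizes G-AR. Your extra checks fill in details the paper leaves implicit: coherence of $P_2\cup\fix{P_1}{M_1}$, absence of clashes on $\mt{sat}$ and $\mt{violatedCause}$, and the reduction of improvements to mentioned facts (where $\Bmc\neq\Rmc$ is preserved because every global improvement satisfies $\Bmc\setminus\Rmc\neq\emptyset$).
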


\section{Experimental Setting}
This section recalls statistics on the benchmarks we used in our experiments (\cf \cite{DBLP:phd/hal/Bourgaux16,DBLP:conf/kr/BienvenuB22,DBLP:conf/kr/BienvenuBIJ2025}). 

\subsubsection{Datasets} Table~\ref{tab:datasets} provides information on the size and conflicts of the datasets from the \mn{CQAPri} benchmark that we used (binary conflicts).  Regarding the density of the conflict graph, in \mn{u1c1}, each of the facts involved in some conflict is in conflict with between 1 and 614 facts with an average of 2, and in \mn{u20c50}, each of the facts involved in some conflict is in conflict with between 1 and 744 facts with an average of 6.6. 
The non-binary case adds the same 40 conflicts of size 10 to all datasets \cite{DBLP:conf/kr/BienvenuBIJ2025}. 

\subsubsection{Priority Relations} 
For the binary conflicts case, we use two priority relations from the \mn{ORBITS} benchmark: $\succ^{ss}$ is score-structured and was built using 5 levels of priority to which facts were randomly assigned, and $\succ^{ns}$ was built by considering each (binary) conflict and assigning a random preference between the facts with a probability $0.8$, except if doing so created a cycle, then checking that $\succ^{ns}$ was indeed not score-structured \cite{DBLP:conf/kr/BienvenuB22}. 
These priority relations are such that $\succ^{ss}$ assigns a priority between the two facts of about 80\% of the conflicts and this proportion is about 60\%-65\% for $\succ^{ns}$. 
For the non-binary case, we use one priority relation $\succ^{nb}$ resulting from the set of preference rules $\Sigma_1^a\cup\Sigma_2^a\cup\Sigma_3^a$ and ``going down'' cycle resolution strategy defined by \citeauthor{DBLP:conf/kr/BienvenuBIJ2025}~\shortcite{DBLP:conf/kr/BienvenuBIJ2025}. This priority relation assigns a priority between two facts that belong to some conflict in about 90\% of the cases.

\subsubsection{Queries} We use 8 queries from the \mn{CQAPri} benchmark: \mn{q3}, \mn{q5}, \mn{q7}, \mn{q10}, \mn{q11}, \mn{q14}, \mn{q15}, and \mn{q20} (see \cite[Section~3.3.1]{DBLP:phd/hal/Bourgaux16} for the queries and characteristics). 
Table~\ref{tab:number-answers-per-query} shows the number of potential answers per query on the datasets we use for the query answering task.

\subsubsection{Inputs Considered for Each Task} 
For the task of computing the grounded repair, we considered the 18 datasets presented in Table~\ref{tab:datasets}, with the 2 priority relations given by the \mn{ORBITS} benchmark, plus the 8 datasets with non-binary conflicts \mn{u1cY} and \mn{u5c1}, \mn{u5c5} with the priority relation $\succ^{nb}$ (\cf Table~\ref{tab:time-grounded}). The reason we were not able to use larger datasets in the case of non-binary conflicts is that we fail to compute the priority relation $\succ^{nb}$ for datasets larger than \mn{u5c5}. 
For the task of query answering, we always use the 8 queries and considered a subset of these cases, eliminating the bigger (\mn{u20cY} and \mn{u5c30}, \mn{u5c50}) datasets, which yields 10 datasets with 2 priority relations plus 8 datasets with non-binary conflicts and one priority relation (cf. Table~\ref{tab:number-answers-appendix}).  

\begin{table*}
\centering
\begin{tabular*}{\textwidth}{l@{\extracolsep{\fill}}rrrr}
\toprule
& \# facts & \# facts in conflicts &\% facts in conflicts & \# conflicts
\\
\midrule
\mn{u1c1}& 75,724 & 2,373 & 3 & 2,314
\\
\mn{u1c5}& 75,951 & 6,412 & 8 & 8,476 
\\
\mn{u1c10}&76,201 & 10,891 & 14 & 14,261 
\\
\mn{u1c20}&76,821  & 20,175 &  26 & 28,232
\\
\mn{u1c30}&77,447 & 26,086 &  34 & 45,484
\\
\mn{u1c50}&78,593 & 34,814 &  44 & 81,304
\\
\midrule
\mn{u5c1}& 463,691  & 12,191 & 3 & 11,984 
\\
\mn{u5c5}&465,157 & 45,906 & 10 & 53,398 
\\
\mn{u5c10}&466,919  & 83,263 & 18 & 109,453
\\
\mn{u5c20}&470,674  & 137,836 & 29 & 231,771 
\\
\mn{u5c30}&474,368 & 172,245 & 36 & 345,461
\\
\mn{u5c50}&481,400 & 221,900 & 46 & 583,714
\\
\midrule
\mn{u20c1}& 1,983,493  & 69,597 & 4 & 73,212
\\
\mn{u20c5}& 1,989,788 & 253,141 & 13 & 335,194
\\
\mn{u20c10}& 1,997,445  & 408,398 & 20 & 662,725
\\
\mn{u20c20}& 2,013,048  & 610,271 & 30 & 1,314,991
\\
\mn{u20c30}& 2,028,069  & 748,664 & 37 & 1,933,956
\\
\mn{u20c50}& 2,056,957  & 946,819 & 46 &   3,130,377
\\
\bottomrule
\end{tabular*}
\caption{Size, number and percentage of facts involved in some conflict, and number of conflicts for each dataset of the \mn{CQAPri} benchmark (binary conflicts) that we used in our experiments.}
\label{tab:datasets}
\end{table*}

\begin{table*}
\centering
\begin{tabular*}{\textwidth}{l@{\extracolsep{\fill}}rrrrrrrrr}
\toprule
&  \mn{q3}& \mn{q5}& \mn{q7}& \mn{q10}& \mn{q11}& \mn{q14}& \mn{q15}& \mn{q20} & Total
\\
\midrule
\mn{u1c1} & 85 & 10 & 137 & 3 & 538 & 195 & 507 & 50 & 1525\\
\mn{u1c5} & 85 & 10 & 137 & 3 & 544 & 195 & 508 & 50 & 1532\\
\mn{u1c10} & 85 & 10 & 138 & 6 & 551 & 195 & 508 & 50 & 1543\\
\mn{u1c20} & 85 & 10 & 138 & 6 & 564 & 195 & 508 & 50 & 1556\\
\mn{u1c30} & 87 & 10 & 142 & 6 & 585 & 195 & 508 & 50 & 1583\\
\mn{u1c50} & 87 & 10 & 149 & 7 & 628 & 195 & 515 & 50 & 1641\\
\midrule
\mn{u5c1} & 85 & 10 & 137 & 20 & 3366 & 1112 & 3183 & 50 & 7963\\
\mn{u5c5} & 85 & 10 & 137 & 20 & 3408 & 1112 & 3188 & 50 & 8010\\
\mn{u5c10} & 85 & 10 & 138 & 23 & 3472 & 1112 & 3190 & 50 & 8080\\
\mn{u5c20} & 85 & 10 & 138 & 24 & 3584 & 1112 & 3203 & 50 & 8206\\
\bottomrule
\end{tabular*}
\caption{Number of potential answers.}
\label{tab:number-answers-per-query}
\end{table*}

\section{Experimental Results}
We present here additional results from our experimental evaluation. 
\subsubsection{Tractable Approximations} 
Tables~\ref{tab:grounded_trivial_prio_non_score}, \ref{tab:grounded_trivial_prio_score} and \ref{tab:grounded_trivial_non_binary} show the number of plain (S-)IAR facts (which are not involved in any conflict), other trivially P-IAR facts ($\Gamma(\emptyset)$), and facts in the grounded repair ($\Gmc$). 
Tables~\ref{tab:number-trivial-grounded-answers-non-score}, \ref{tab:number-trivial-grounded-answers-score} and \ref{tab:number-trivial-grounded-answers-non-binary} show the number of \emph{query answers} that are trivially P-IAR, grounded, and the number of answers additionally found by each incremental computation step of the grounded repair. 
Table~\ref{tab:time-grounded} shows the time needed to compute the facts that belong to the grounded repair or to $\Gamma(\emptyset)$ (among those that belong to some conflict) from the precomputed attack relation.

\subsubsection{Impact of Specific Encoding for Binary Conflicts} 
Figures~\ref{fig:impact-binary-AR-appendix} and \ref{fig:impact-binary-brave-appendix} compare for each potential answer (which does not hold under grounded), dataset with binary conflicts, and priority relation, the time needed to decide if it holds under X-AR and X-brave semantics, respectively, with and without using the optimization proposed in the case of binary conflicts.

\subsubsection{Comparison of the Two ASP(Q) Encodings for G-AR} Figure~\ref{fig:comparisonGAR-appendix} shows how $\Pi_{\mi{AR}}^{G}$ and $\Pi_{\mi{AR}}^{G\ 2}$ compare. 
We choose to focus on $\Pi_{\mi{AR}}^{G}$ which appears to perform a bit better in general.

\subsubsection{Semantics Comparison in Terms of Runtimes} 
Figures~\ref{fig:overall-runtime-appendix} and \ref{fig:overall-runtime-nonbinary-appendix} show the times needed to decide whether a potential answer which does not hold under grounded semantics holds under a given semantics for X-AR and X-brave with $X\in\{P,G,C\}$, in our different scenarios, using localization and the binary conflicts-specific encoding when possible. 
Instances are sorted by solving time, so a point $(i,j)$ indicates that for the considered semantics, $i$ instances  could be (individually) solved within $j$ seconds.

\subsubsection{Semantics Comparison in Terms of Answers}
Table~\ref{tab:number-answers-appendix} shows the number of answers we obtain under the different semantics and the number of potential (non grounded) answers for which we did not manage to determine whether they hold under the considered semantics in our 600s time limit.

\begin{table*}
\centering
\begin{tabular*}{\textwidth}{l@{\extracolsep{\fill}}cccccccccc}
\toprule
 & \mn{u1c1} & \mn{u1c5} & \mn{u1c10} & \mn{u1c20} & \mn{u1c30} & \mn{u1c50} & \mn{u5c1} & \mn{u5c5} & \mn{u5c10} & u5c20 \\
\midrule
$\cap_{S}$ & 73351 & 69539 & 65310 & 56646 & 51361 & 43779      & 451500 & 419251 & 383656 & 332838 \\
$\Gamma(\emptyset)\setminus\cap_{S}$ & 921 & 1705 & 3021 & 5391 & 6105 & 7289 & 3740 & 13149 & 22920 & 32449 \\
$\Gmc\setminus\cap_{S}$ & 2286 & 6009 & 10134 & 18507 & 23339 & 29945 & 11704 & 43435 & 78368 & 126482 \\
\midrule
\% $\Gmc$ not $\Gamma(\emptyset)$  & 1.8\% & 5.7\% & 9.4\% & 17.5\% & 23.1\% & 30.7\% & 1.7\% & 6.5\% & 12\% & 20.5\% \\
\bottomrule
\end{tabular*}
\caption{Number of facts in $\cap_{S}=\bigcap_{\Rmc\in\reps{\Kmc}}\Rmc$, $\Gamma(\emptyset)$ (trivially P-IAR), and the grounded repair $\Gmc$, in $\succ^{ns}$ case.}
\label{tab:grounded_trivial_prio_non_score}
\bigskip

\begin{tabular*}{\textwidth}{l@{\extracolsep{\fill}}cccccccccc}
\toprule
 & \mn{u1c1} & \mn{u1c5} & \mn{u1c10} & \mn{u1c20} & \mn{u1c30} & \mn{u1c50} & \mn{u5c1} & \mn{u5c5} & \mn{u5c10} & u5c20 \\
\midrule
$\cap_{S}$ & 73351 & 69539 & 65310 & 56646 & 51361 & 43779      & 451500 & 419251 & 383656 & 332838 \\
$\Gamma(\emptyset)\setminus\cap_{S}$ & 1225 & 2193 & 3981 & 5988 & 6851 & 8493 & 5354 & 16022 & 27965 & 40829 \\
$\Gmc\setminus\cap_{S}$ & 2106 & 4256 & 7831 & 13020 & 15235 & 17872 & 10036 & 33370 & 59103 & 90164 \\
\midrule
\% $\Gmc$ not $\Gamma(\emptyset)$  & 1.2\% & 2.8\% & 5.3\% & 10.1\% & 12.6\% & 15.2\% & 1\% & 3.8\% & 7\% & 11.7\% \\
\bottomrule
\end{tabular*}
\caption{Number of facts in $\cap_{S}=\bigcap_{\Rmc\in\reps{\Kmc}}\Rmc$, $\Gamma(\emptyset)$ (trivially P-IAR), and the grounded repair $\Gmc$, in $\succ^{ss}$ case.}
\label{tab:grounded_trivial_prio_score}
\bigskip

\begin{tabular*}{\textwidth}{l@{\extracolsep{\fill}}cccccccc}
\toprule
 & \mn{u1c1} & \mn{u1c5} & \mn{u1c10} & \mn{u1c20} & \mn{u1c30} & \mn{u1c50} & \mn{u5c1} & u5c5 \\
\midrule
$\cap_{S}$  & 73131 & 69327 & 65098 & 56439 & 51165 & 43612 & 451280 & 419039\\
$\Gamma(\emptyset)\setminus\cap_{S}$ & 989 & 2561 & 4364 & 7668 & 8883 & 10777 & 5994 & 21399 \\
$\Gmc\setminus\cap_{S}$ & 2399 & 6089 & 10066 & 18279 & 22916 & 29458 & 11684 & 42969 \\
\midrule
\% $\Gmc$ not $\Gamma(\emptyset)$ & 1.8\% & 4.7\% & 7.6\% & 14.2\% & 18.9\% & 25.6\% & 1.2\% & 4.7\% \\
\bottomrule
\end{tabular*}
\caption{Number of facts in $\cap_{S}=\bigcap_{\Rmc\in\reps{\Kmc}}\Rmc$, $\Gamma(\emptyset)$ (trivially P-IAR), and the grounded repair $\Gmc$, in $\succ^{nb}$ case.}
\label{tab:grounded_trivial_non_binary}
\end{table*}

\begin{table}
\centering
\begin{tabular*}{0.47\textwidth}{l@{\extracolsep{\fill}}cccccc}
\toprule
 & Triv & GR & Step 1 & Step 2 & Step 3 \\
\midrule
\mn{u1c1} & 1465 & 1524 & 59 & - & - \\
\mn{u1c5} & 1366 & 1525 & 159 & - & - \\
\mn{u1c10} & 1186 & 1527 & 341 & 0 & - \\
\mn{u1c20} & 937 & 1521 & 555 & 29 & - \\
\mn{u1c30} & 754 & 1518 & 730 & 34 & - \\
\mn{u1c50} & 625 & 1500 & 800 & 75 & 0 \\
\mn{u5c1} & 7637 & 7953 & 316 & - & - \\
\mn{u5c5} & 6958 & 7948 & 990 & 0 & - \\
\mn{u5c10} & 6069 & 7964 & 1894 & 1 & - \\
\mn{u5c20} & 4947 & 7982 & 2972 & 62 & 1 \\
\bottomrule
\end{tabular*}
\caption{Number of answers found trivially P-IAR (Triv) and grounded (GR). Column ``Step $i$'' gives the number of answers \wrt $\Gamma^{i+1}(\emptyset)$ minus those \wrt $\Gamma^i(\emptyset)$. ``-'' indicates that this step was not needed to compute the grounded repair. Case of $\succ^{ns}$.}\label{tab:number-trivial-grounded-answers-non-score}
\bigskip

\begin{tabular*}{0.47\textwidth}{l@{\extracolsep{\fill}}ccccc}
\toprule
 & Triv & GR  & Step 1 & Step 2 \\
\midrule
\mn{u1c1} & 1477 & 1525 & 48 & - \\
\mn{u1c5} & 1385 & 1450 & 65 & - \\
\mn{u1c10} & 1215 & 1447 & 232 & - \\
\mn{u1c20} & 931 & 1305 & 374 & 0 \\
\mn{u1c30} & 731 & 1030 & 299 & 0 \\
\mn{u1c50} & 618 & 770 & 152 & 0 \\
\mn{u5c1} & 7706 & 7901 & 195 & - \\
\mn{u5c5} & 7044 & 7635 & 591 & - \\
\mn{u5c10} & 6213 & 7300 & 1087 & 0 \\
\mn{u5c20} & 5169 & 6649 & 1480 & 0 \\
\bottomrule
\end{tabular*}
\caption{Number of answers found trivially P-IAR (Triv) and grounded (GR). Column ``Step $i$'' gives the number of answers \wrt $\Gamma^{i+1}(\emptyset)$ minus those \wrt $\Gamma^i(\emptyset)$. ``-'' indicates that this step was not needed to compute the grounded repair. Case of $\succ^{ss}$.}\label{tab:number-trivial-grounded-answers-score}
\bigskip

\begin{tabular*}{0.47\textwidth}{l@{\extracolsep{\fill}}ccccc}
\toprule
 & Triv & GR & Step 1 & Step 2 \\
\midrule
\mn{u1c1} & 1447 & 1523 & 75 & 1 \\
\mn{u1c5} & 1369 & 1524 & 155 & 0 \\
\mn{u1c10} & 1167 & 1526 & 359 & 0 \\
\mn{u1c20} & 940 & 1511 & 571 & 0 \\
\mn{u1c30} & 789 & 1515 & 726 & 0 \\
\mn{u1c50} & 680 & 1503 & 817 & 6 \\
\mn{u5c1} & 7708 & 7946 & 237 & 1 \\
\mn{u5c5} & 7209 & 7951 & 742 & 0 \\
\bottomrule
\end{tabular*}
\caption{Number of answers found trivially P-IAR (Triv) and grounded (GR). Column ``Step $i$'' gives the number of answers \wrt $\Gamma^{i+1}(\emptyset)$ minus those \wrt $\Gamma^i(\emptyset)$. ``-'' indicates that this step was not needed to compute the grounded repair. Case of $\succ^{nb}$.}\label{tab:number-trivial-grounded-answers-non-binary}
\end{table}

\begin{table}
    \centering
        \begin{tabular*}{0.47\textwidth}{l@{\extracolsep{\fill}}rr}
\toprule

        	    & $\Gmc\setminus\cap_{S}$	& $\Gamma(\emptyset)\setminus\cap_{S}$\\
        \midrule
        \mn{u1c1} $\succ^{ns}$	    & 2.10	    & 0.16\\
        \mn{u1c5} $\succ^{ns}$	    & 6.63	    & 0.33\\
        \mn{u1c10} $\succ^{ns}$	    & 9.25	    & 0.51\\
        \mn{u1c20} $\succ^{ns}$	    & 21.22	    & 1.39\\
        \mn{u1c30} $\succ^{ns}$	    & 47.65	    & 2.43\\
        \mn{u1c50} $\succ^{ns}$	    & 126.39	    & 3.80\\
        \mn{u5c1} $\succ^{ns}$	    & 10.70	    & 0.68\\
        \mn{u5c5} $\succ^{ns}$	    & 54.50	    & 2.69\\
        \mn{u5c10} $\succ^{ns}$	    & 71.88	    & 4.08\\
        \mn{u5c20} $\succ^{ns}$	    & 240.34	    & 7.11\\
        \mn{u5c30} $\succ^{ns}$	    & 376.27	    & 10.53\\
        \mn{u5c50} $\succ^{ns}$	    & MEMOUT	    & 14.69\\
        \mn{u20c1} $\succ^{ns}$	    & 47.05	    & 2.39\\
        \mn{u20c5} $\succ^{ns}$	    & 219.98	    & 9.88\\
        \mn{u20c10} $\succ^{ns}$	& 441.24	    & 18.09\\
        \mn{u20c20} $\succ^{ns}$	& MEMOUT	    & 33.13\\
        \mn{u20c30} $\succ^{ns}$	& MEMOUT	    & 50.23\\
        \mn{u20c50} $\succ^{ns}$	& MEMOUT	    & 80.3\\
        \midrule
        \mn{u1c1} $\succ^{ss}$        & 1.58	    & 0.16\\
        \mn{u1c5} $\succ^{ss}$        & 4.62	    & 0.50\\
        \mn{u1c10} $\succ^{ss}$        & 7.30	    & 0.69\\
        \mn{u1c20} $\succ^{ss}$        & 14.92	    & 1.26\\
        \mn{u1c30} $\succ^{ss}$        & 32.41	    & 2.40\\
        \mn{u1c50} $\succ^{ss}$        & 61.80	    & 3.53\\
        \mn{u5c1} $\succ^{ss}$        & 7.85	    & 0.68\\
        \mn{u5c5} $\succ^{ss}$        & 27.04	    & 2.65\\
        \mn{u5c10} $\succ^{ss}$        & 52.14	    & 4.33\\
        \mn{u5c20} $\succ^{ss}$        & 129.31	    & 7.12\\
        \mn{u5c30} $\succ^{ss}$        & 201.37	    & 10.60\\
        \mn{u5c50} $\succ^{ss}$        & 304.11	    & 14.55\\
        \mn{u20c1} $\succ^{ss}$        & 23.75	    & 2.41\\
        \mn{u20c5} $\succ^{ss}$        & 106.77	    & 9.47\\
        \mn{u20c10} $\succ^{ss}$    & 341.53	    & 18.24\\
        \mn{u20c20} $\succ^{ss}$    & MEMOUT	    & 36.07\\
        \mn{u20c30} $\succ^{ss}$    & MEMOUT	    & 51.29\\
        \mn{u20c50} $\succ^{ss}$    & MEMOUT	    & 80.31\\
        \midrule
        \mn{u1c1} $\succ^{nb}$	    & 1.19	    & 0.14\\
        \mn{u1c5} $\succ^{nb}$	    & 4.03	    & 0.66\\
        \mn{u1c10} $\succ^{nb}$	    & 6.82	    & 1.02\\
        \mn{u1c20} $\succ^{nb}$	    & 13.81	    & 1.50\\
        \mn{u1c30} $\succ^{nb}$	    & 25.34	    & 2.33\\
        \mn{u1c50} $\succ^{nb}$	    & 53.35	    & 3.97\\
        \mn{u5c1} $\succ^{nb}$	    & 4.55	    & 0.83\\
        \mn{u5c5} $\succ^{nb}$	    & 23.03	    & 2.89\\
        \bottomrule
    \end{tabular*}
    \caption{Time (in seconds) to compute the facts that belong to the grounded repair $\Gmc$ or to $\Gamma(\emptyset)$ (among those that belong to some conflict) from the precomputed attack relation.}
    \label{tab:time-grounded}
\end{table}

\begin{figure*}
    \includegraphics[width=.33\textwidth]{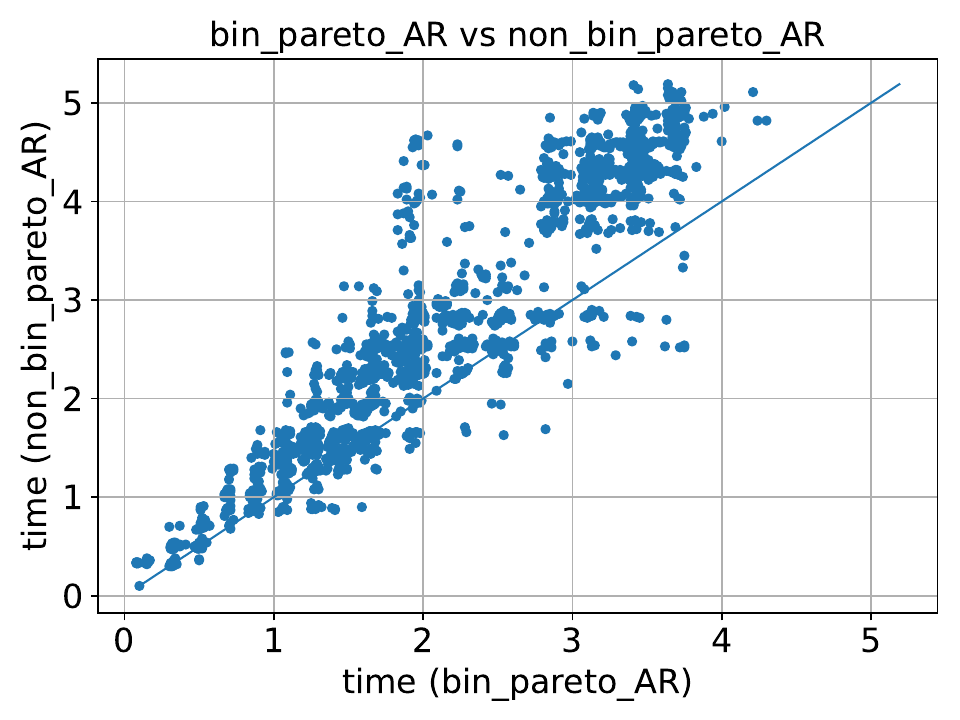}
    \includegraphics[width=.33\textwidth]{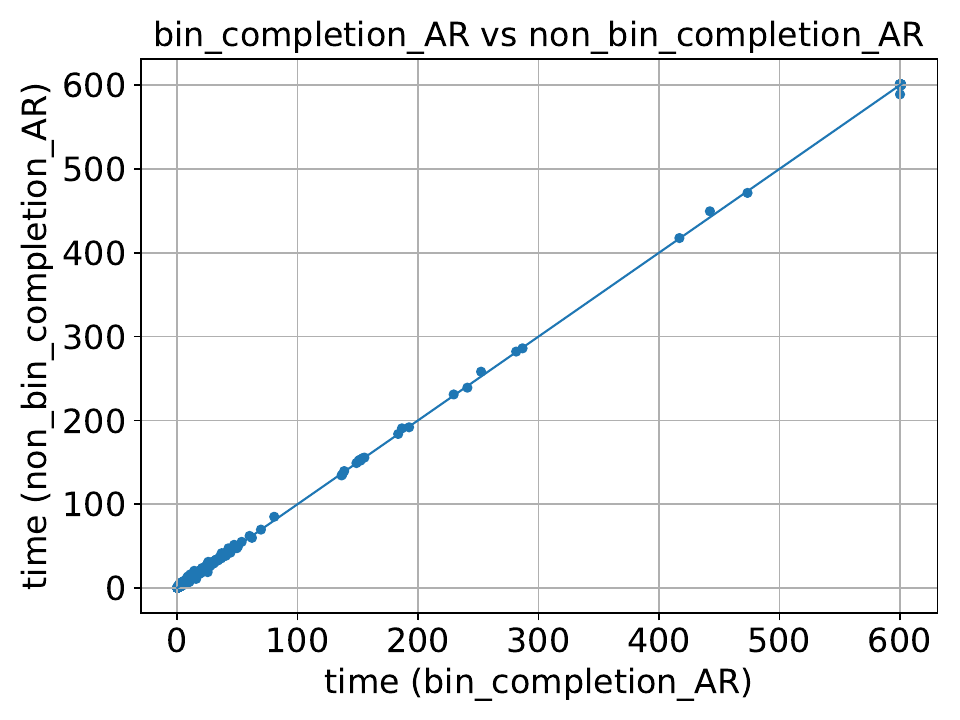}
    \includegraphics[width=.33\textwidth]{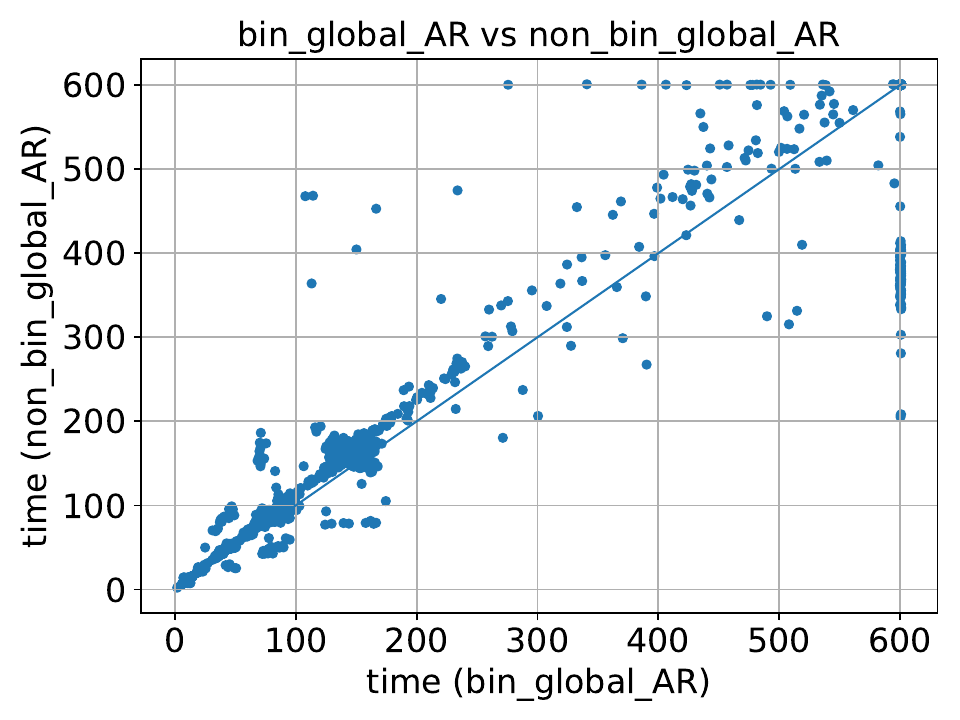}
    \caption{Time for the generic encoding (y-axis) vs 
time for the encoding specific for binary conflicts (x-axis), both with localization, to decide for each $q(\ans)$, dataset \mn{uXcY}, and priority relation, 
if $q(\ans)$ holds under X-AR. (left) X=P, (middle) X=C, (right) X=G}\label{fig:impact-binary-AR-appendix}
\end{figure*}

\begin{figure*}
    \includegraphics[width=.33\textwidth]{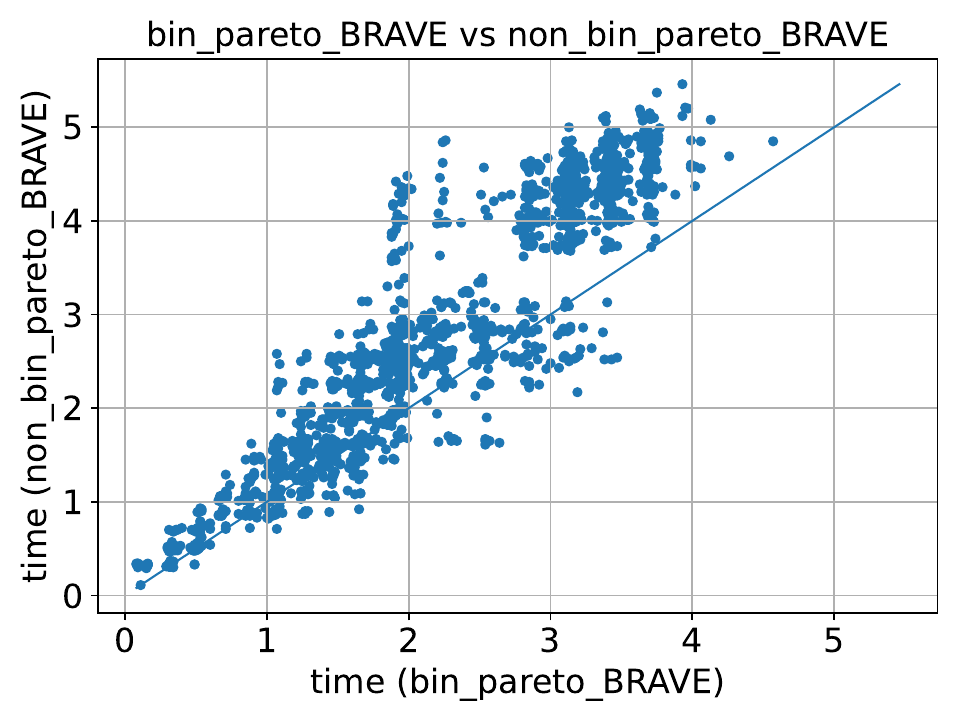}
    \includegraphics[width=.33\textwidth]{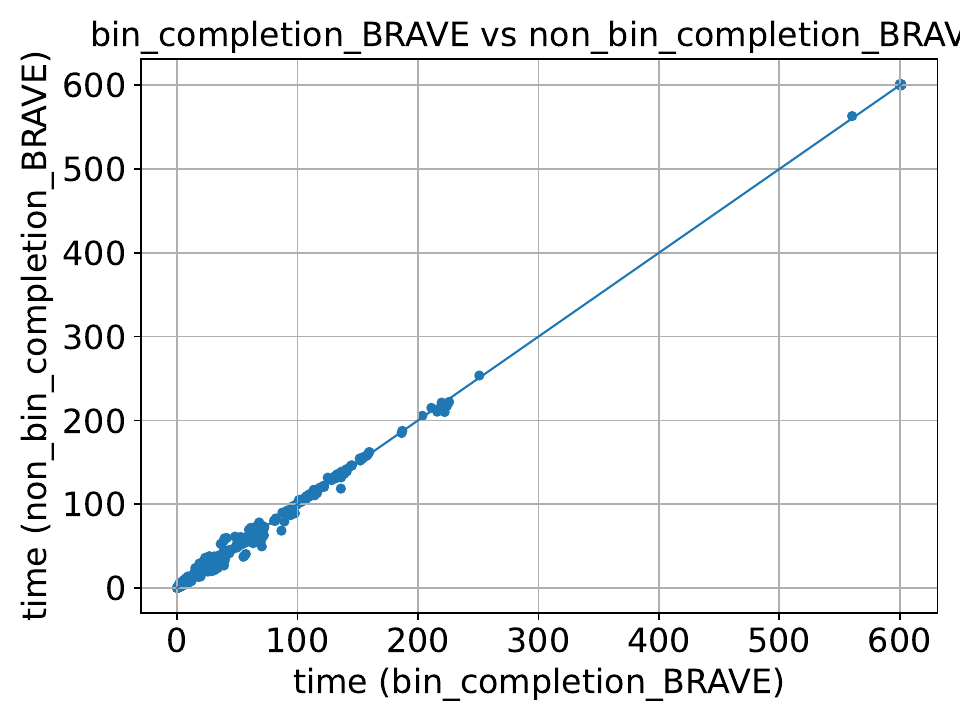}
    \includegraphics[width=.33\textwidth]{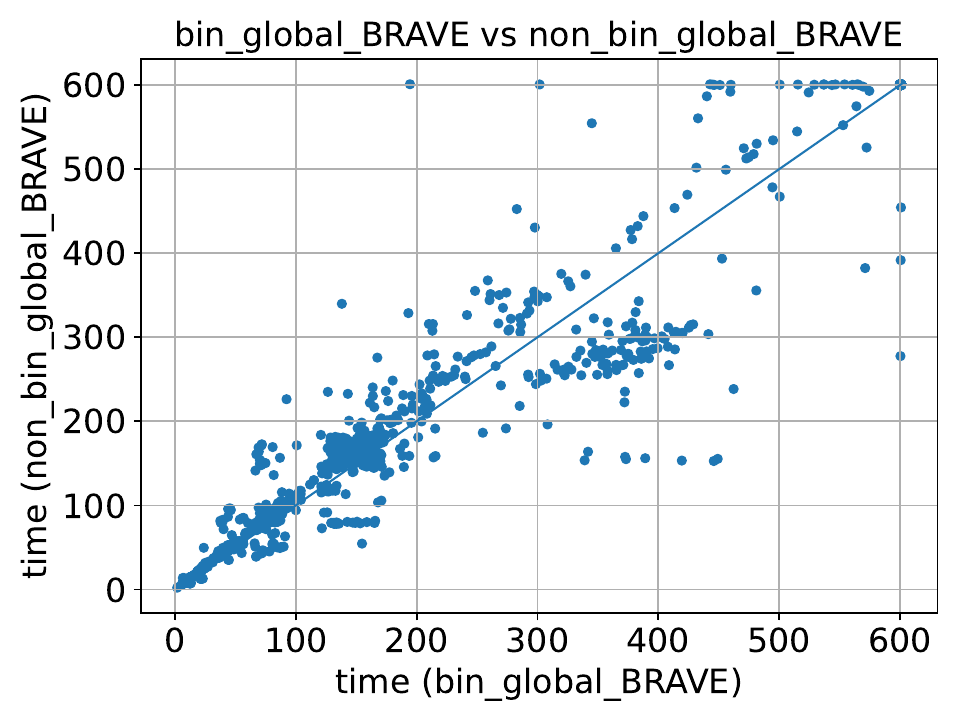}
    \caption{Time for the generic encoding (y-axis) vs 
time for the encoding specific for binary conflicts (x-axis), both with localization, to decide for each $q(\ans)$, dataset \mn{uXcY}, and priority relation, 
if $q(\ans)$ holds under X-brave. (left) X=P, (middle) X=C, (right) X=G}\label{fig:impact-binary-brave-appendix}
\end{figure*}

\begin{figure*}    \includegraphics[width=.49\textwidth,valign=t]{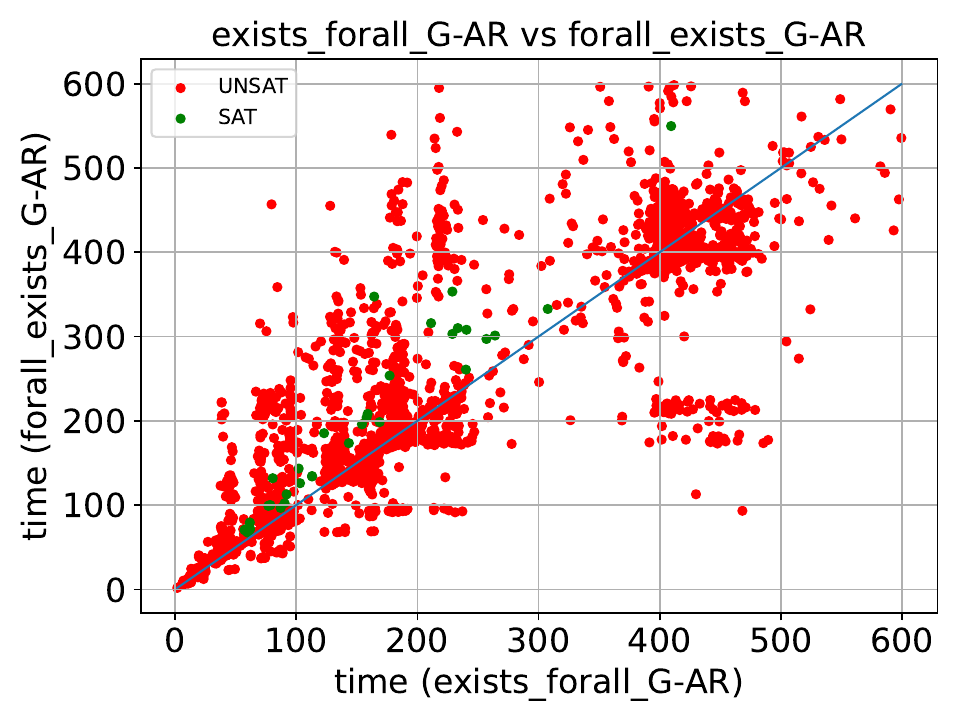}  \includegraphics[width=.49\textwidth,valign=t]{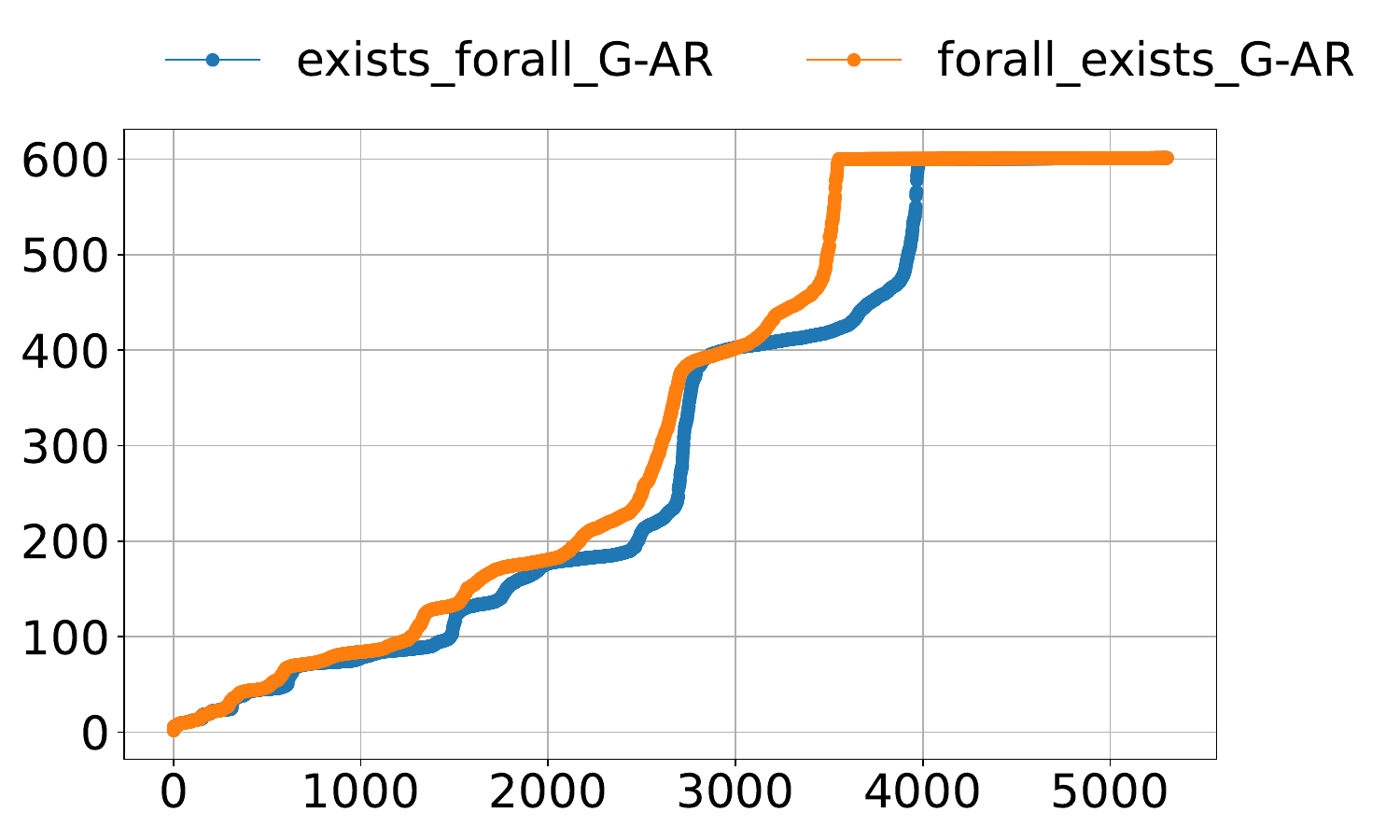}
    \caption{Comparison of $\Pi_{\mi{AR}}^{G}$ (`exists\_forall\_G-AR') and $\Pi_{\mi{AR}}^{G\ 2}$ (`forall\_exists\_G-AR'), both with localization and specific encoding for binary conflicts. (left) Time for $\Pi_{\mi{AR}}^{G\ 2}$ (y-axis) vs time for $\Pi_{\mi{AR}}^{G}$ (x-axis) to decide for each $q(\ans)$, dataset \mn{u1cY} (binary conflict cases), and priority relation, 
if $q(\ans)$ holds under G-AR. (right) Time (in seconds, y-axis) to decide for each $q(\ans)$, dataset \mn{uXcY} (binary conflict cases), and priority relation, whether $q(\ans)$ holds under G-AR using $\Pi_{\mi{AR}}^{G}$ and $\Pi_{\mi{AR}}^{G\ 2}$, with instances sorted by increasing solving time 
along the x-axis (\ie a point $(i,j)$ indicates that $i$ instances could be solved within $j$ seconds).
}\label{fig:comparisonGAR-appendix}
\end{figure*}

\begin{figure*}[t]
    \centering
    \begin{subfigure}{0.24\textwidth}
        \centering
        \includegraphics[width=\linewidth]{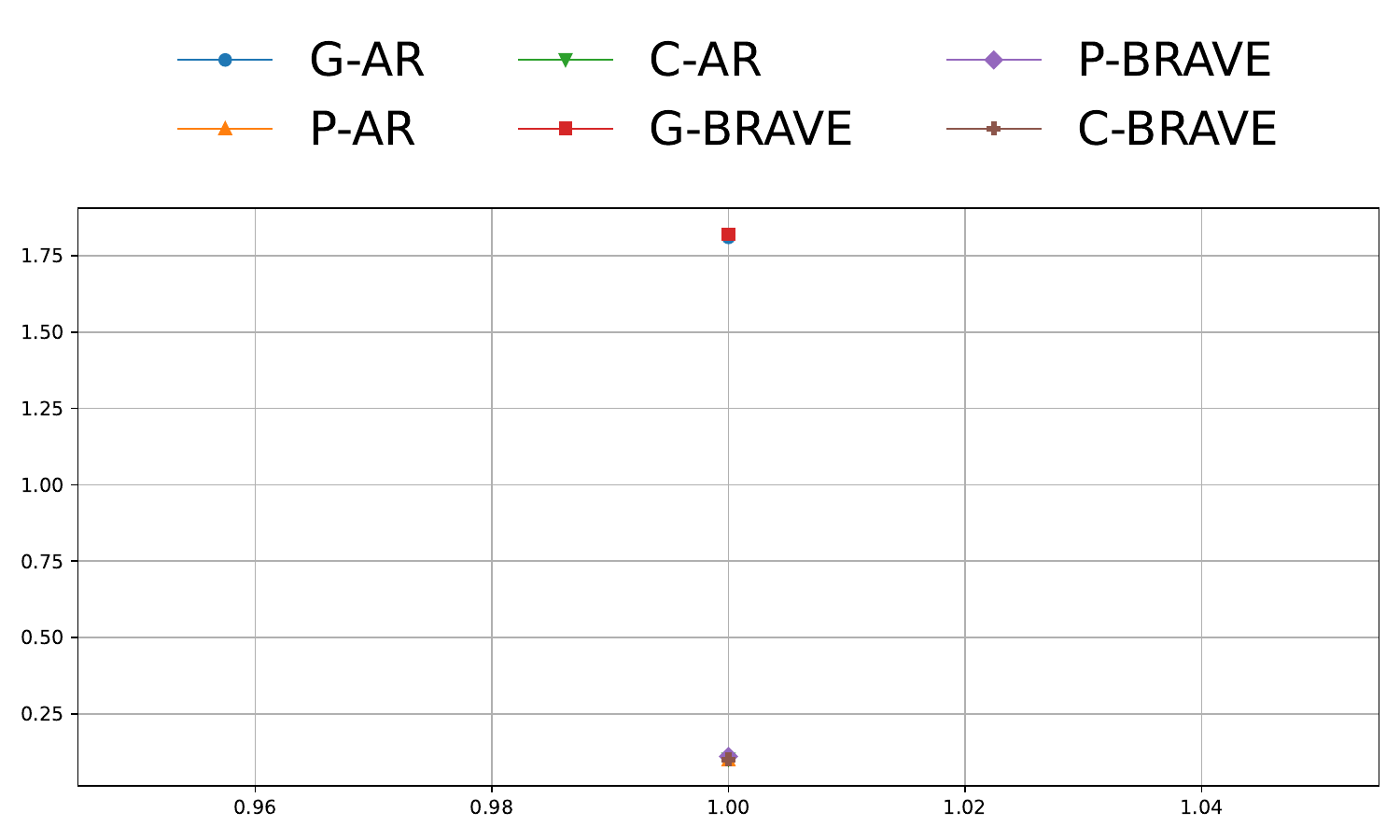}
        \caption{\mn{u1c1} $\succ^{ns}$}
        \label{fig:u1c1_non_score}
    \end{subfigure}
    \begin{subfigure}{0.24\textwidth}
        \centering
        \includegraphics[width=\linewidth]{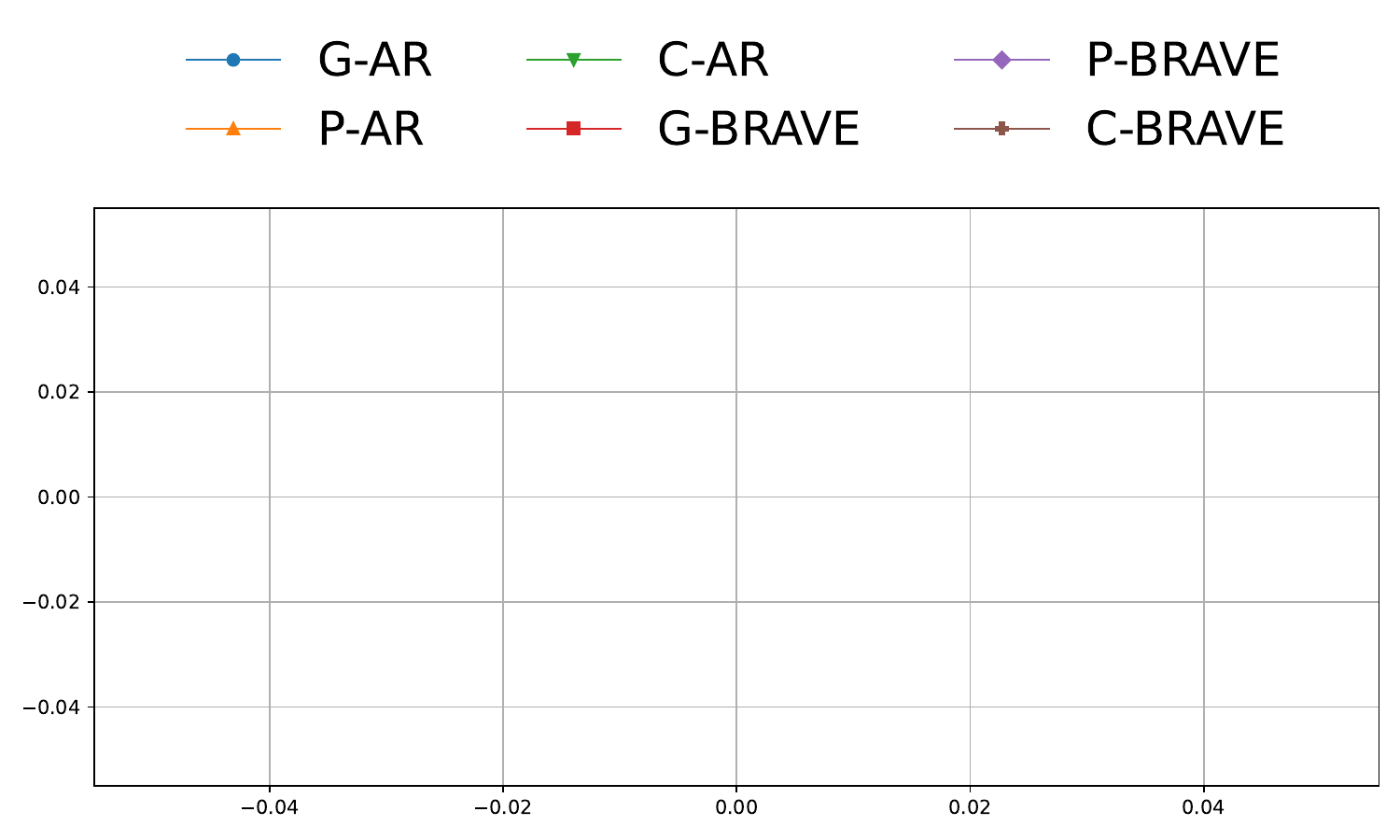}
        \caption{\mn{u1c1} $\succ^{ss}$}
        \label{fig:u1c1_score}
    \end{subfigure}
    \begin{subfigure}{0.24\textwidth}
        \centering
        \includegraphics[width=\linewidth]{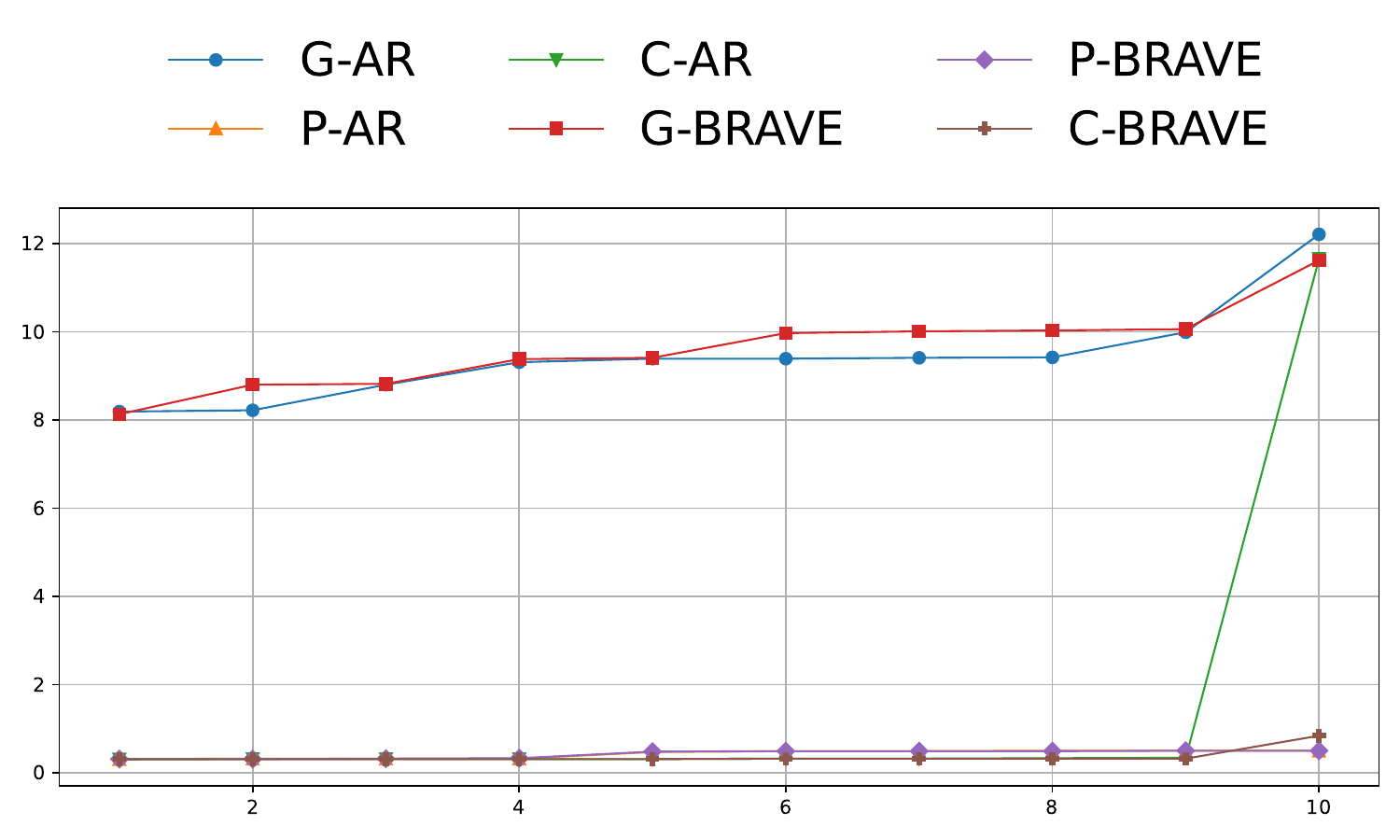}
        \caption{\mn{u5c1} $\succ^{ns}$}
        \label{fig:u5c1_non_score}
    \end{subfigure}
    \begin{subfigure}{0.24\textwidth}
        \centering
        \includegraphics[width=\linewidth]{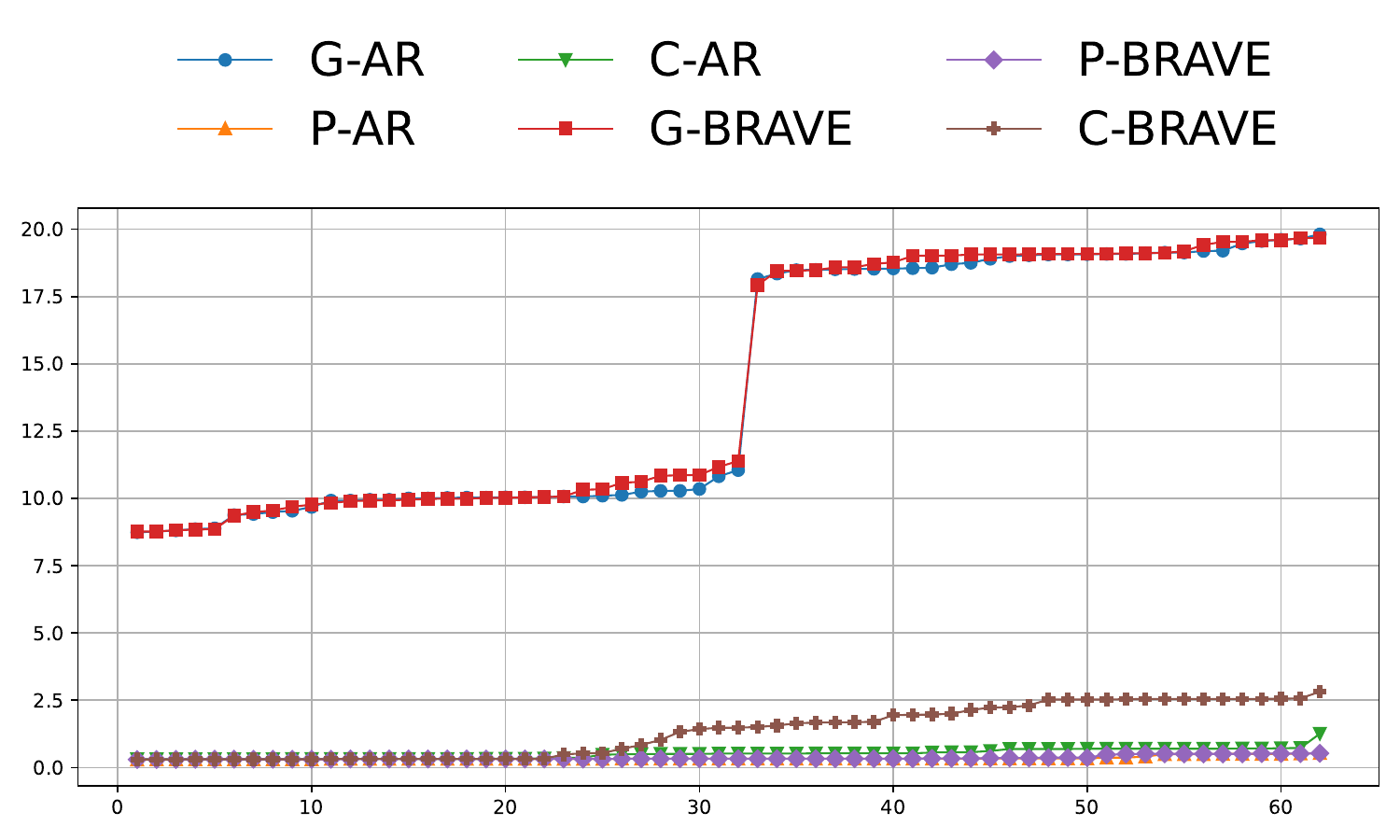}
        \caption{\mn{u5c1} $\succ^{ss}$}
        \label{fig:u5c1_score}
    \end{subfigure}
    \begin{subfigure}{0.24\textwidth}
        \centering
        \includegraphics[width=\linewidth]{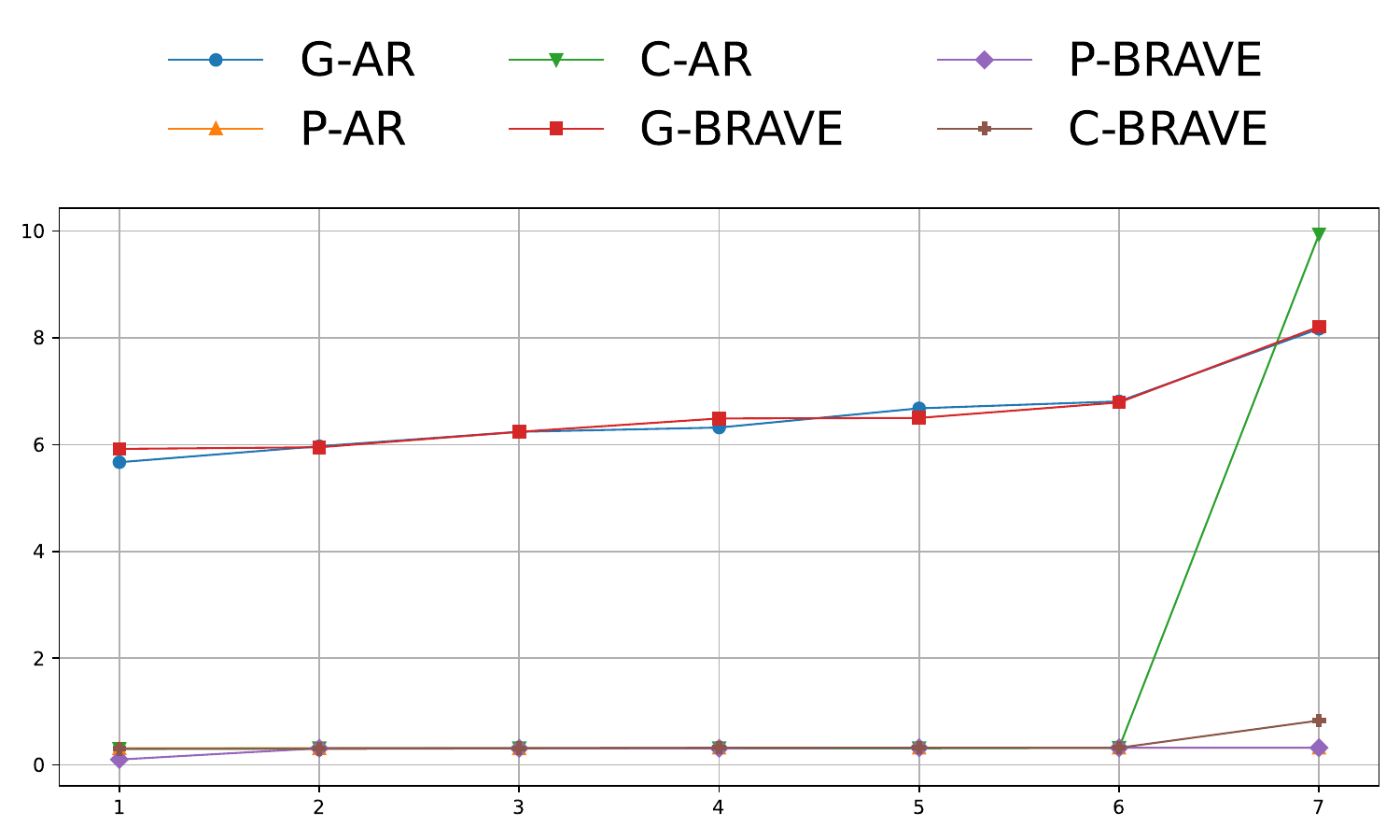}
        \caption{\mn{u1c5} $\succ^{ns}$}
        \label{fig:u1c5_non_score}
    \end{subfigure}
    \begin{subfigure}{0.24\textwidth}
        \centering
        \includegraphics[width=\linewidth]{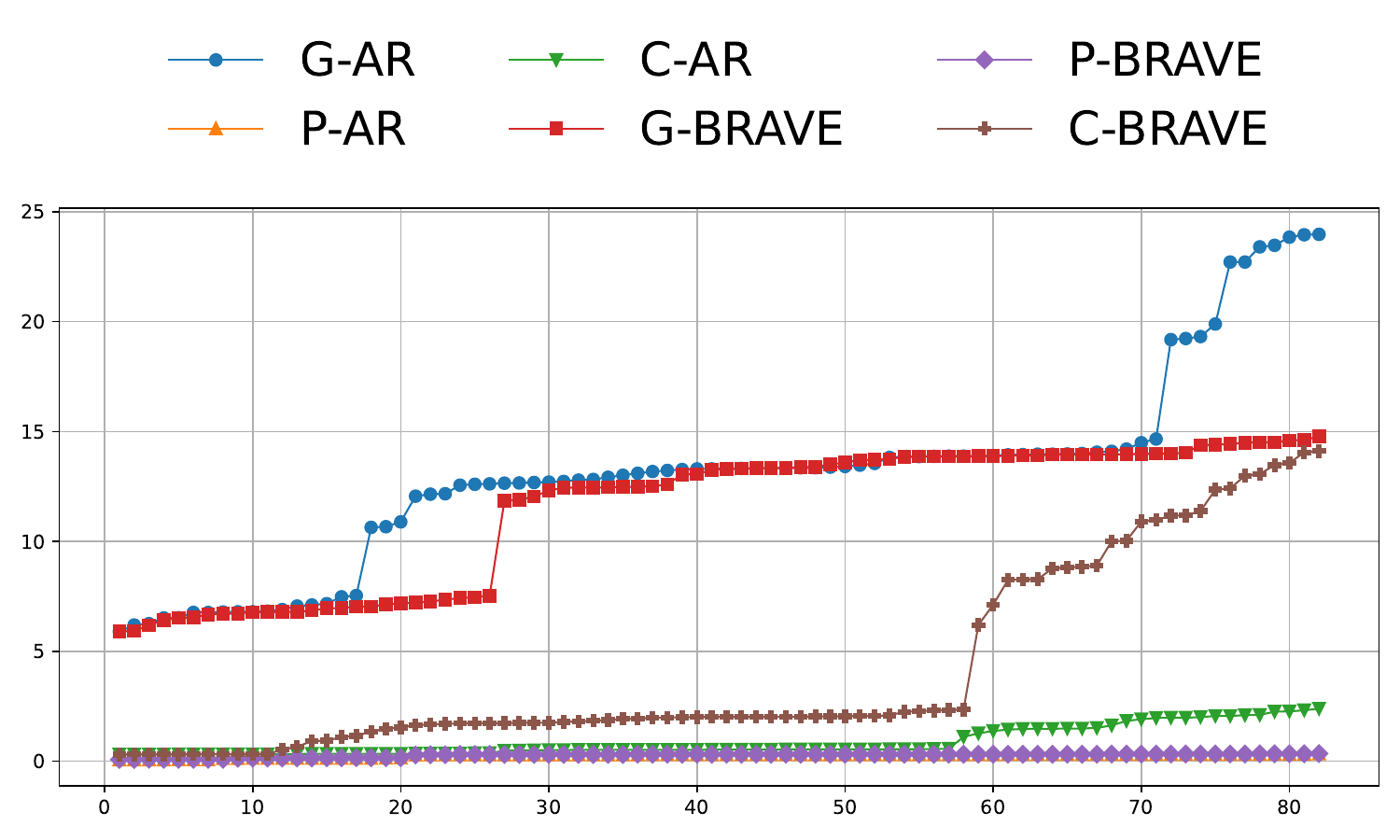}
        \caption{\mn{u1c5} $\succ^{ss}$}
        \label{fig:u1c5_score}
    \end{subfigure}
    \begin{subfigure}{0.24\textwidth}
        \centering
        \includegraphics[width=\linewidth]{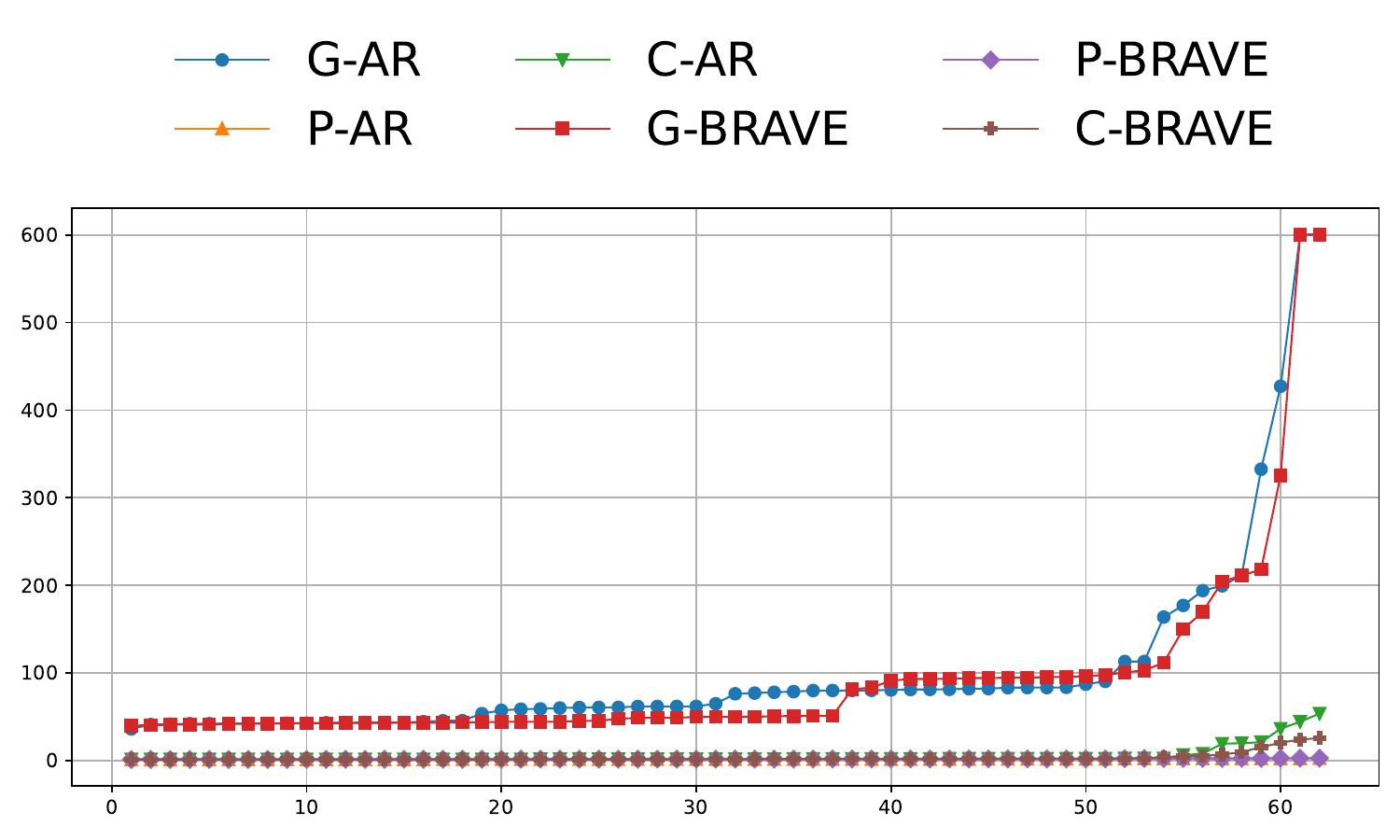}
        \caption{\mn{u5c5} $\succ^{ns}$}
        \label{fig:u5c5_non_score}
    \end{subfigure}
    \begin{subfigure}{0.24\textwidth}
        \centering
        \includegraphics[width=\linewidth]{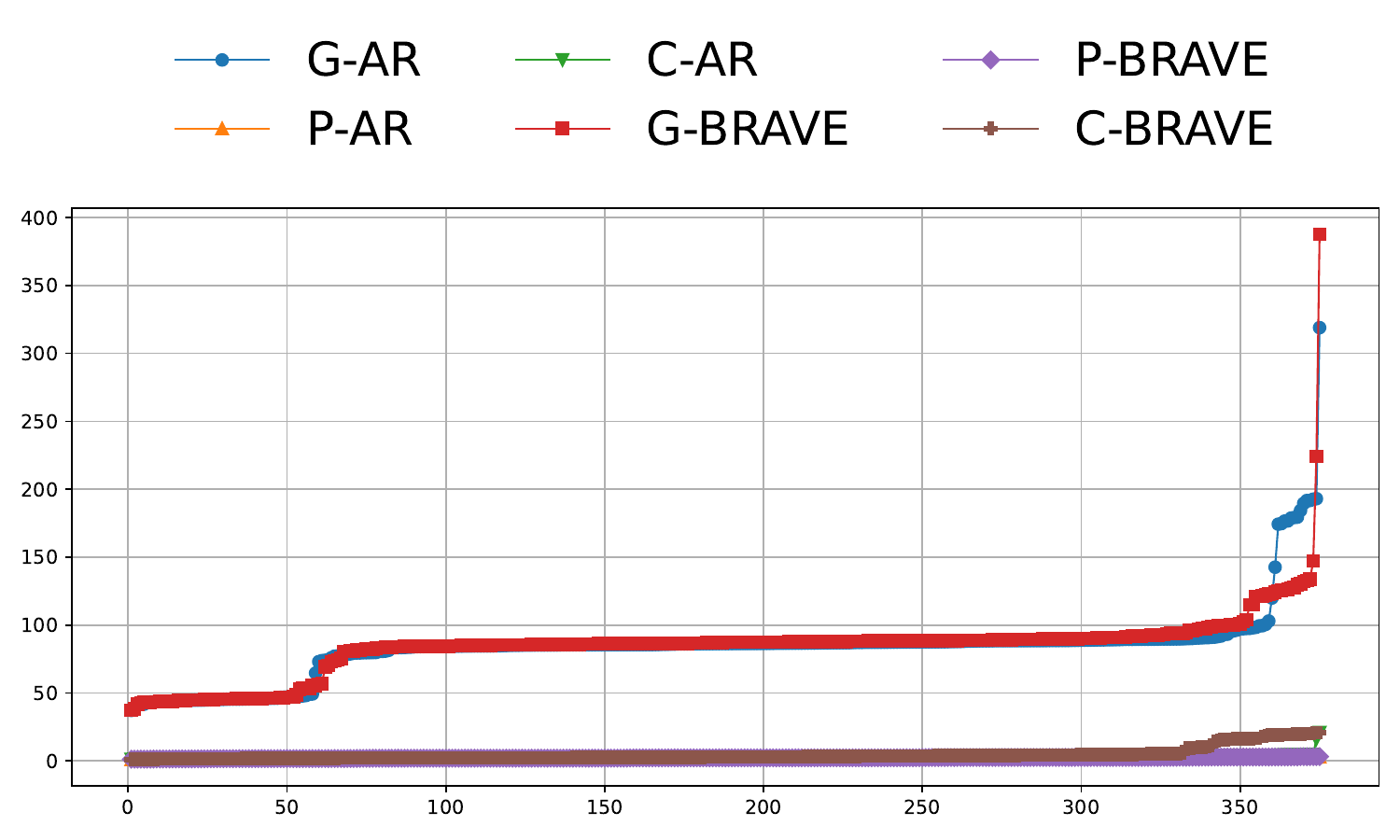}
        \caption{\mn{u5c5} $\succ^{ss}$}
        \label{fig:u5c5_score}
    \end{subfigure}
    \begin{subfigure}{0.24\textwidth}
        \centering
        \includegraphics[width=\linewidth]{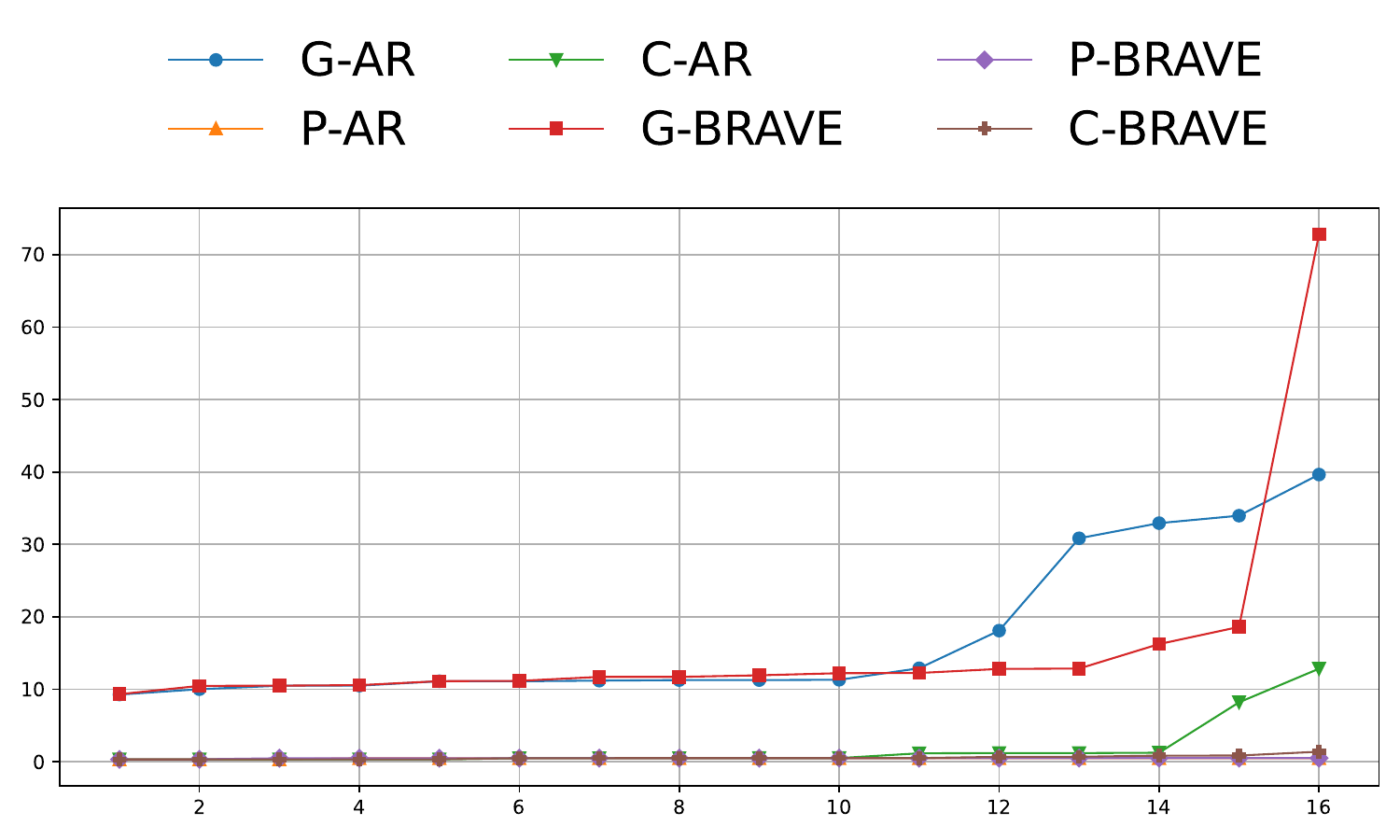}
        \caption{\mn{u1c10} $\succ^{ns}$}
        \label{fig:u1c10_non_score}
    \end{subfigure}
    \begin{subfigure}{0.24\textwidth}
        \centering
        \includegraphics[width=\linewidth]{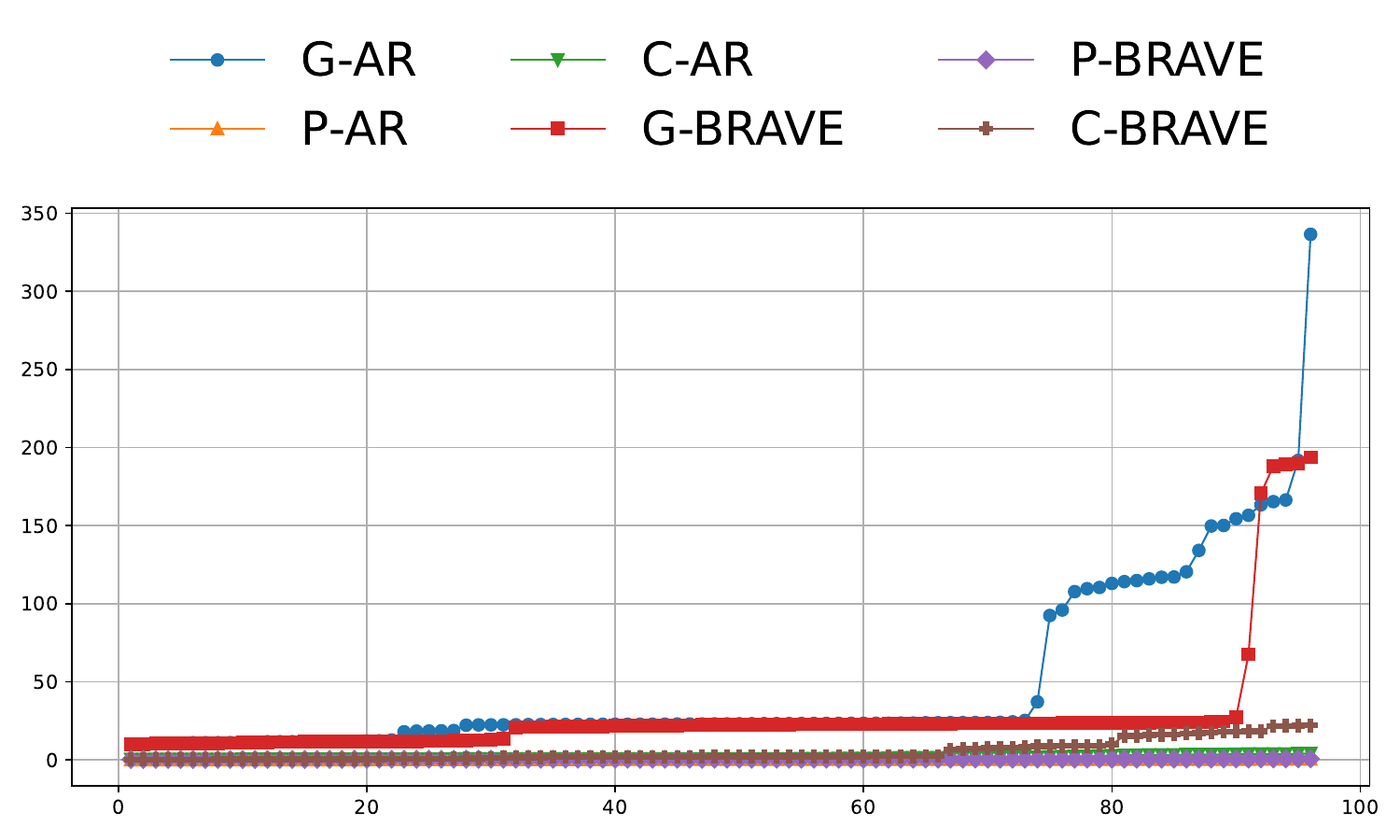}
        \caption{\mn{u1c10} $\succ^{ss}$}
        \label{fig:u1c10_score}
    \end{subfigure}
    \begin{subfigure}{0.24\textwidth}
        \centering
        \includegraphics[width=\linewidth]{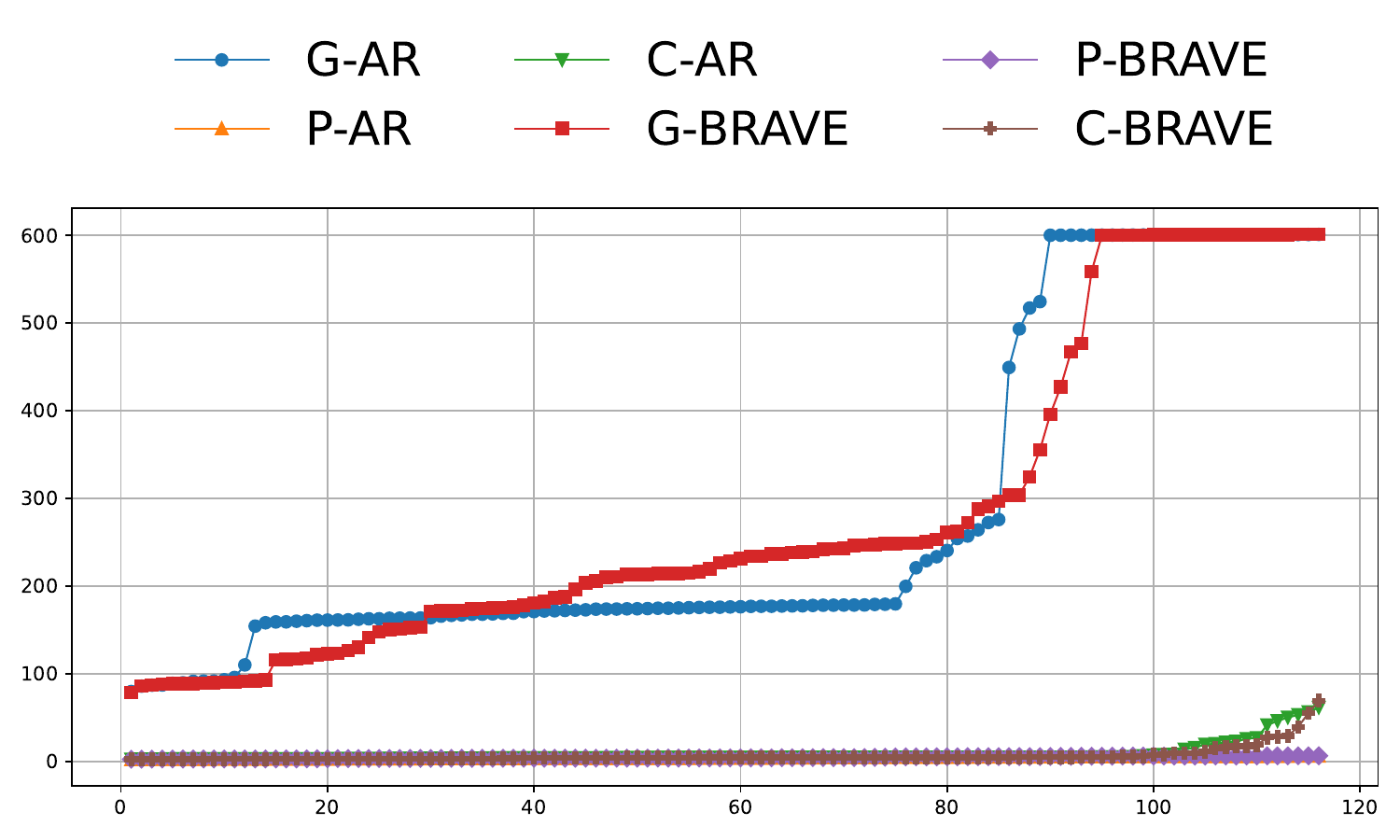}
        \caption{u5c10 $\succ^{ns}$}
        \label{fig:u5c10_non_score}
    \end{subfigure}
    \begin{subfigure}{0.24\textwidth}
        \centering
        \includegraphics[width=\linewidth]{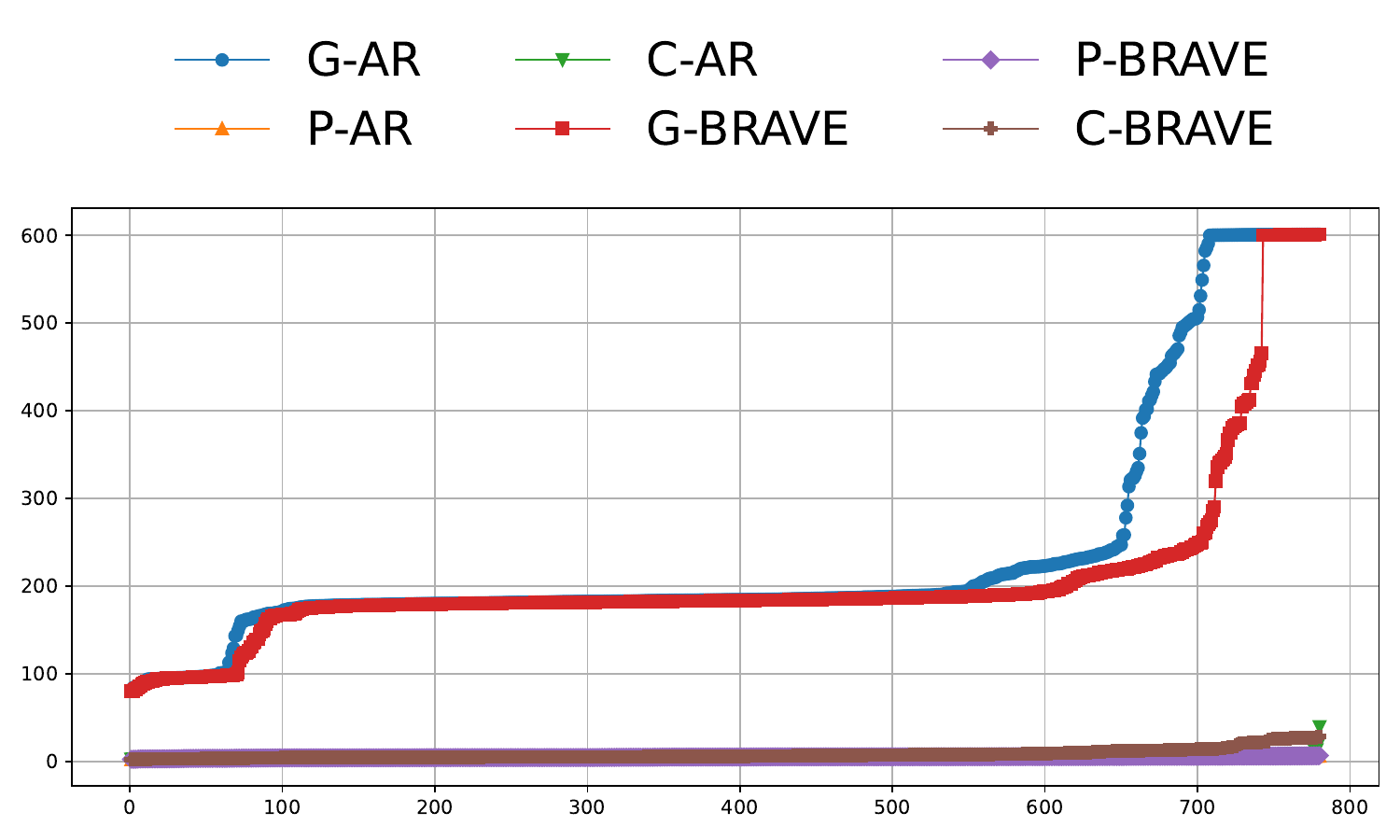}
        \caption{u5c10 $\succ^{ss}$}
        \label{fig:u5c10_score}
    \end{subfigure}
    \begin{subfigure}{0.24\textwidth}
        \centering
        \includegraphics[width=\linewidth]{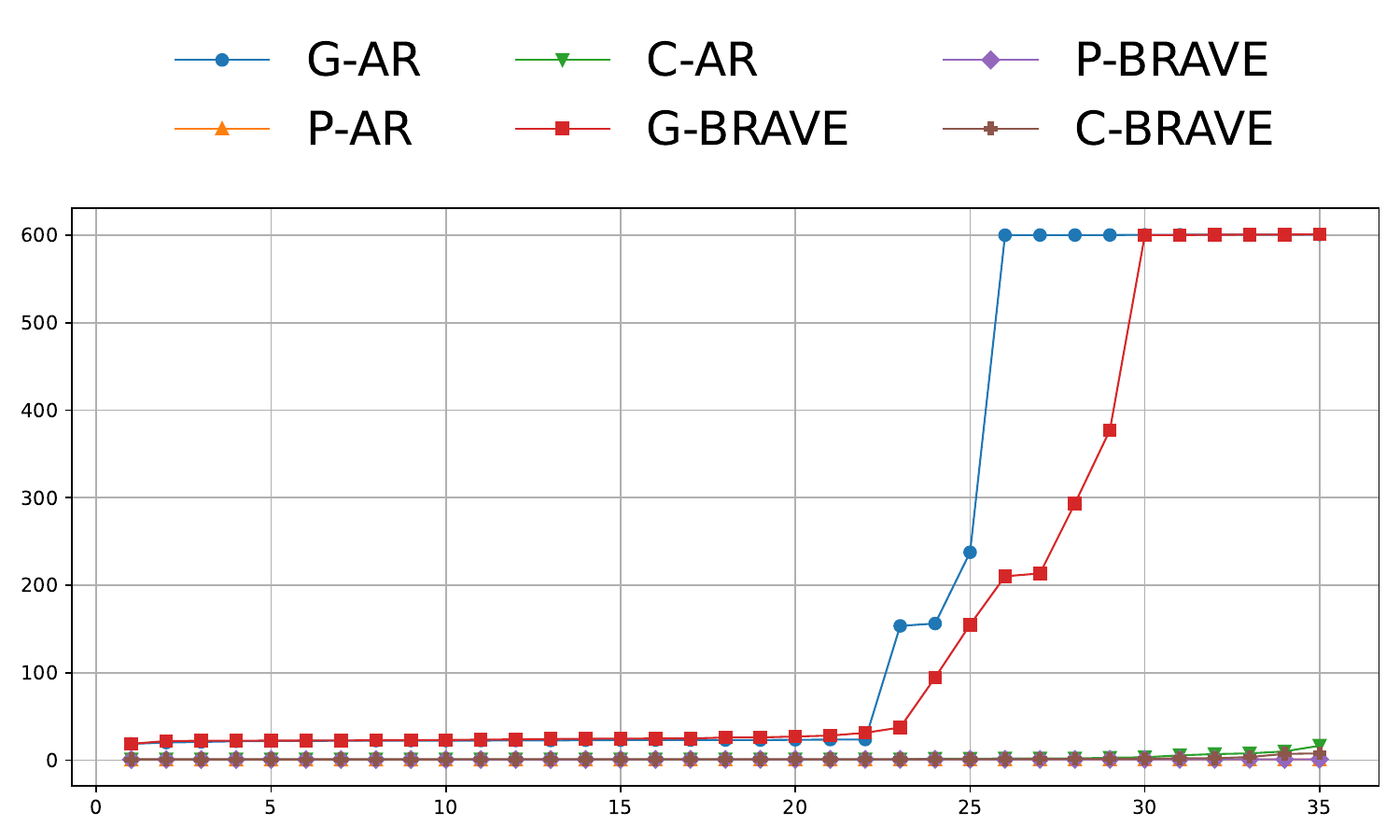}
        \caption{\mn{u1c20} $\succ^{ns}$}
        \label{fig:u1c20_non_score}
    \end{subfigure}
    \begin{subfigure}{0.24\textwidth}
        \centering
        \includegraphics[width=\linewidth]{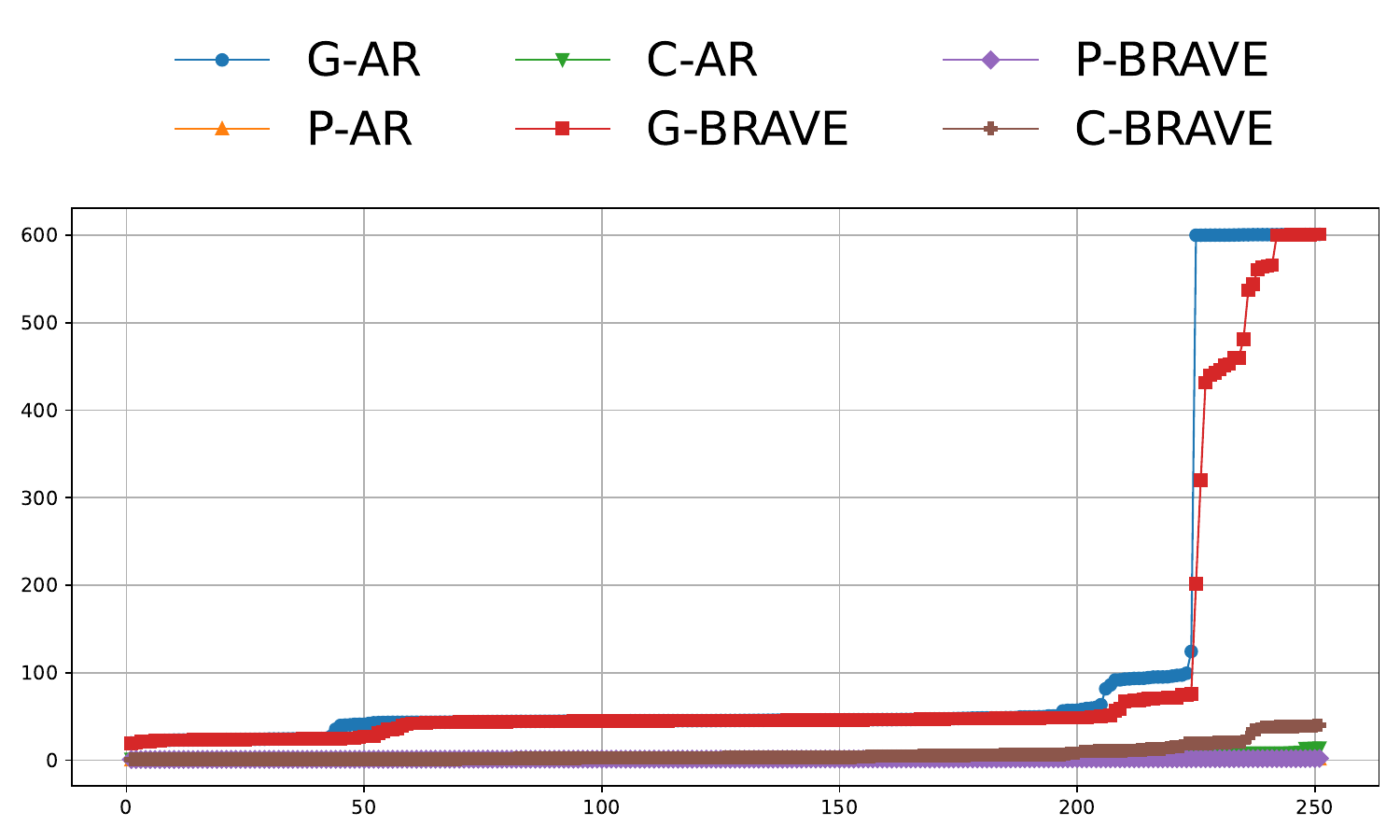}
        \caption{\mn{u1c20} $\succ^{ss}$}
        \label{fig:u1c20_score}
    \end{subfigure}
    \begin{subfigure}{0.24\textwidth}
        \centering
        \includegraphics[width=\linewidth]{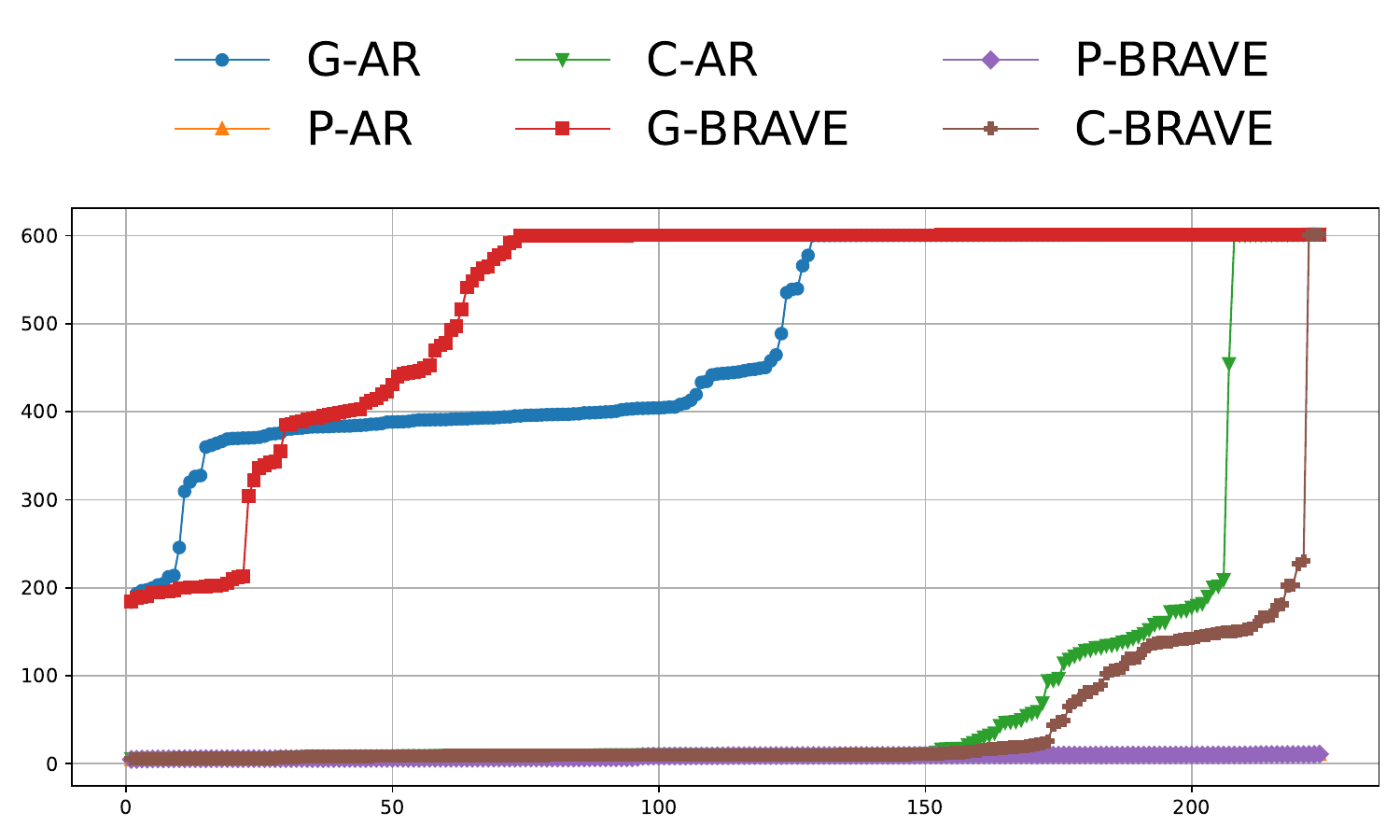}
        \caption{u5c20 $\succ^{ns}$}
        \label{fig:u5c20_non_score}
    \end{subfigure}
    \begin{subfigure}{0.24\textwidth}
        \centering
        \includegraphics[width=\linewidth]{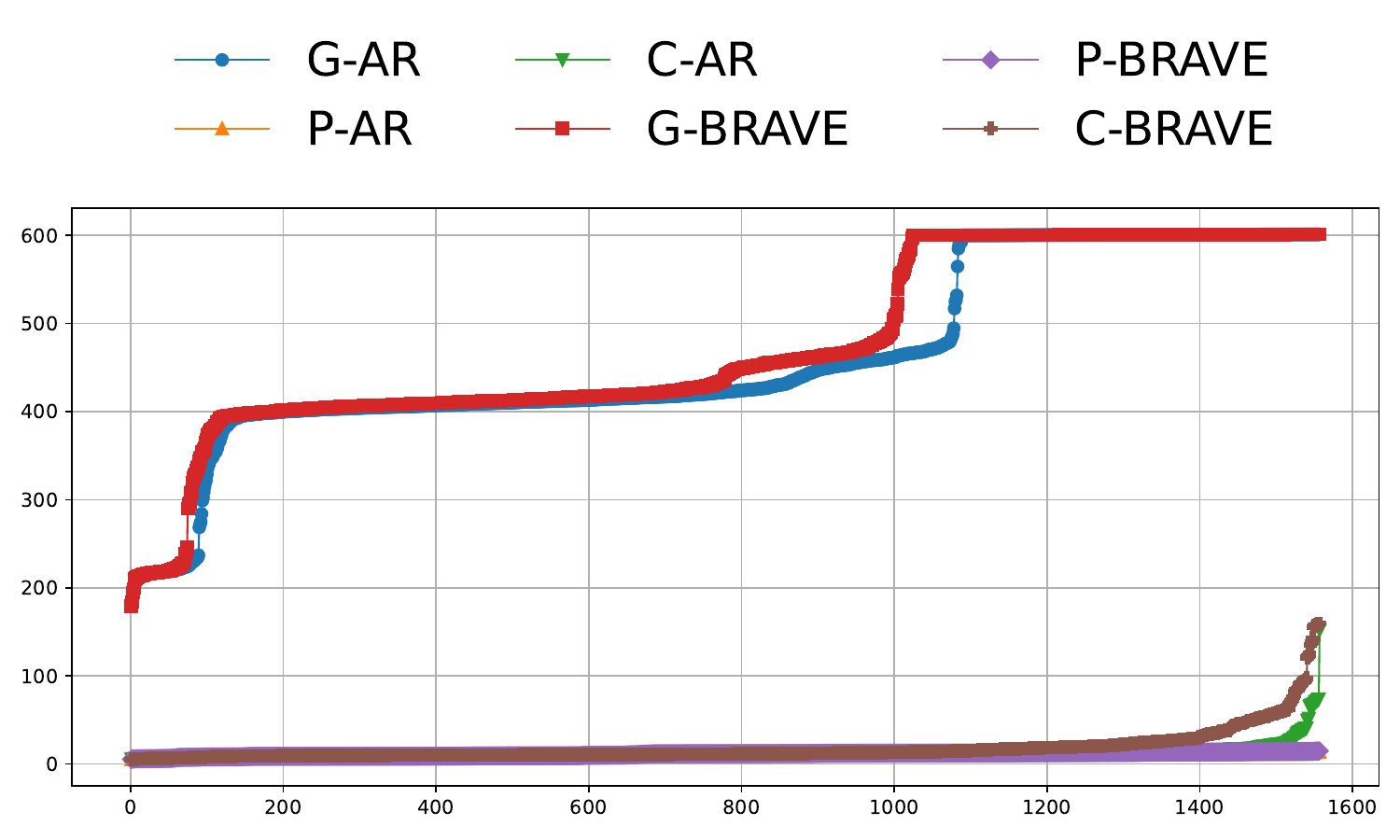}
        \caption{u5c20 $\succ^{ss}$}
        \label{fig:u5c20_score}
    \end{subfigure}
    \begin{subfigure}{0.24\textwidth}
        \centering
        \includegraphics[width=\linewidth]{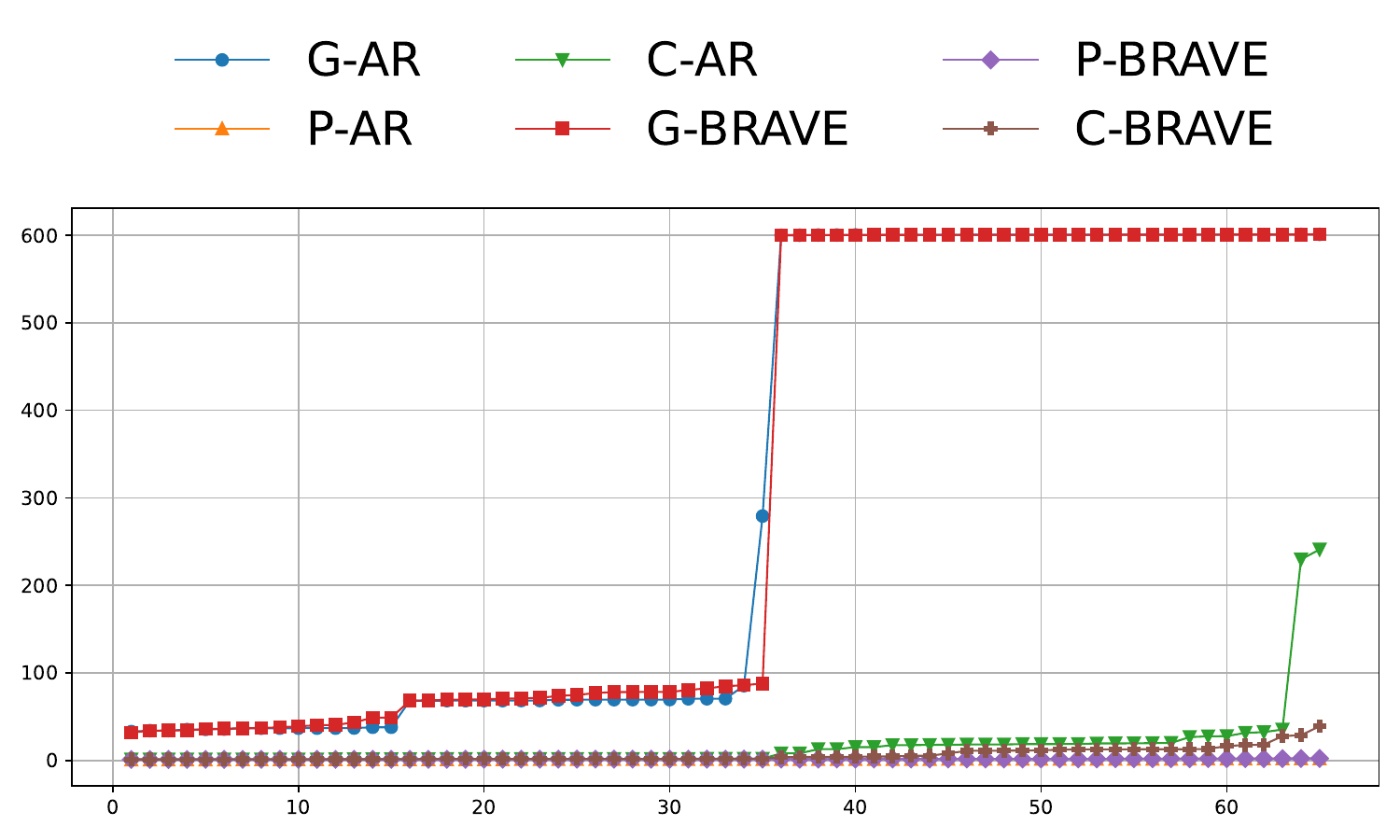}
        \caption{\mn{u1c30} $\succ^{ns}$}
        \label{fig:u1c30_non_score}
    \end{subfigure}
    \begin{subfigure}{0.24\textwidth}
        \centering
        \includegraphics[width=\linewidth]{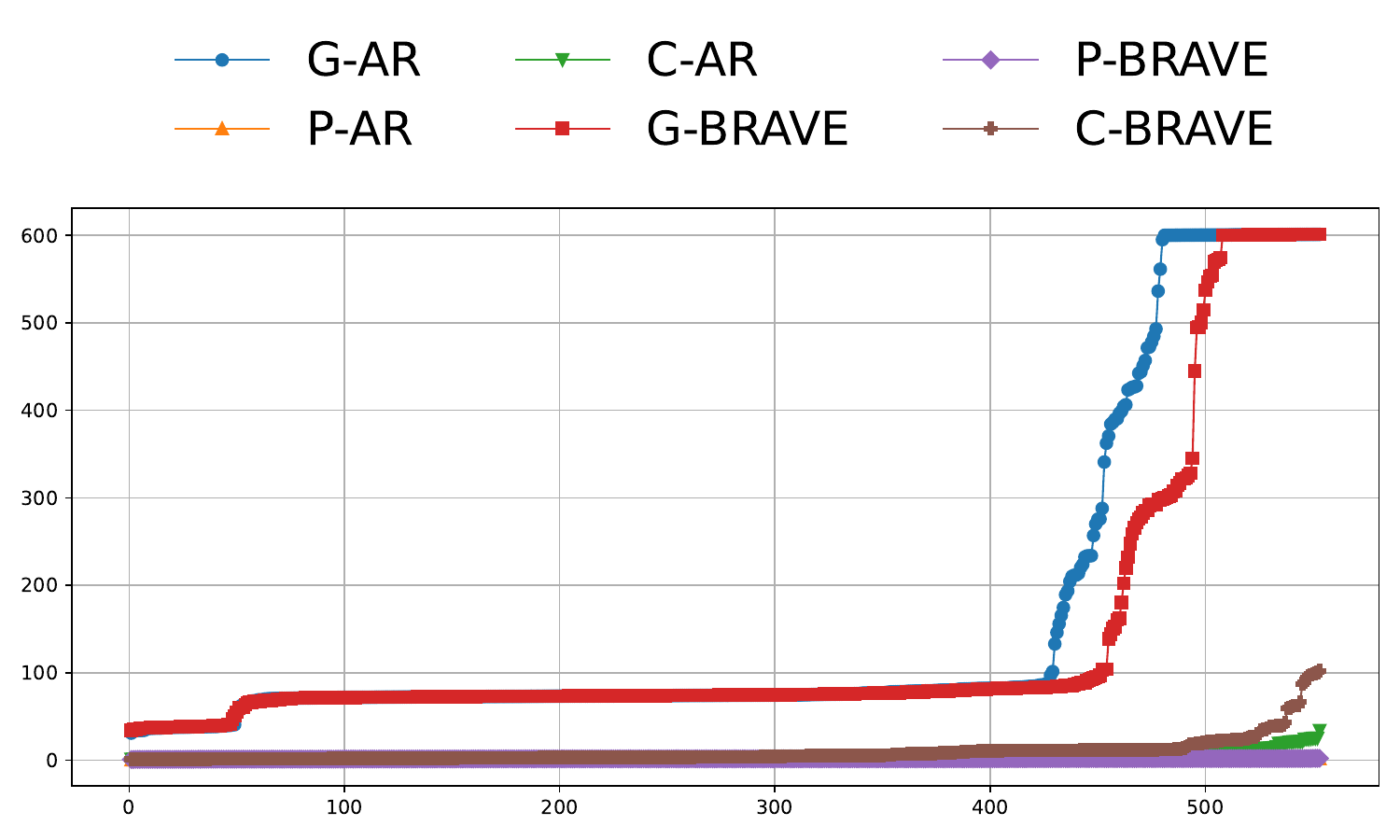}
        \caption{\mn{u1c30} $\succ^{ss}$}
        \label{fig:u1c30_score}
    \end{subfigure}
    \begin{subfigure}{0.24\textwidth}
        \centering
        \includegraphics[width=\linewidth]{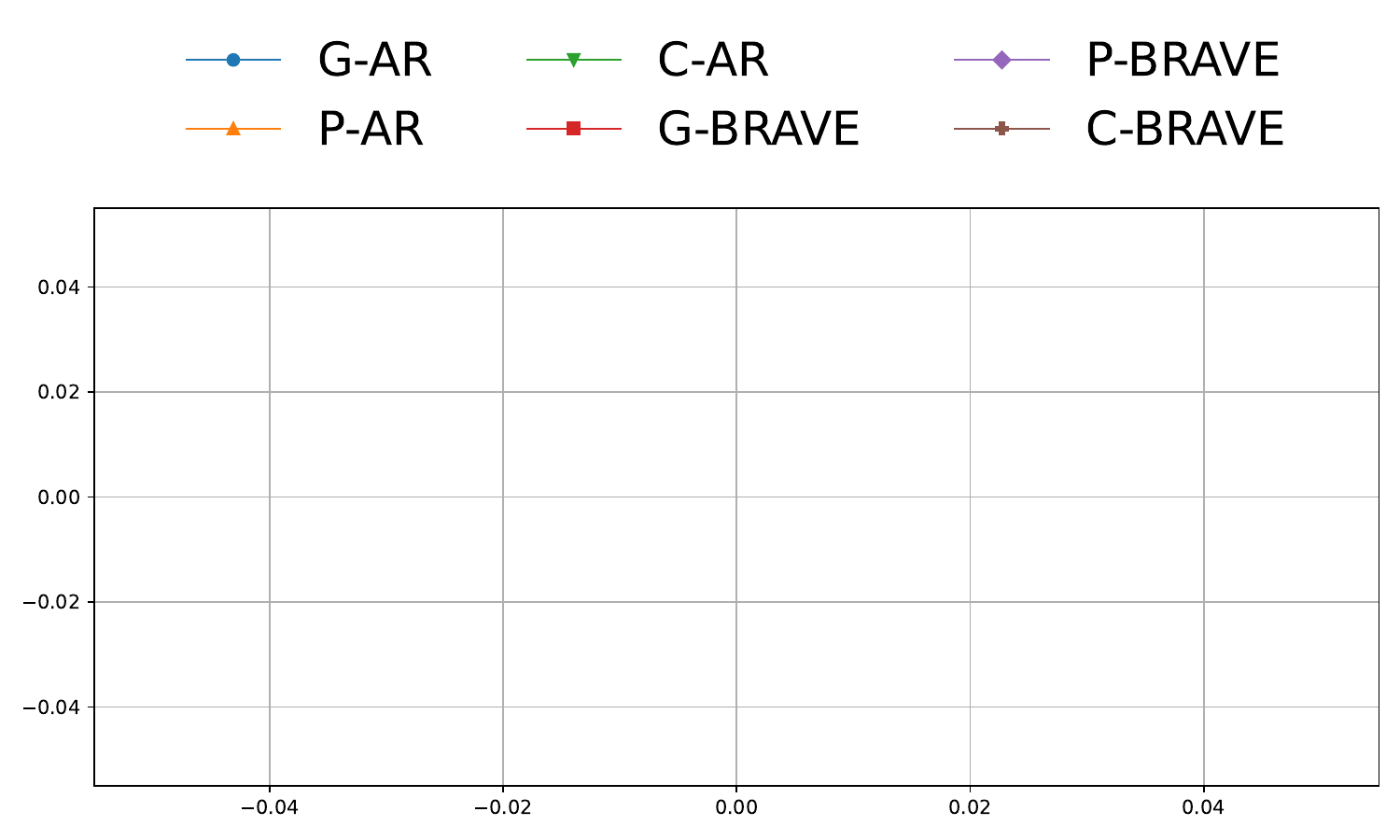}
        \caption{u5c30 $\succ^{ns}$}
        \label{fig:u5c30_non_score}
    \end{subfigure}
    \begin{subfigure}{0.24\textwidth}
        \centering
        \includegraphics[width=\linewidth]{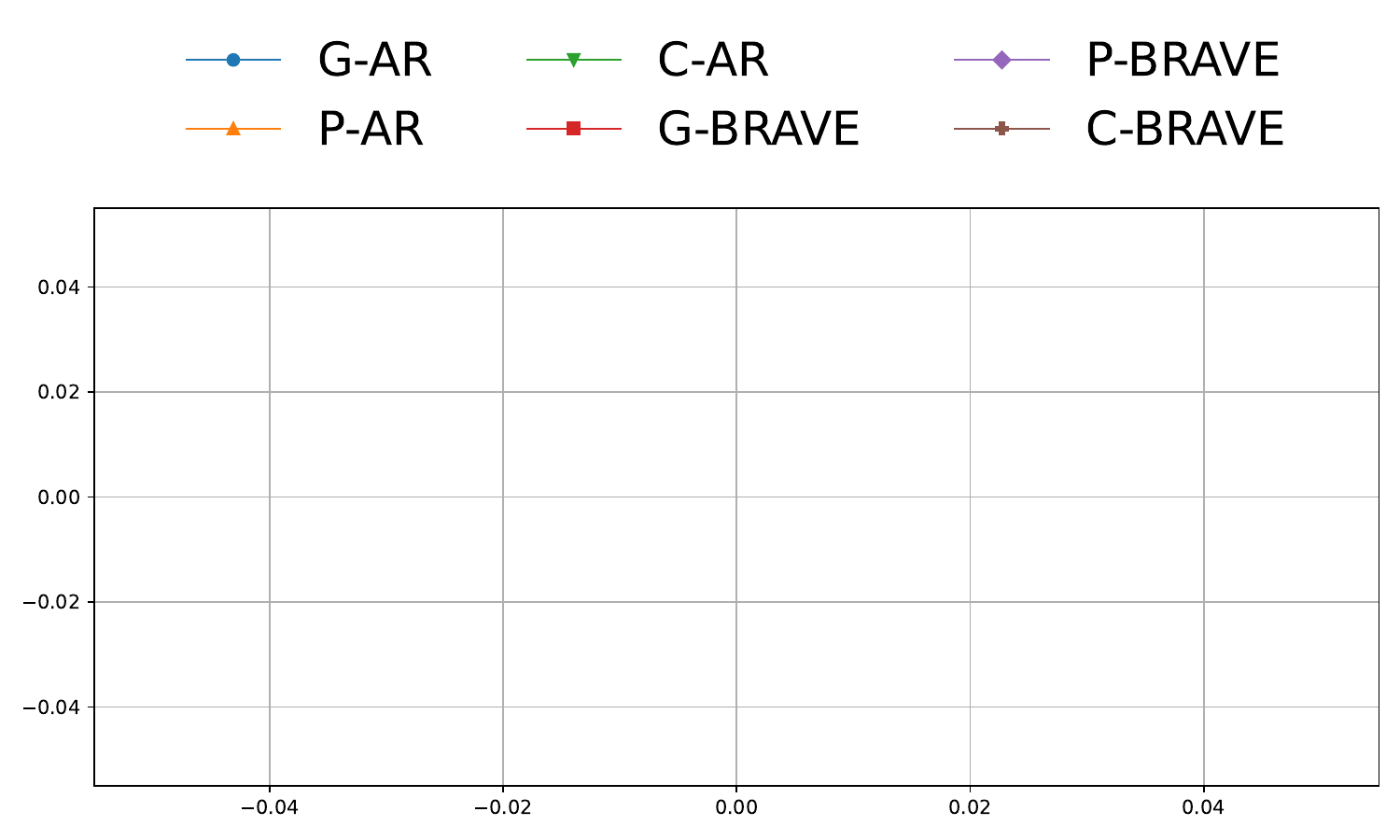}
        \caption{u5c30 $\succ^{ss}$}
        \label{fig:u5c30_score}
    \end{subfigure}
    \begin{subfigure}{0.24\textwidth}
        \centering
        \includegraphics[width=\linewidth]{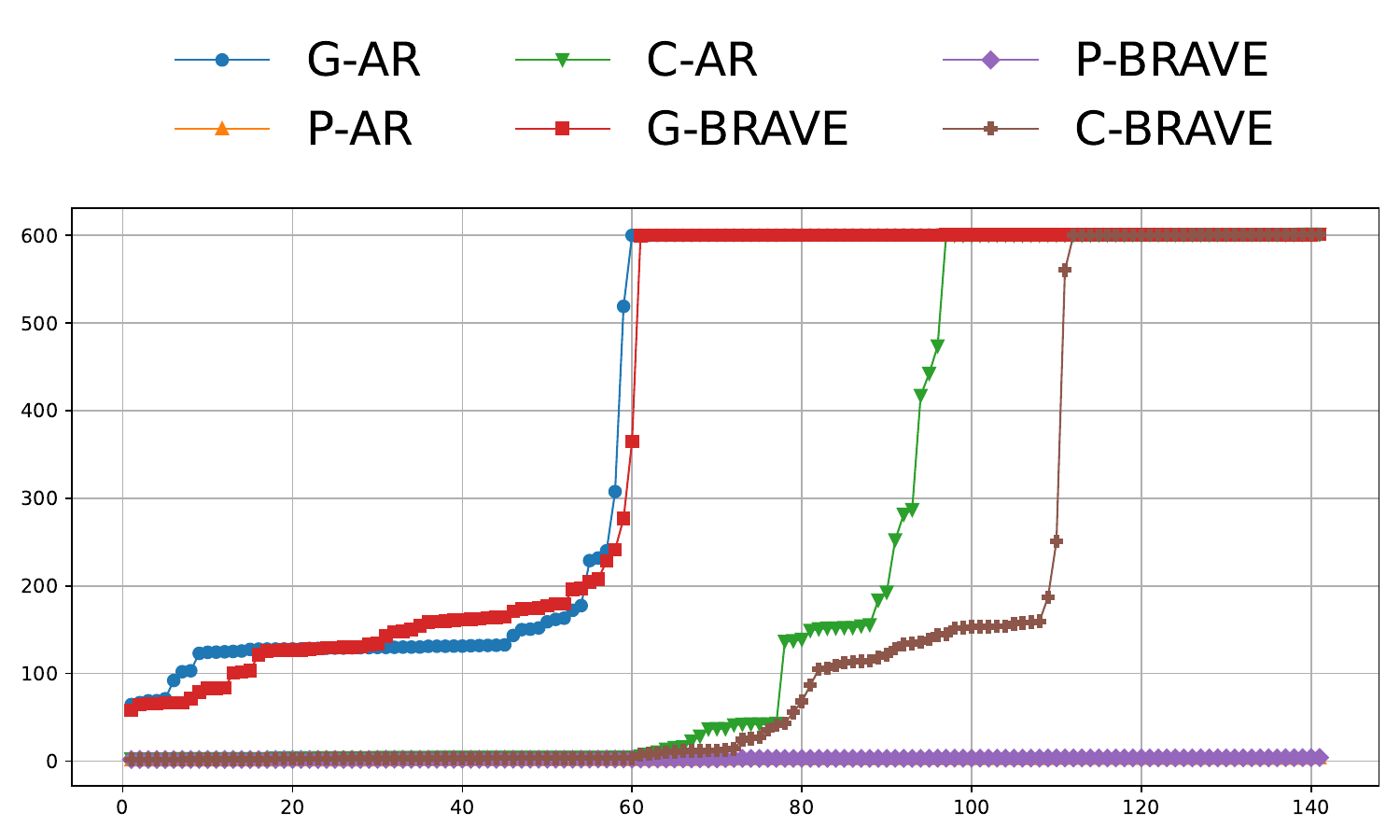}
        \caption{\mn{u1c50} $\succ^{ns}$}
        \label{fig:u1c50_non_score}
    \end{subfigure}
    \begin{subfigure}{0.24\textwidth}
        \centering
        \includegraphics[width=\linewidth]{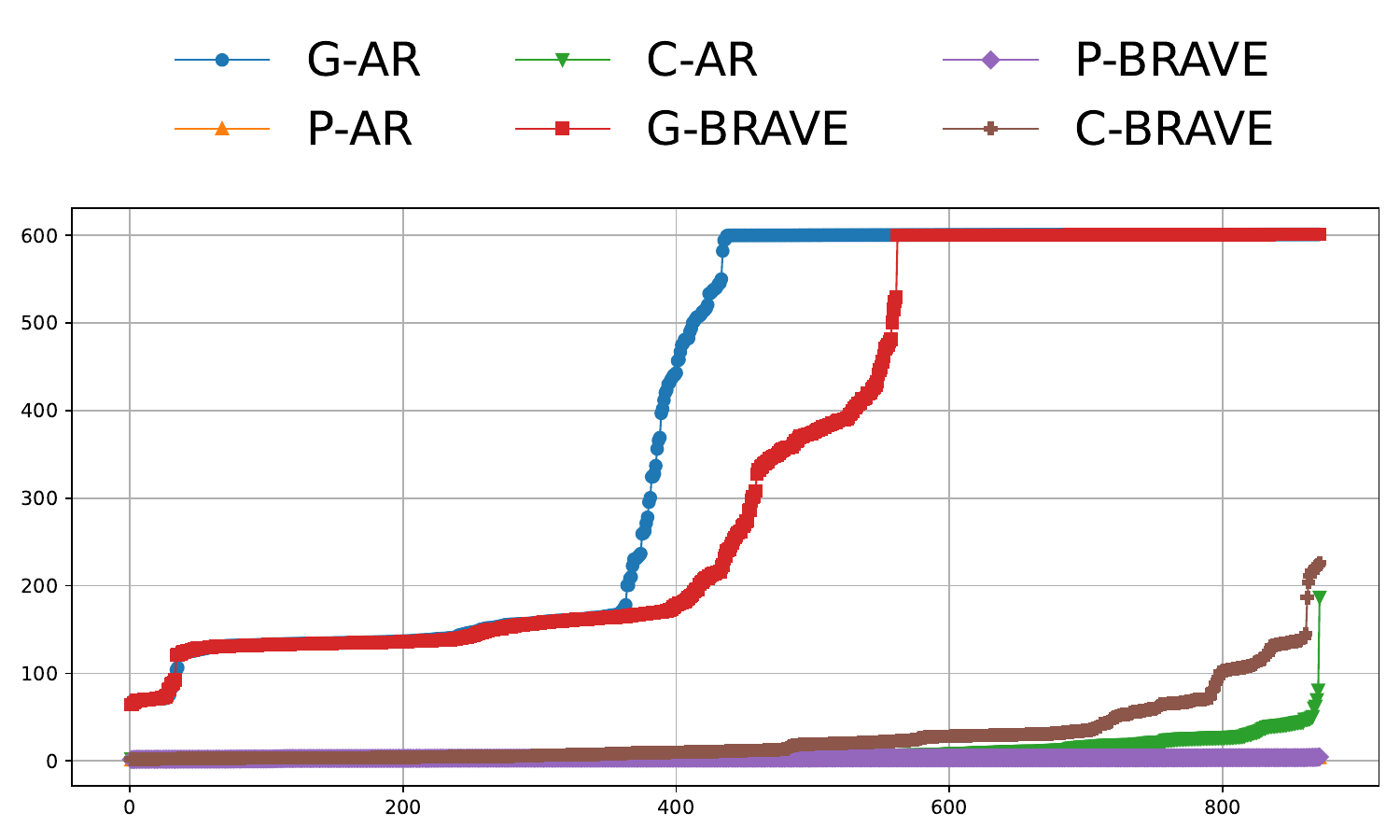}
        \caption{\mn{u1c50} $\succ^{ss}$}
        \label{fig:u1c50_score}
    \end{subfigure}
    \begin{subfigure}{0.24\textwidth}
        \centering
        \includegraphics[width=\linewidth]{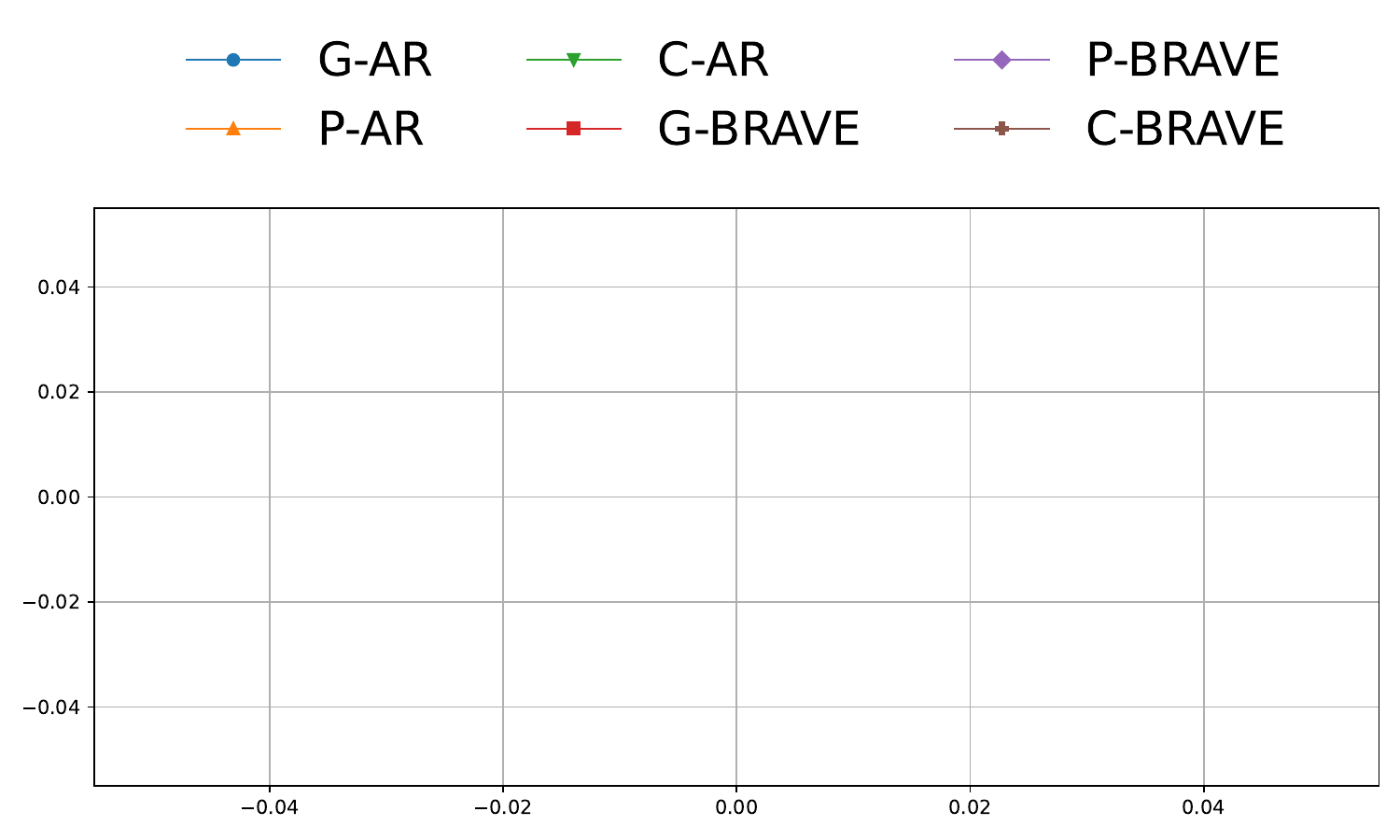}
        \caption{u5c50 $\succ^{ns}$}
        \label{fig:u1c50_non_score}
    \end{subfigure}
    \begin{subfigure}{0.24\textwidth}
        \centering
        \includegraphics[width=\linewidth]{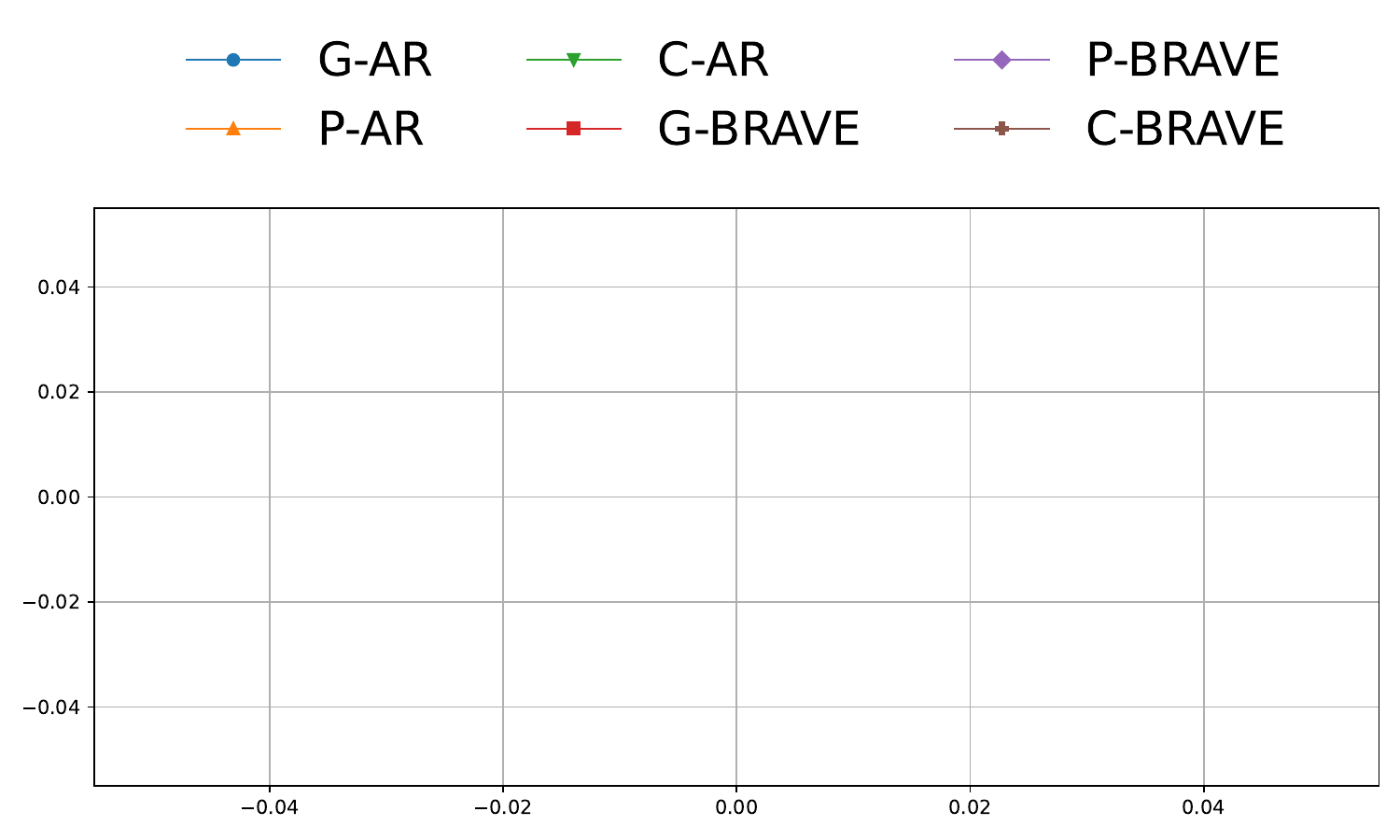}
        \caption{u5c50 $\succ^{ss}$}
        \label{fig:u5c50_score}
    \end{subfigure}
    \caption{Time (in seconds) to decide for each $q(\ans)$ whether $q(\ans)$ holds under X-AR/X-brave, with instances sorted by increasing solving time 
along the x-axis (\ie a point $(i,j)$ indicates that $i$ instances could be solved within $j$ seconds). Empty plots: in the case of \mn{u1c1} $\succ^{ss}$, all potential answers are grounded so there was no potential answer left to check; in the other cases (s,t,w,x), we did not run these experiments as they would have taken too long (lots of answers to check and lots of time-out).}
    \label{fig:overall-runtime-appendix}
\end{figure*}

\begin{figure*}
    \begin{subfigure}{0.24\textwidth}
        \centering
        \includegraphics[width=\linewidth]{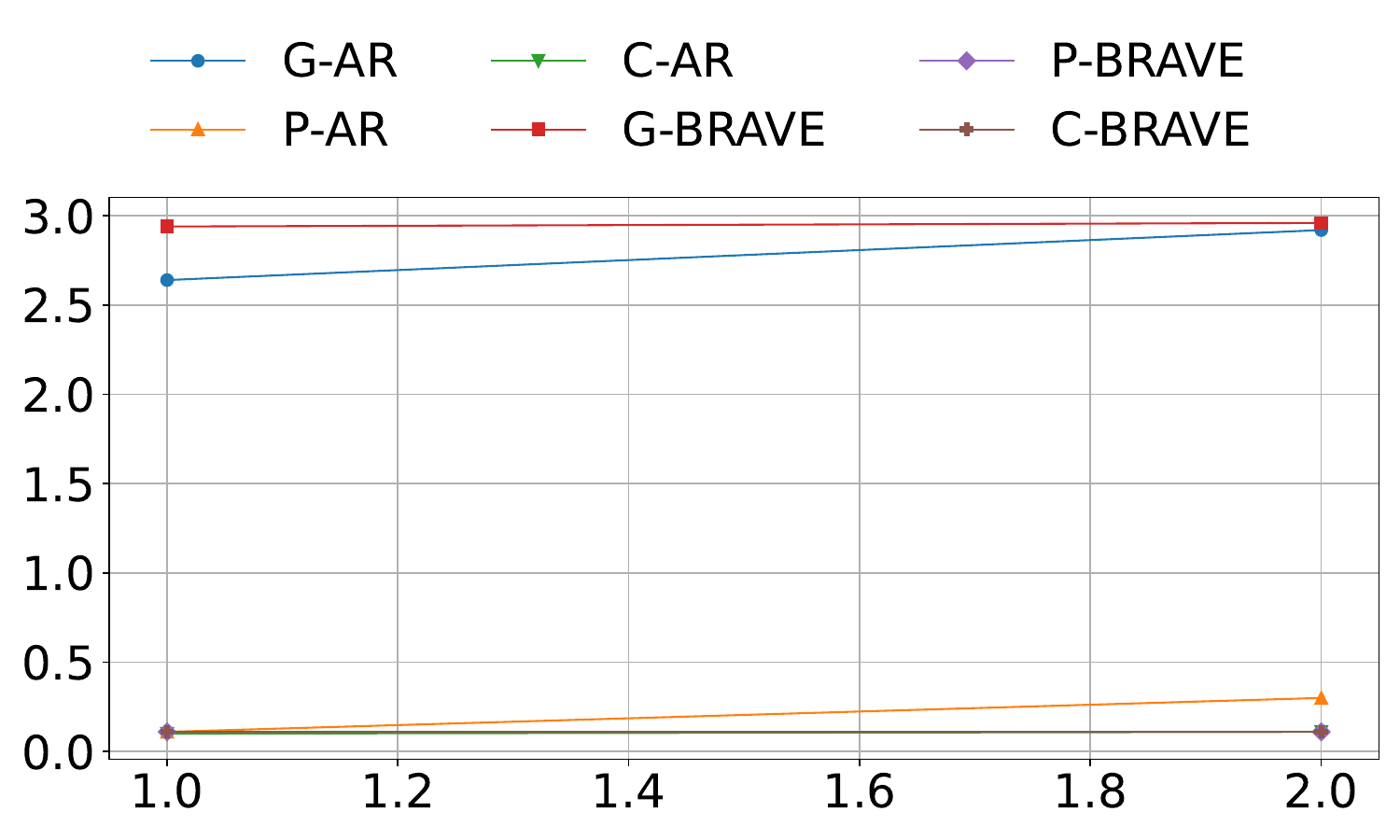}
        \caption{\mn{u1c1} $\succ^{nb}$}
        \label{fig:u1conf1}
    \end{subfigure}
    \begin{subfigure}{0.24\textwidth}
        \centering
        \includegraphics[width=\linewidth]{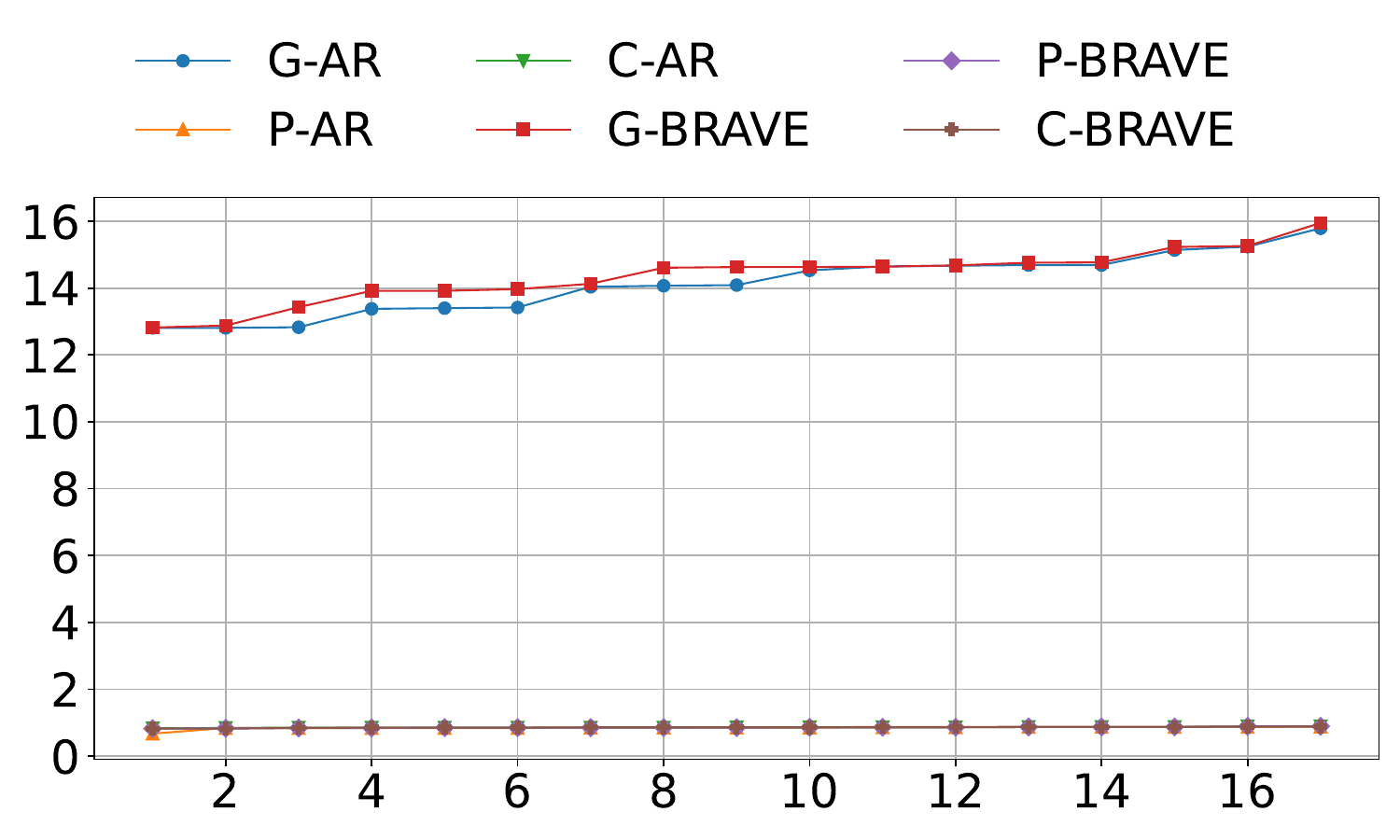}
        \caption{\mn{u5c1} $\succ^{nb}$}
        \label{fig:u5conf1}
    \end{subfigure}
    \\
    \begin{subfigure}{0.24\textwidth}
        \centering
        \includegraphics[width=\linewidth]{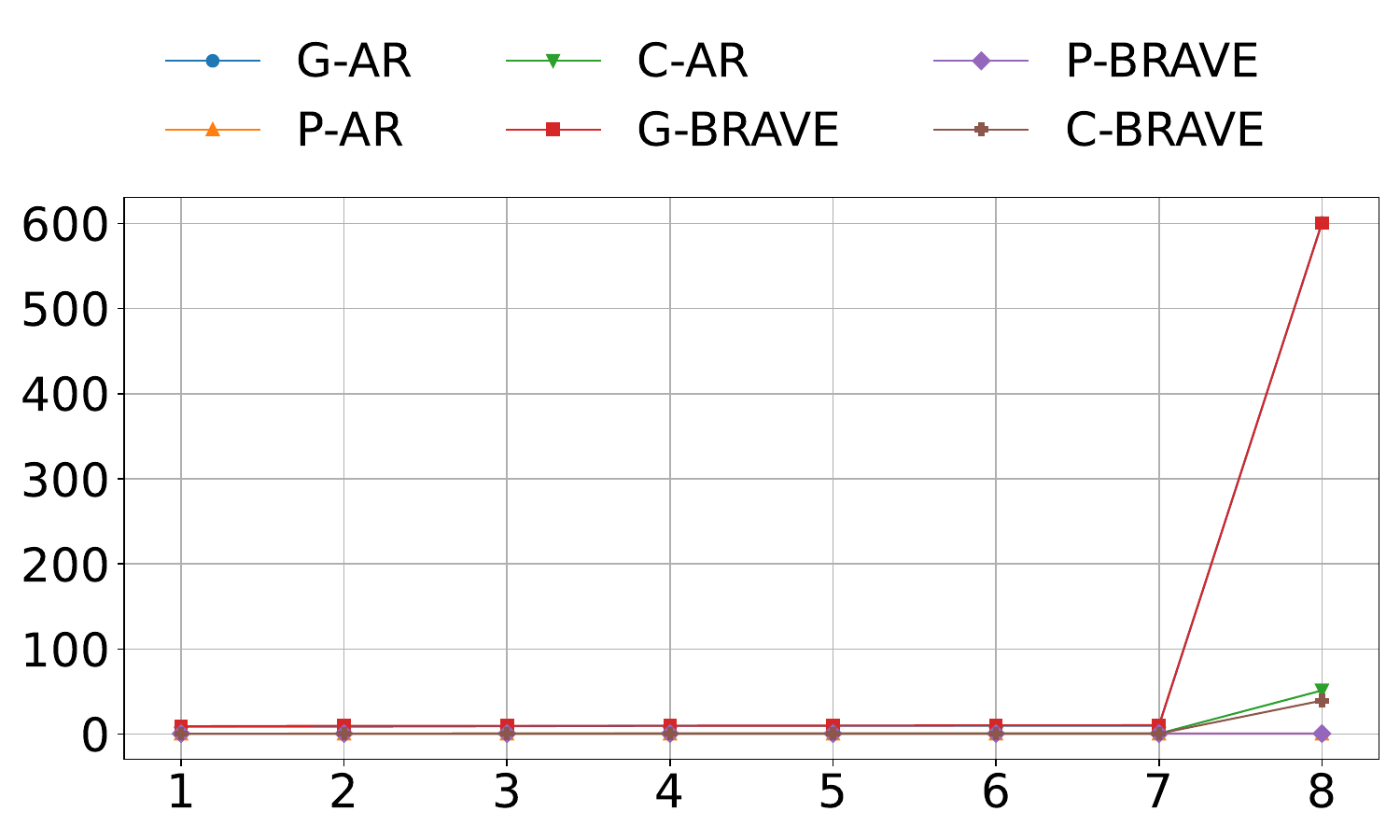}
        \caption{\mn{u1c5} $\succ^{nb}$}
        \label{fig:u1conf5}
    \end{subfigure}
    \begin{subfigure}{0.24\textwidth}
        \centering
        \includegraphics[width=\linewidth]{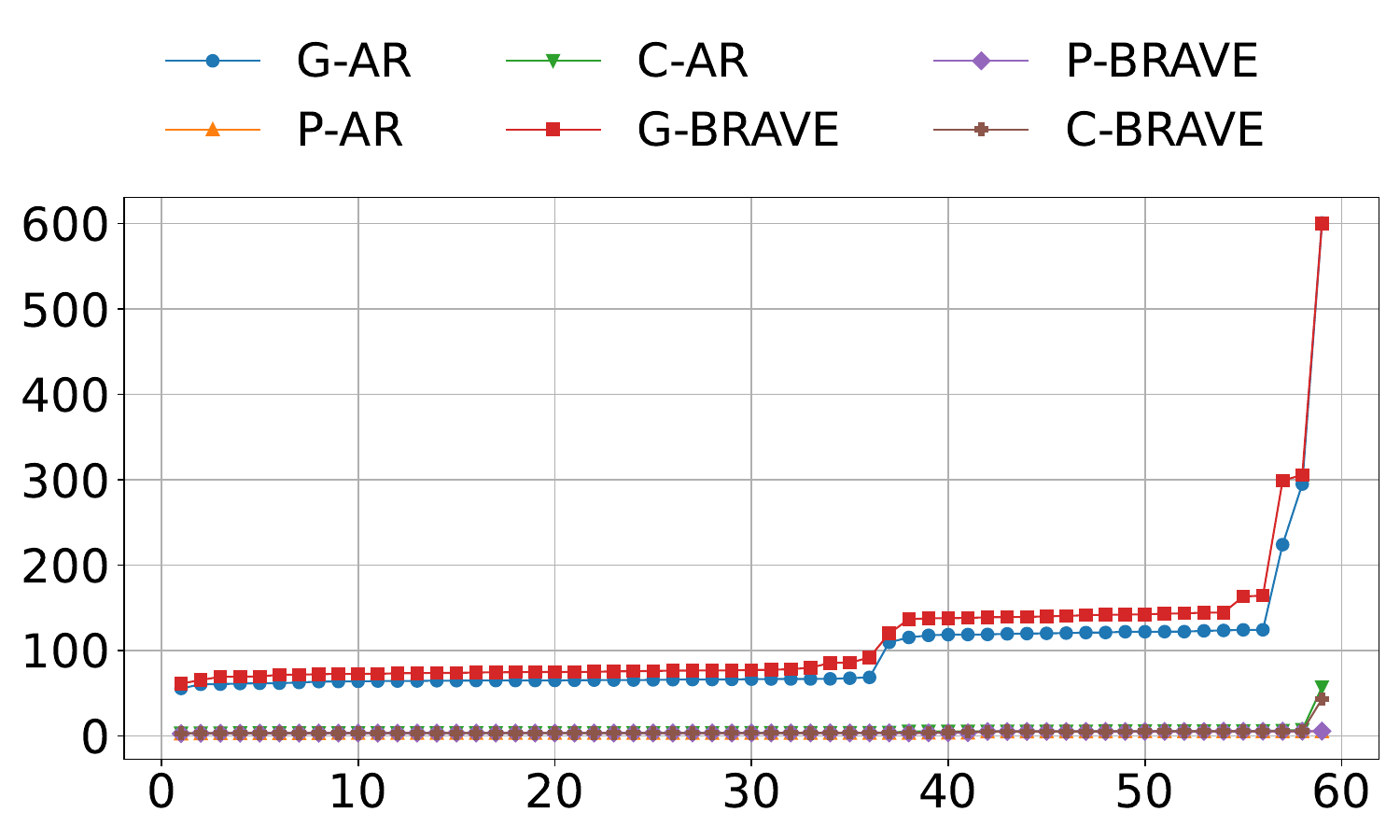}
        \caption{\mn{u5c5} $\succ^{nb}$}
        \label{fig:u5conf5}
    \end{subfigure}
    \\
    \begin{subfigure}{0.24\textwidth}
        \centering
        \includegraphics[width=\linewidth]{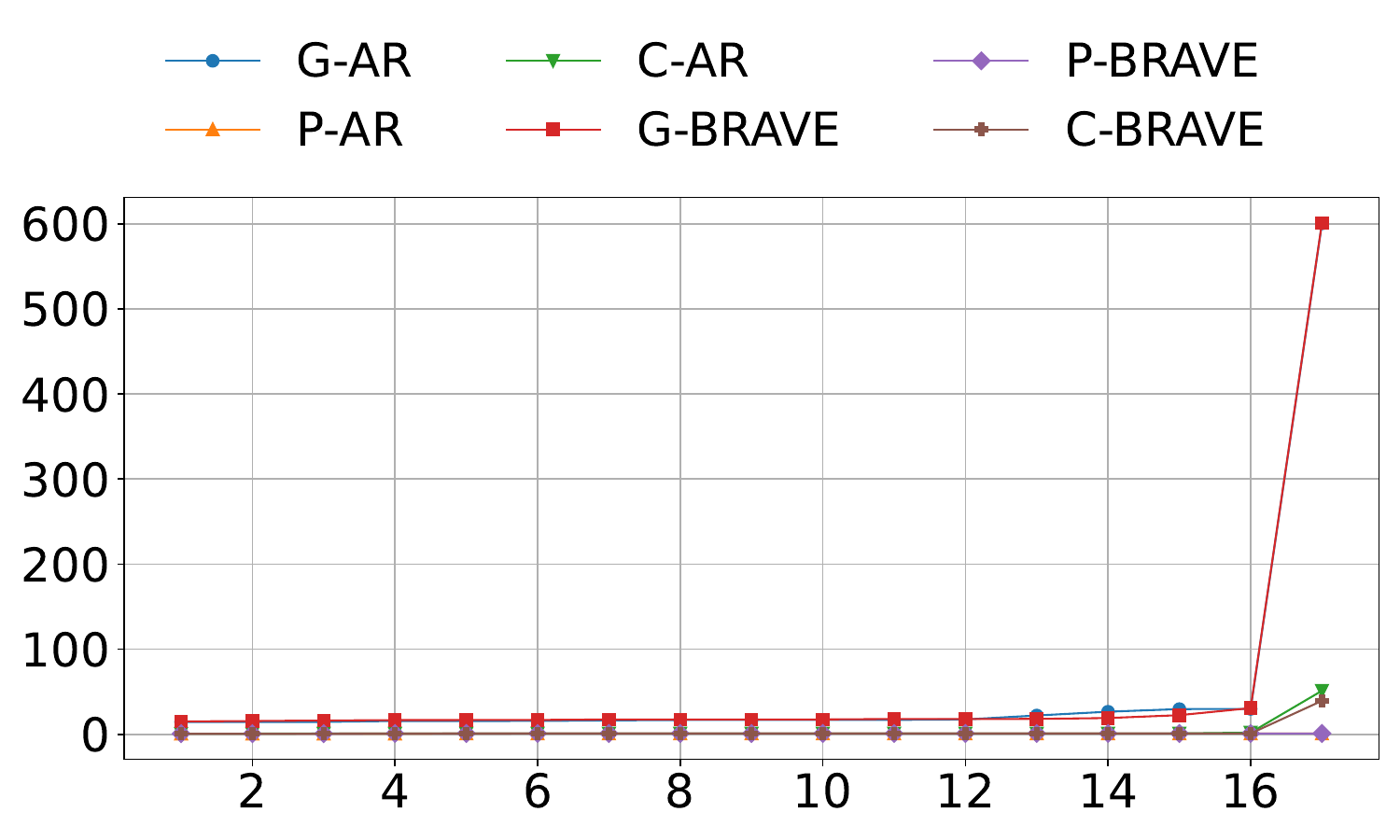}
        \caption{\mn{u1c10} $\succ^{nb}$}
        \label{fig:u1conf10}
    \end{subfigure}
    \\
    \begin{subfigure}{0.24\textwidth}
        \centering
        \includegraphics[width=\linewidth]{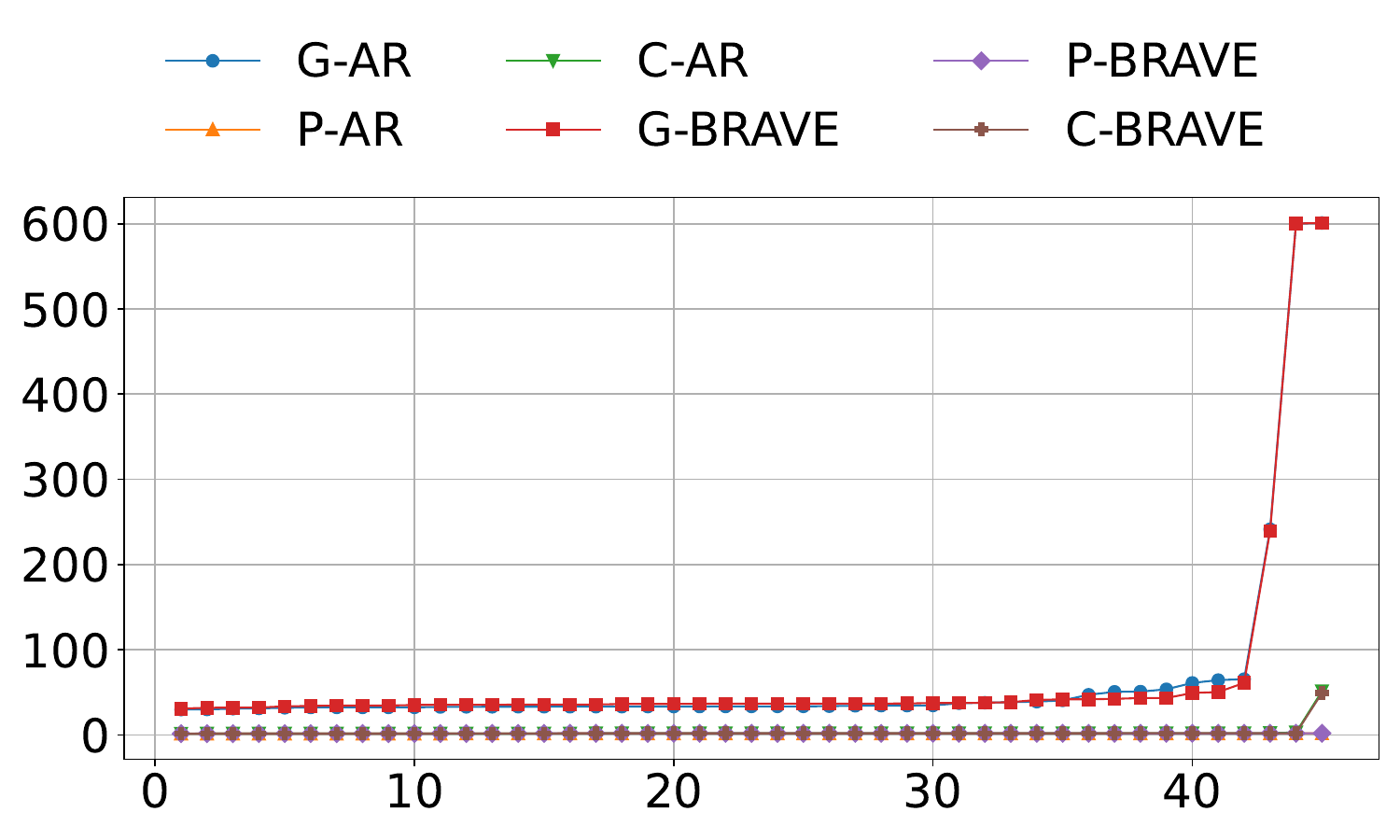}
        \caption{\mn{u1c20} $\succ^{nb}$}
        \label{fig:u1conf20}
    \end{subfigure}
    \\
    \begin{subfigure}{0.24\textwidth}
        \centering
        \includegraphics[width=\linewidth]{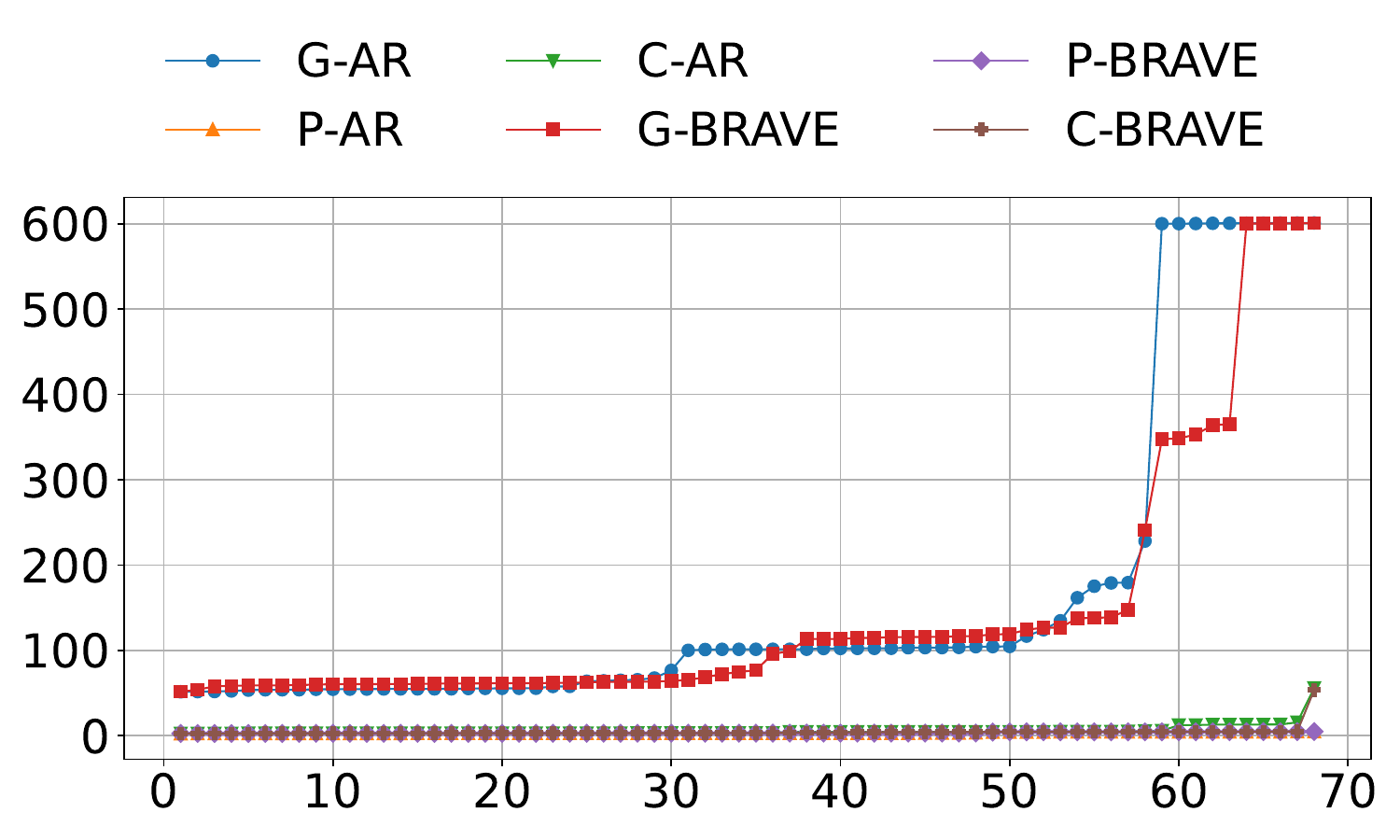}
        \caption{\mn{u1c30} $\succ^{nb}$}
        \label{fig:u1conf30}
    \end{subfigure}
    \\
    \begin{subfigure}{0.24\textwidth}
        \centering
        \includegraphics[width=\linewidth]{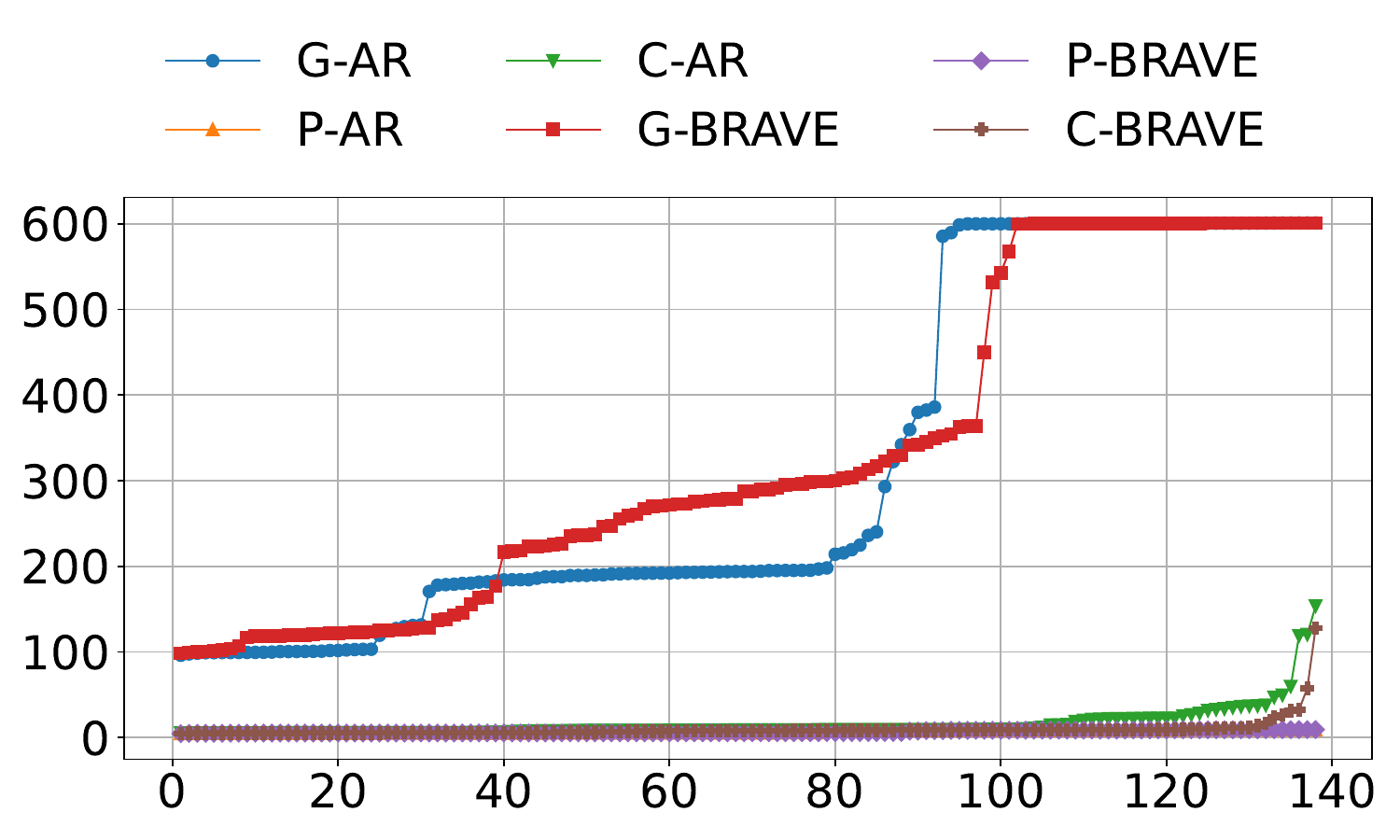}
        \caption{\mn{u1c50} $\succ^{nb}$}
        \label{fig:u1conf50}
    \end{subfigure}
   \caption{Time (in seconds) to decide for each $q(\ans)$ whether $q(\ans)$ holds under X-AR/X-brave, with instances sorted by increasing solving time 
along the x-axis (\ie a point $(i,j)$ indicates that $i$ instances could be solved within $j$ seconds).}\label{fig:overall-runtime-nonbinary-appendix} 
    
\end{figure*}

\begin{table*}
    \centering\footnotesize
    \resizebox{\textwidth}{!}{    
    \begin{tabular}{ll|rr|rrr|rrr}
        \hline
           &                      & Triv & GR$\setminus$Triv  & P-AR$\setminus$GR    & G-AR$\setminus$GR  & C-AR$\setminus$GR  & P-brave$\setminus$GR   & G-brave$\setminus$GR   & C-brave$\setminus$GR\\
        \hline
        \mn{u1c1} & $\succ^{ns}$                     & 1465	       & 59	            & 0 (0)	            & 0 (0)	            & 0 (0)	            & 0 (0)                 & 0 (0)                 & 0 (0)\\
        \mn{u1c5} & $\succ^{ns}$                     & 1366	        & 159	            & 0 (0)	            & 0 (0)	            & 0 (0)	            & 2 (0)                 & 2 (0)                 & 2 (0)\\
        \mn{u1c10} & $\succ^{ns}$                    & 1186	        & 341	            & 0 (0)	            & 0 (0)	            & 0 (0)	            & 6 (0)                 & 6 (0)                 & 6 (0)\\
        \mn{u1c20} & $\succ^{ns}$                    & 937	            & 584	            & 1 (0)	            & 0 (10)            & 1 (0)	            & 17 (0)                & 16 (6)                & 17 (0)\\
        \mn{u1c30} & $\succ^{ns}$                    & 754	            & 764	            & 6 (0)	            & 0 (30)            & 6 (0)	            & 38 (0)                & 19 (30)               & 38 (0)\\
        \mn{u1c50} & $\succ^{ns}$                    & 625	            & 875	            & 26 (0)	            & 9 (82)            & 20 (56)           & 79 (0)                & 28 (81)               & 38 (40)\\
        \mn{u5c1} & $\succ^{ns}$                     & 7637	        & 316	            & 0 (0)	            & 0 (0)	            & 0 (0)             & 2 (0)                 & 2 (0)                 & 2 (0)\\
        \mn{u5c5} & $\succ^{ns}$                     & 6958	        & 990	            & 22 (0)              & 21 (2)	        & 22 (0)            & 31 (0)                & 30 (2)                & 31 (0)\\
        \mn{u5c10} & $\succ^{ns}$                    & 6069	        & 1895	            & 22 (0)              & 8 (27)	        & 22 (0)            & 46 (0)                & 34 (22)               & 46 (0)\\
        \mn{u5c20} & $\succ^{ns}$                    & 4947	        & 3035	            & 37 (0)              & 6 (96)	        & 32 (32)           & 91 (0)                & 51 (151)              & 83 (18)\\
        \hline
        \mn{u1c1} & $\succ^{ss}$                           & 1477	        & 48	            & 0 (0)               & 0 (0)	            & 0 (0)             & 0 (0)                 & 0 (0)                 & 0 (0)\\
        \mn{u1c5} & $\succ^{ss}$                           & 1385	        & 65	            & 0 (0)               & 0 (0)	            & 0 (0)             & 81 (0)                & 81 (0)                & 81 (0)\\
        \mn{u1c10} & $\succ^{ss}$                          & 1215	        & 232	            & 0 (0)               & 0 (0)	            & 0 (0)             & 92 (0)                & 92 (0)                & 92 (0)\\
        \mn{u1c20} & $\succ^{ss}$                          & 931	            & 374	            & 0 (0)               & 0 (27)            & 0 (0)             & 238 (0)               & 228 (10)              & 238 (0)\\
        \mn{u1c30} & $\succ^{ss}$                          & 731	            & 299	            & 0 (0)               & 0 (73)            & 0 (0)             & 519 (0)               & 475 (46)              & 519 (0)\\
        \mn{u1c50} & $\succ^{ss}$                          & 618	            & 152	            & 0 (0)               & 0 (434)           & 0 (0)             & 807 (0)               & 512 (310)             & 807 (0)\\
        \mn{u5c1} & $\succ^{ss}$                           & 7706	        & 195	            & 0 (0)               & 0 (0)             & 0 (0)             & 55 (0)                & 55 (0)                & 55 (0)\\
        \mn{u5c5} & $\succ^{ss}$                           & 7044	        & 591	            & 1 (0)               & 1 (0)             & 1 (0)             & 348 (0)               & 348 (0)               & 348 (0)\\
        \mn{u5c10} & $\succ^{ss}$                          & 6213	        & 1087	            & 1 (0)               & 1 (73)            & 1 (0)             & 716 (0)               & 684 (38)              & 716 (0)\\
        \mn{u5c20} & $\succ^{ss}$                          & 5169	        & 1480	            & 2 (0)               & 2 (468)           & 2 (0)             & 1437 (0)              & 992 (533)             & 1437 (0)\\
        \hline
        \mn{u1c1} & $\succ^{nb}$             & 1447	        & 76	            & 0 (0)               & 0 (0)             & 0 (0)             & 0 (0)                 & 0 (0)                 & 0 (0)\\
        \mn{u1c5} & $\succ^{nb}$             & 1369	        & 155	            & 0 (0)               & 0 (1)             & 0 (0)             & 1 (0)                 & 0 (1)                 & 1 (0)\\
        \mn{u1c10} & $\succ^{nb}$            & 1167	        & 359	            & 0 (0)               & 0 (1)             & 0 (0)             & 1 (0)                 & 0 (1)                 & 1 (0)\\
        \mn{u1c20} & $\succ^{nb}$            & 940	            & 571	            & 0 (0)               & 0 (2)             & 0 (0)             & 1 (0)                 & 0 (2)                 & 1 (0)\\
        \mn{u1c30} & $\succ^{nb}$            & 789	            & 726	            & 0 (0)               & 0 (10)            & 0 (0)             & 6 (0)                 & 5 (5)                 & 6 (0)\\
        \mn{u1c50} & $\succ^{nb}$            & 680	            & 823	            & 0 (0)               & 0 (43)            & 0 (0)             & 20 (0)                & 13 (37)               & 20 (0)\\
        \mn{u5c1} & $\succ^{nb}$             & 7708	        & 238	            & 0 (0)               & 0 (0)             & 0 (0)             & 2 (0)                 & 2 (0)                 & 2 (0)\\
        \mn{u5c5} & $\succ^{nb}$             & 7209	        & 742	            & 0 (0)               & 0 (1)             & 0 (0)             & 3 (0)                 & 2 (1)                 & 3 (0)\\
        \hline
    \end{tabular}
    }
    \caption{Number of answers found trivially P-IAR (Triv), grounded (GR) but not trivially P-IAR, and X-AR or X-brave but not grounded. The number of potential answers for which we ran out of time (600s) is given in parenthesis. When the priority is score-structured ($\succ^{ss}$), note that all optimal repairs coincide so it is expected that the number of X-AR/X-brave answers is the same for every X.
    }\label{tab:number-answers-appendix}
\end{table*}

\end{document}